\documentclass[conference]{IEEEtran}

\IEEEoverridecommandlockouts
\usepackage[T1]{fontenc}
\usepackage{cite}
\usepackage{amsmath,amssymb,amsfonts}
\usepackage{algorithmic}
\usepackage{graphicx}
\usepackage{textcomp}
\usepackage{xcolor}
\def\BibTeX{{\rm B\kern-.05em{\sc i\kern-.025em b}\kern-.08em
    T\kern-.1667em\lower.7ex\hbox{E}\kern-.125emX}}

\usepackage{custom_packages}
\def\T{{\mathsf T}} 
\def\tr{\mathrm{tr}} 

\def\F{\mathbf{F}} 
\def\A{\mathbf{A}} 
\def\I{\mathbf{I}} 
\def\M{\mathbf{M}} 
\def\X{\mathbf{X}} 
\def\S{\mathbf{S}} 
\def\O{\mathbf{O}} 
\def\B{\mathbf{B}} 
\def\Cov{\boldsymbol{\Sigma}} 

\def\g{\mathbf{g}} 
\def\q{\mathbf{q}} 
\def\a{\mathbf{a}} 
\def\b{\mathbf{b}} 
\def\v{\mathbf{v}} 
\def\u{\mathbf{u}} 
\def\x{\mathbf{x}} 
\def\y{\mathbf{y}} 
\def\n{\mathbf{n}} 
\def\z{\mathbf{z}} 
\def\w{\mathbf{w}} 
\def\h{\mathbf{h}} 
\def\f{\mathbf{f}} 
\def\u{\mathbf{u}} 

\def\Ptx{P_{\rm fw}}
\def\Prx{P_{\rm fb}}
\def\ctx{c_{\rm fw}}
\def\crx{c_{\rm fb}}
\def\SNR{\mathrm{SNR}} 
\def\MMSE{\mathrm{MMSE}} 
\def\Gap{\mathrm{Gap}} 

\def\fw{\mathrm{fw}} 
\def\fb{\mathrm{fb}} 

\begin{document}
\title{
KKTCode: Asymptotically Optimal Linear Codes for Noisy Feedback AWGN Channels

\thanks{H. Nam, V. Tripathi, and D. J. Love are with the Elmore Family School of Electrical and Computer Engineering, Purdue University, West Lafayette, IN 47907 USA (e-mail: \{nam86,  tripathv, djlove\}@purdue.edu).
J. Jang is with the Department of Electronics Engineering, Chungnam National University, Daejeon 34134, Republic of Korea (e-mail: jgjang@cnu.ac.kr).
This work was supported in part by the Office of Naval Research (ONR) under Grant N000142112472, and the National Science Foundation (NSF) under Grants CNS2212565, CNS2225578, and EEC1941529. }
}

\author{
Hongjae Nam,~\IEEEmembership{Student Member,~IEEE,}
Jonggyu Jang,~\IEEEmembership{Member,~IEEE,}\\ 
Vishrant Tripathi,~\IEEEmembership{Member,~IEEE,} 
and David J. Love,~\IEEEmembership{Fellow,~IEEE}
}

\maketitle

\begin{abstract}
The design of optimal causal linear feedback schemes for additive white Gaussian noise (AWGN) channels with noisy output feedback has remained an open problem for over 60 years. Prior work has focused on restricted policy classes, especially passive (uncoded) noisy output feedback, where only the transmitter performs feedback coding. However, passive noisy output feedback fundamentally lacks the degrees of freedom required to attain the information-theoretic performance limit in general. In this paper, we consider the active (coded) noisy output feedback setting, where both the transmitter and the receiver perform feedback coding. We then develop a constructive KKT-optimal active linear feedback design that asymptotically attains the Elias--Butman SNR converse bound, thereby establishing MSE/SNR optimality over the entire class of causal linear schemes. Furthermore, we prove that the optimal passive feedback solution is recovered as a special case of the active design. This passive solution admits a Geometric Toeplitz (GT) structure with a Chance--Love (CL)-style one-shot polynomial characterization, and can be computed with $\mathcal{O}(\log T)$ complexity. 
Thus, our results provide an affirmative answer to the long-standing optimality question for noisy output feedback under causal linear feedback coding, and our numerical results support the theoretical findings.
\end{abstract}

\begin{IEEEkeywords}
Additive Gaussian channels, noisy feedback, linear codes.
\end{IEEEkeywords}

\section{Introduction}
\label{sec:Intro}

\IEEEPARstart{F}{eedback} plays a fundamental role in modern wireless communication systems. Early results by Shannon~\cite{shannon1956zero} showed that, for memoryless channels with noiseless feedback, feedback does not increase capacity in the asymptotic regime. However, a significant body of work over the last seven decades~\cite{elias1957channel, schalkwijk1966coding1 ,schalkwijk1966coding2, elias1967networks, butman1969general, bacsar1982optimum, chance2011concatenated, farthofer2014achieving, mishra2023linear, agrawal2011iteratively, kim2023robust, li2011bounds, kim2006feedback, kim2009feedback, kim2011error, domanovitz2024information} has shown that feedback can dramatically influence performance in the finite-blocklength regime. 


For additive white Gaussian noise (AWGN) channels with noisy output feedback, this leads to a more challenging problem. Unlike the noiseless feedback setting, the encoder, the feedback operations, and the final estimator must be jointly designed in the presence of feedback noise and separate power constraints on the forward and feedback links. As a result, even within the class of causal linear schemes, the optimal finite-blocklength feedback scheme and its structure remain unknown. Moreover, existing works~\cite{schalkwijk1966coding1,  schalkwijk1966coding2, bacsar1982optimum, chance2011concatenated, agrawal2011iteratively ,mishra2023linear, kim2023robust} have primarily focused on the \textit{passive} (uncoded) noisy output feedback setting, where only the transmitter performs feedback coding. This restricted subclass limits the degrees of freedom available for shaping the effective noise and improving the received signal-to-noise ratio (SNR). By contrast, the more general \textit{active} (coded) noisy output feedback setting allows both the transmitter and the receiver to perform feedback coding and remains much less understood.

In this work, we address the following central question:
\begin{mdframed}
\textit{For any finite blocklength $T$, how do we design \textbf{optimal causal linear feedback schemes} for AWGN channels with \textbf{active} noisy output feedback?} 
\end{mdframed}
From a theoretical viewpoint, we are interested in whether the received SNR converse bound (Elias--Butman) \cite{elias1967networks,butman1969general} is tight over the class of causal linear feedback schemes. 
From a practical viewpoint, we want to move beyond heuristic and restricted constructions and develop a computationally tractable design approach for coding in feedback channels.

Our main result is an affirmative answer to the above question: we develop a constructive KKT-optimal active linear feedback design that attains the Elias--Butman SNR converse bound, thereby establishing MSE/SNR optimality of the general causal linear class. We further show that the globally optimal passive output-feedback solution arises as a special case of our active formulation and admits a geometric-Toeplitz (GT) structure.

\paragraph*{\textbf{Related Work}} 

The search for the optimal linear feedback schemes for AWGN channels dates back to Elias' 1957 work on feedback channel capacity~\cite{elias1957channel} and the later Schalkwijk--Kailath (SK) scheme~\cite{schalkwijk1966coding1, schalkwijk1966coding2}, while Butman (1969) formulated the general noisy-output-feedback linear design problem and derived the Elias--Butman SNR converse (information-theoretic upper bound). However, whether this converse is achievable has remained open due to the difficulty of solving the resulting coupled nonlinear equations~\cite{elias1967networks, butman1969general}. 
Decades later, the Chance--Love (CL) scheme proposed in~\cite{chance2011concatenated} extended this work to practical AWGN channels with \emph{noisy} output feedback. Related work considered CL-style linear processing with quantized, rate-constrained feedback~\cite{farthofer2014achieving}. However, the CL scheme is restricted to passive feedback and relies on asymptotic assumptions and structural conjectures. 
Most recently, Mishra \emph{et al.}~\cite{mishra2023linear} proposed a dynamic-programming (DP)-based scheme for the passive-feedback case by casting the noisy-feedback coding problem as a decentralized linear-quadratic-Gaussian (LQG) control problem. However, the DP scheme is recursive and MSE-optimal only within the restricted class of scalar-state sequential linear schemes, leaving optimality over the full causal linear class open. 

Beyond linear feedback, nonlinear schemes such as Modulo-SK and learning-based feedback codes such as GBAF, DeepCode, and LightCode have also been studied~\cite{ben2015gaussian, ben2017interactive, kim2018deepcode, jiang2019turbo, jiang2020feedback, ozfatura2022all, ozfatura2023feedback, chahine2022inventing, ankireddy2025lightcode, ben2020simple_Modulo-SK_vs_Deepcode, kim2020deepcode_vs_Modulo-SK, kim2023robust, jang2025syndromecode, ding2026deep}. In particular, learning-based active-feedback schemes such as BAAF~\cite{ozfatura2023feedback} and ActiveFB~\cite{chahine2022inventing} jointly design the forward and feedback coding operations, although their theoretical optimality remains unclear.
Related two-way feedback studies have considered both linear Gaussian formulations based on SNR-constrained power optimization~\cite{kim2022linear} and learning-based schemes evaluated mainly by empirical block-error-rate (BLER) performance~\cite{kim2025coding, nickel2025learning}.

A comparison of the key properties of the proposed SNR/MSE-optimal scheme and other linear schemes discussed above is summarized in Table~\ref{tab:scheme-comparison}.

\begin{table}[t]
\centering
\setlength{\tabcolsep}{6pt}
\renewcommand{\arraystretch}{1.2}
\begin{tabular}{lcccc}
\textbf{Scheme} & 
\makecell{\textbf{Noisy}\\ \textbf{feedback}} & 
\makecell{\textbf{SNR}\\ \textbf{optimal}}  &
\makecell{\textbf{MSE}\\ \textbf{optimal}}  & 
\makecell{\textbf{Active}\\ \textbf{feedback}}\\
\hline
\rowcolor[gray]{0.9}
KKT (proposed)                   & O & O & O & O \\
SK~\cite{schalkwijk1966coding1} (1966)  & X & X & X & X \\
CL~\cite{chance2011concatenated} (2011) & O & O & X & X  \\
DP~\cite{mishra2023linear} (2023)       & O & O$^\ast$ & O$^\ast$ & X  \\
\end{tabular}

\vspace{1pt}
{\footnotesize $^\ast$Optimal only within the restricted scalar sequential linear schemes.}

\caption{Comparison of linear feedback schemes.}
\label{tab:scheme-comparison}
\end{table}

\paragraph*{\textbf{Contributions}}
In this work, we study the full class of causal linear feedback schemes with noisy output feedback. 
Our key contributions are:
\begin{compactenum}    
  \item We develop a constructive KKT-optimal active feedback design for every finite blocklength. The proposed scheme asymptotically achieves the Elias--Butman SNR converse, thereby establishing MSE/SNR optimality in the general causal linear feedback class.

  \item  We show that the passive specialization of the KKT-based design is globally optimal within the passive-feedback class and admits a geometric-Toeplitz (GT) structure. 

  \item Unlike prior passive schemes such as CL~\cite{chance2011concatenated} and DP~\cite{mishra2023linear}, the proposed KKT-based passive scheme is globally optimal for every finite blocklength $T$, non-recursive, and computable via low-complexity one-dimensional root-finding with $O(\log T)$ operations. 
    


\end{compactenum}

\paragraph*{Organization} The remainder of the paper is organized as follows. Section~\ref{sec:System model} introduces the linear AWGN noisy-feedback model. Section~\ref{sec:KKT-Optimization} develops a constructive KKT-based optimization for active feedback, including an eigenvector characterization, an envelope-gradient ascent algorithm, convergence analysis, and the globally optimal passive-feedback specialization. Section~\ref{sec:Achievability} presents converse bounds on SNR and MSE for causal linear feedback schemes and establishes the asymptotic achievability of the proposed design. In Section~\ref{sec:Numerical results}, we provide simulation results to validate our analysis and compare the proposed schemes against existing linear baselines, including the SK, CL and DP schemes, followed by the conclusion in Section~\ref{sec:Conclusion}.

\paragraph*{Notations}
 Boldface lowercase and uppercase letters denote vectors and matrices, respectively (e.g., $\mathbf{x}$ and $\mathbf{X}$).
 For a vector $\mathbf{x}$, $\|\mathbf{x}\|$ denotes the Euclidean ($\ell_2$) norm, and for a matrix $\mathbf{X}$, $\|\mathbf{X}\|_F$ denotes the Frobenius norm.
 For a Hermitian matrix $\mathbf{X}$, $\lambda_{\max}(\mathbf{X})$ and $\lambda_{\min}(\mathbf{X})$ denote its largest and smallest eigenvalues, and $\mathbf{v}_{\max}(\mathbf{X})$ and $\mathbf{v}_{\min}(\mathbf{X})$ denote corresponding unit-norm eigenvectors. For a general matrix $\mathbf{X}$, $\sigma_{\max}(\mathbf{X})$ and $\sigma_{\min}(\mathbf{X})$ denote its largest and smallest singular values. We use $\mathbf{I}$ for the identity matrix (dimension clear from context), and $\mathbf{X}\succeq \mathbf{0}$ to indicate that $\mathbf{X}$ is positive semidefinite. 

\section{System Model} \label{sec:System model}
\vspace{-3mm}
\begin{figure} [h!]
    \centering
    \includegraphics[width=1.0\linewidth]{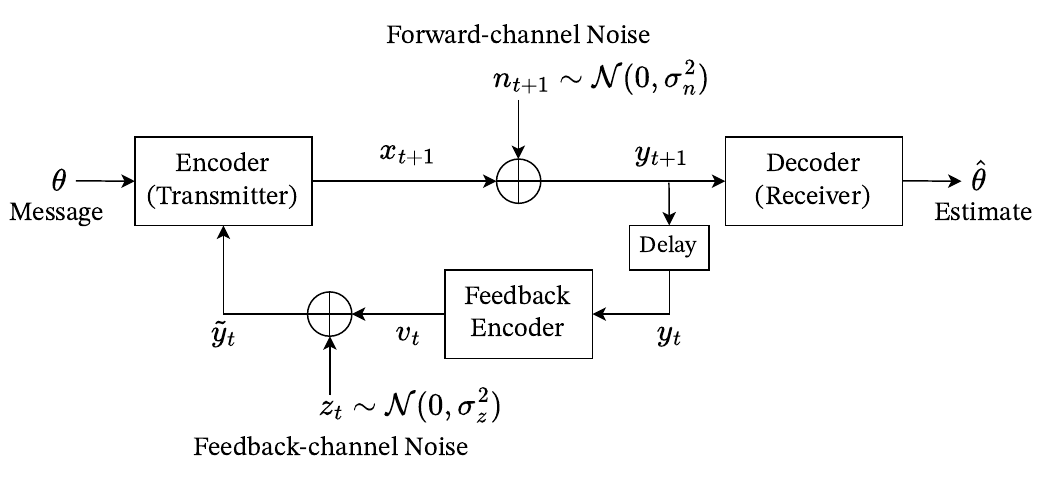}
    \caption{AWGN channel with noisy output feedback}
    \label{fig:active}
\end{figure}

We consider causal linear feedback schemes for transmitting a Gaussian source symbol
\(
    \theta \sim \mathcal N(0,P_\theta)
\)
over an additive white Gaussian noise (AWGN) channel with noisy output feedback, as illustrated in Fig.~\ref{fig:active}. Over $T$ channel uses, the forward and feedback links are 
\begin{align}
    y_t &= x_t + n_t,        \quad n_t \sim \mathcal N(0,\sigma_{n}^2), 
    \label{eq:active-forward-channel}\\
    \tilde y_t &= v_t + z_t, ~\quad z_t \sim \mathcal N(0,\sigma_{z}^2),         
    \label{eq:active-feedback-channel}
\end{align}
for $t=0,\dots,T-1$, where $x_t$ and $v_t$ denote the forward- and feedback-link channel inputs respectively, and $\{n_t\}$ and $\{z_t\}$ are mutually independent i.i.d. noise sequences that are also independent of $\theta$. Both the transmitter and receiver are subject to the average per-symbol power constraints:
\begin{align}
    \sum_{t=0}^{T-1} \mathbb{E}[x_t^2] \le T \Ptx, \qquad \sum_{t=0}^{T-1} \mathbb{E}[v_t^2] \le T \Prx,
    \label{eq:active-power-constraints}
\end{align} 
where $\Ptx>0$ and $\Prx>0$ are the forward- and feedback-link per-symbol power budgets, respectively.

For blockwise notation, define the following vectors:
\begin{align*}
    \mathbf{x} = [x_0,\ldots,x_{T-1}]^\T, \qquad
    \mathbf{y} = [y_0,\ldots,y_{T-1}]^\T,\\
    \mathbf{\tilde{y}} = [\tilde{y}_0,\ldots,\tilde{y}_{T-1}]^\T, \qquad
    \mathbf{v} = [v_0,\ldots,v_{T-1}]^\T, \\
    \mathbf{n} = [n_0,\ldots,n_{T-1}]^\T, \qquad
    \mathbf{z} = [z_0,\ldots,z_{T-1}]^\T. 
\end{align*}

A causal linear active-feedback scheme is written as:
\begin{align}
&\mathbf x=\mathbf g\,\theta+ \F\A \mathbf{n} + \F \mathbf{z}, \\
&\mathbf v=\mathbf A\,\mathbf y,
\label{eq:active-linear-law}
\end{align}
where $\mathbf g\in\mathbb R^{T}$ is the precoding vector, $\mathbf F\in\mathbb R^{T\times T}$  is the transmitter feedback coding matrix (transmitter filter), and $\mathbf A\in\mathbb R^{T\times T}$ is the receiver feedback coding matrix (receiver filter). Causality requires $\F$ to be strictly lower triangular and $\A$ to be lower triangular.

Equivalently, the received vector can be written as:
\begin{equation}
    \mathbf{y} =  \underbrace{\g \theta}_{\text{signal}} + \underbrace{\w}_{\substack{\text{effective}\\\text{noise}}}, \quad \w \triangleq (\I+\F\A)\n+\F\z,
\end{equation}
where $\w$ denotes the effective noise vector with covariance 
\begin{equation}
\Cov_w(\F,\A) \triangleq \sigma_n^2\big(\mathbf I+\mathbf F\mathbf A\big)\big(\mathbf I+\mathbf F\mathbf A\big)^{\mathsf T} + \sigma_z^2\mathbf{F}\mathbf{F}^{\mathsf T}.
\end{equation}
When the dependence on $(\F,\A)$ is clear from context, we simply write
$\Cov_w$ for $\Cov_w(\F,\A)$.

For fixed $(\g, \F, \A)$, the model is linear Gaussian, so the optimal decoder is the LMMSE estimator $\q \in \mathbb{R}^T$. After observing the received vector $\y$, the receiver forms the estimate
\begin{equation}
    \hat{\theta} = \mathbf{q}^\T \mathbf{y}, \quad \mathbf{q} \triangleq \frac{\Cov_w^{-1} \mathbf{g}}{1/P_\theta+ \mathbf{g}^\T \Cov_w^{-1} \mathbf{g}}.
    \label{eq:active-q-estimate}
\end{equation}
Thus, $\q$ is determined by $(\g, \F, \A)$, and the design reduces to the following SNR maximization problem over $(\g, \F, \A)$.

For a linear estimator $\hat{\theta}=\q^\T\y$, we define the
received SNR as the post-processed signal-to-noise ratio
\begin{equation}
    \SNR(\q;\g,\F,\A)
    \triangleq \frac{\mathbb{E}[(\q^\T\g\theta)^2]}
    {\mathbb{E}[(\q^\T \w)^2]}= 
    \frac{P_\theta |\q^\T \g|^2}
    {\q^\T \Cov_w \q},
    \label{eq:snr_q}
\end{equation}
where the numerator $\mathbb{E}[(\mathbf q^\T \mathbf g \theta)^2]$ is the post-processed signal power and the denominator $\mathbb{E}[(\mathbf q^\T \mathbf w)^2]$ is the corresponding effective noise power.

The LMMSE estimator in~\eqref{eq:active-q-estimate}, whose direction satisfies
$\q \propto \Cov_w^{-1}\g$ and maximizes this received SNR, yielding
\begin{equation}
    \SNR(\g,\F,\A)
    =
    P_\theta \g^\T \Cov_w^{-1}\g.
    \label{eq:snr_gFA}
\end{equation}

Without loss of generality, we normalize $P_\theta=1$ throughout the paper. Since \(\MMSE=(1+\SNR)^{-1}\)~\cite{li2006distribution}, minimizing MMSE is equivalent to maximizing the received SNR.

\begin{mdframed}
\begin{theorem}
(\textit{SNR maximization under average power constraints}) For fixed $(\g,\F,\A)$, the received SNR is
\[
\SNR(\g,\F,\A)
=
P_\theta \g^\T\Cov_w^{-1}(\F,\A)\g.
\]
Since $P_\theta$ is fixed, minimizing the MMSE is equivalent to
maximizing the $\g^\T\Cov_w^{-1}(\F,\A)\g$.
Accordingly, the joint design of $(\g,\F,\A)$ under the average
power constraints is formulated as
\begin{equation}
\begin{aligned}
&\max_{\g,\F,\A}\quad
\g^\T\Cov_w^{-1}(\F,\A)\g \notag\\
&\text{s.t.}\quad
\|\g\|^2
+\sigma_n^2\|\F\A\|_F^2
+\sigma_z^2\|\F\|_F^2
\le T\Ptx,\\
&\qquad
\|\A\g\|^2+
\mathrm{tr}(\A\Cov_w\A^\T)
\le T\Prx .
\end{aligned}
\tag{P1}
\label{eq:P1-active-feedback}
\end{equation}
where  $\Cov_w (\mathbf{F}, \mathbf{A}) \triangleq \sigma_n^2\big(\mathbf I+\mathbf F\mathbf A\big)\big(\mathbf I+\mathbf F\mathbf A\big)^{\mathsf T} + \sigma_z^2\mathbf F\mathbf F^{\mathsf T}$ with $\F$ strictly lower triangular and 
$\A$ lower triangular.
\label{thm:P1-active-feedback}
\end{theorem}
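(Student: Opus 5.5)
The plan is to verify three things in turn. First, the closed-form SNR under the LMMSE direction. Second, the MMSE--SNR equivalence. Third, that the two power constraints in \eqref{eq:active-power-constraints} become exactly the quadratic expressions in \eqref{eq:P1-active-feedback}.

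\textbf{Step 1: the SNR formula.} Since $\sigma_n^2>0$ and $\I+\F\A$ is lower triangular with unit diagonal, it is invertible. Hence $\Cov_w\succeq\sigma_n^2(\I+\F\A)(\I+\F\A)^\T\succ\mathbf 0$, so $\Cov_w^{-1}$ exists. For any $\q$, apply Cauchy--Schwarz in the $\Cov_w$ inner product:
\[
|\q^\T\g|^2=|(\Cov_w^{1/2}\q)^\T(\Cov_w^{-1/2}\g)|^2\le(\q^\T\Cov_w\q)(\g^\T\Cov_w^{-1}\g).
\]
Equality holds iff $\q\propto\Cov_w^{-1}\g$. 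Substituting into \eqref{eq:snr_q} shows that the supremum over $\q$ equals $P_\theta\g^\T\Cov_w^{-1}\g$, and that the supremum is attained by the LMMSE vector in \eqref{eq:active-q-estimate}. This gives \eqref{eq:snr_gFA}.

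\textbf{Step 2: MMSE equivalence.} For the LMMSE estimator, a direct computation with $P_\theta=1$ gives
\[
\MMSE=1-\frac{\g^\T\Cov_w^{-1}\g}{1+\g^\T\Cov_w^{-1}\g}=(1+\SNR)^{-1},
\]
which is consistent with the cited relation. This expression is strictly decreasing in $\SNR$. Therefore, on the same feasible set, minimizing the MMSE and maximizing $\g^\T\Cov_w^{-1}\g$ have the same optimizers.

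\textbf{Step 3: the power constraints.} This is the only real bookkeeping step. Write $\x=\g\theta+\F\A\n+\F\z$ and use that $\theta,\n,\z$ are mutually independent, zero mean, with covariances $1,\sigma_n^2\I,\sigma_z^2\I$. The cross terms vanish, so
\[
\sum_t\mathbb E[x_t^2]=\mathbb E\|\x\|^2=\|\g\|^2+\sigma_n^2\tr(\F\A\A^\T\F^\T)+\sigma_z^2\tr(\F\F^\T)=\|\g\|^2+\sigma_n^2\|\F\A\|_F^2+\sigma_z^2\|\F\|_F^2 .
\]
For the feedback link, $\v=\A\y=\A\g\theta+\A\w$, where $\w$ is independent of $\theta$ with covariance $\Cov_w$. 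Hence
\[
\mathbb E\|\v\|^2=\|\A\g\|^2+\tr(\A\Cov_w\A^\T).
\]
It remains to record the causality structure, which carries over unchanged. $\F$ must be strictly lower triangular because $x_t$ may depend only on $\tilde y_0,\dots,\tilde y_{t-1}$. $\A$ must be lower triangular because $v_t$ may depend on $y_0,\dots,y_t$. Combining Steps 1--3 yields exactly \eqref{eq:P1-active-feedback}.

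I expect no step to be a genuine obstacle. The points that need care are the invertibility of $\Cov_w$ (via the unit-diagonal triangular factor) and the consistent handling of $P_\theta$: it is normalized to $1$ in the constraints, and the objective scales by $P_\theta$ in general.
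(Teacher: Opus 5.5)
Your proposal is correct and follows the same route as the paper's appendix proof: take the SNR-maximizing (LMMSE) direction $\q\propto\Cov_w^{-1}\g$, use $\MMSE=(1+\SNR)^{-1}$ to get the equivalence, and expand $\mathbb{E}\|\x\|^2$ and $\mathbb{E}\|\A\y\|^2$ using the independence of $\theta,\n,\z$. Your Cauchy--Schwarz step is a slightly cleaner proof that this direction is globally optimal than the paper's remark about setting the gradient of the SNR to zero.
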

\end{mdframed}

\begin{proof}
For fixed $(\g,\F,\A)$, the LMMSE estimator $\q$ yields the MMSE--SNR relation \(\MMSE=(1+\SNR)^{-1}\)~\cite{li2006distribution}, so minimizing the MMSE is equivalent to maximizing $\g^\T\Cov_w^{-1}\g$. Expanding the average powers of the forward- and feedback-link channel inputs $\x$ and $\v$ yields the two constraints in \eqref{eq:P1-active-feedback}. For details, see Appendix~\ref{app:MMSE-SNR}. 
\end{proof}


The active linear feedback problem~\eqref{eq:P1-active-feedback} in Theorem~\ref{thm:P1-active-feedback} is challenging because the design variables $(\g, \F,\A)$ are nonlinearly coupled~\cite{butman1969general}. Although the precoder $\g$ can be optimized via a \emph{quadratically constrained quadratic program} (QCQP) when $(\F, \A)$ are fixed, the feedback coding matrices $\F$ and $\A$ appear both in the power constraints through the bilinear product $\F\A$ and in the objective through $\Cov_w^{-1}(\mathbf F,\mathbf A)$. As a result, the overall problem is nonconvex and does not admit a trivial closed-form solution. 

In the next section, we leverage this structure to simplify the problem. By first optimizing $\g$ for fixed $(\F,\A)$, we reduce the problem to a QCQP with a Rayleigh-quotient objective. This reformulation yields a KKT characterization and naturally leads to an envelope-gradient ascent approach for updating $(\F,\A)$ toward locally optimal active linear feedback schemes. 

\section{KKT-based Optimization and Eigenvector Characterizations} \label{sec:KKT-Optimization}
We now discuss an optimization algorithm for the active linear feedback problem~\eqref{eq:P1-active-feedback}. The key observation is that for fixed $(\mathbf F, \mathbf A)$, the covariance $\boldsymbol{\Sigma}^{-1}_w$ is fixed and the problem becomes a QCQP in the precoder $\mathbf g$. In contrast, optimizing over $(\mathbf F, \mathbf A)$ is nonconvex. This motivates a block-coordinate procedure: (i) update $\mathbf g$ optimally given $(\mathbf F, \mathbf A)$ using KKT conditions, and (ii) update $(\mathbf F, \mathbf A)$ using gradient ascent while enforcing causality and power feasibility.

\subsection{Updating $\mathbf g$ via KKT and Eigenvector Characterization}
With $(\mathbf F, \mathbf A)$ fixed, the $\mathbf g$-subproblem reduces to maximizing $\mathbf g^{\mathsf T}\boldsymbol{\Sigma}^{-1}_w\mathbf g$ subject to the transmitter and receiver average-power constraints, both of which are quadratic in $\mathbf g$. Let the residual power budgets be 
\begin{align}
\ctx(\mathbf F,\mathbf A)
&\triangleq T\Ptx-\sigma_n^2\|\mathbf F\mathbf A\|_F^2-\sigma_z^2\|\mathbf F\|_F^2,
\label{eq:residual forward-power budgets}
\\
\crx(\mathbf F,\mathbf A)
&\triangleq T\Prx-\mathrm{tr}\big(\mathbf A\boldsymbol{\Sigma}_w\mathbf A^{\mathsf T}\big),
\label{eq:residual feedback-power budgets}
\end{align}
so feasibility requires $\ctx>0$ and $\crx>0$. 

\begin{mdframed}
\textbf{Active Linear Feedback $\g$-subproblem (QCQP)} 
\begin{equation}\label{eq:P2-g-subproblem}
\begin{aligned}
\max_{\mathbf g}\quad & \mathbf g^{\mathsf T}\boldsymbol{\Sigma}_w^{-1}\mathbf g\\
\text{s.t.}\quad
& \|\mathbf g\|^2 \le \ctx,\quad
\|\mathbf A\mathbf g\|^2 \le \crx,\\
& \ctx> 0, \quad \crx > 0.
\end{aligned}
\tag{P2}
\end{equation}
\end{mdframed}


Introduce Lagrange multipliers $\lambda_1,\lambda_2 \geq 0$ for the two quadratic constraints, and define Lagrangian function 
\begin{equation} \label{eq:Lagrangian-g}
\begin{aligned}
\mathcal{L}(\mathbf g;\lambda_1,\lambda_2) &\triangleq
\mathbf g^{\mathsf T}\boldsymbol{\Sigma}_w^{-1}\mathbf g
\!-\! \lambda_1\!\Big(\|\mathbf g\|^2 \!-\! \ctx\Big)
\!-\! \lambda_2\!\Big(\|\mathbf A\mathbf g\|^2 \!-\! \crx\Big)\\
= ~&\underbrace{\g^{\T} (\Cov^{-1}_w -\lambda_1 \mathbf I - \lambda_2 \A^\T\A) \g}_{\text{the Lagrangian-adjusted quadratic form}} ~+~ \underbrace{\lambda_1 \ctx + \lambda_2 \
\crx}_{\text{constant}}
\end{aligned}
\end{equation}
At a Karush-Kuhn-Tucker (KKT) point, the multipliers satisfy complementary slackness,
\begin{equation} \label{eq:KKT-complementary slackness}
\lambda_1^\star(\|\mathbf g\|^2 - \ctx)=0, \quad 
\lambda_2^\star(\|\mathbf A\mathbf g\|^2 - \crx)=0,
\end{equation}
and the stationary condition $\nabla_{\g} \mathcal{L}(\g;\lambda_1,\lambda_2)=\mathbf{0}$ yields
\begin{equation} \label{eq:KKT-stationary-g}
\big(\Cov_w^{-1}-\lambda_2 \A^\T \!\A\big)\g= \lambda_1\g,
\end{equation}
which implies a generalized-eigenvector characterization of the optimal precoder $\g^\star$ direction. Equivalently, the Rayleigh-quotient structure of the QCQP yields an eigenvector characterization of $\g^\star$, formalized in the following theorem.



\begin{mdframed}
\begin{theorem}[Eigenvector characterization of optimal $(\g,\q)$]
\label{thm:active-gq-eig}
Fix any feasible $(\F,\A)$ and let $\Cov_w\triangleq \Cov_w(\F,\A)\succ 0$.
Let $\g^\star$ be a global maximizer of the $\g$-subproblem~\eqref{eq:P2-g-subproblem} and let $\lambda_1^\star,\lambda_2^\star$ denote the optimal Lagrange multipliers associated with the forward-/feedback-link power constraints.

1) The optimal precoder $\g^\star$ satisfies
\begin{equation}\label{eq:active-g-eigpair}
\begin{aligned}
\g^\star & \propto \mathbf v_{\max}\!\left(\Cov_w^{-1}-\lambda_2^\star \A^{\T}\A\right),\\
\lambda_1^\star &= \lambda_{\max}\!\left(\Cov_w^{-1}-\lambda_2^\star \A^{\T}\A\right),
\end{aligned}
\end{equation}
where $\mathbf v_{\max}(\cdot)$ denotes the dominant (largest-eigenvalue) eigenvector 
and $\lambda_{\max}(\cdot)$ denotes the largest eigenvalue.

2) The optimal LMMSE estimator $\q^\star$ satisfies
\begin{equation}\label{eq:active-q-propto}
\q^\star  \propto \Cov_w^{-1}\g^\star.
\end{equation}

\end{theorem}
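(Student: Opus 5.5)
The plan is to read the KKT system \eqref{eq:KKT-complementary slackness}--\eqref{eq:KKT-stationary-g} through Lagrangian duality, so that the multiplier $\lambda_1^\star$ is pinned to the \emph{top} of the spectrum and not just to some eigenvalue. Stationarity alone only says that $\g^\star$ is an eigenvector of $\M\triangleq\Cov_w^{-1}-\lambda_2^\star\A^\T\A$ with eigenvalue $\lambda_1^\star$. The content of part 1) is that this eigenvalue is $\lambda_{\max}(\M)$.

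First I would record the easy structural facts.
\begin{itemize}
\item Since $\Cov_w\succ 0$, the objective is positive for every $\g\neq\mathbf 0$.
\item The objective is homogeneous of degree two. So if neither constraint were tight at $\g^\star$, scaling $\g^\star$ up would increase the value. Hence at least one constraint is active.
\item The feasible set is compact, since it lies in $\{\|\g\|^2\le\ctx\}$, so a global maximizer exists.
\end{itemize}

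Next I would pass to the SDP relaxation in $\mathbf G=\g\g^\T\succeq 0$:
\[
\max\ \tr(\Cov_w^{-1}\mathbf G)\quad\text{s.t.}\quad \tr(\mathbf G)\le\ctx,\quad \tr(\A^\T\A\mathbf G)\le\crx .
\]
This is a linear SDP with only two linear constraints. By the Pataki/Barvinok rank bound, it has an optimal solution of rank $r$ with $r(r+1)/2\le 2$, i.e.\ $r=1$. The relaxation is therefore tight, and its rank-one optimizer is $\g^\star{\g^\star}^\T$ for a global maximizer $\g^\star$ of \eqref{eq:P2-g-subproblem}. Slater's condition holds (take $\mathbf G=\epsilon\I$, using $\ctx,\crx>0$), so SDP strong duality holds. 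The dual reads
\[
\min_{\lambda_1,\lambda_2\ge 0}\ \lambda_1\ctx+\lambda_2\crx\quad\text{s.t.}\quad \lambda_1\I+\lambda_2\A^\T\A-\Cov_w^{-1}\succeq 0 .
\]
Equivalently, this is exactly the condition that the Lagrangian \eqref{eq:Lagrangian-g} is bounded above in $\g$. Thus the optimal multipliers $(\lambda_1^\star,\lambda_2^\star)$ satisfy $\lambda_1^\star\ge\lambda_{\max}(\M)$.

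Complementary slackness for the SDP, $(\lambda_1^\star\I+\lambda_2^\star\A^\T\A-\Cov_w^{-1})\,\g^\star{\g^\star}^\T=\mathbf 0$, gives $\M\g^\star=\lambda_1^\star\g^\star$, which recovers \eqref{eq:KKT-stationary-g}. Combining this with $\lambda_1^\star\ge\lambda_{\max}(\M)$ forces $\lambda_1^\star=\lambda_{\max}(\M)$ and $\g^\star\propto\mathbf v_{\max}(\M)$, which is \eqref{eq:active-g-eigpair}. If $\lambda_{\max}(\M)$ is repeated, $\g^\star$ lies in the top eigenspace and the statement is read with $\mathbf v_{\max}$ chosen accordingly. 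Scalar complementary slackness \eqref{eq:KKT-complementary slackness} then fixes the norm of $\g^\star$ from whichever constraint is active.

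Part 2) is immediate from the LMMSE formula \eqref{eq:active-q-estimate}: with $\g=\g^\star$ the denominator is a positive scalar, so $\q^\star\propto\Cov_w^{-1}\g^\star$. As noted after \eqref{eq:snr_q}, this is also the Rayleigh-quotient maximizer of \eqref{eq:snr_q} by Cauchy--Schwarz in the $\Cov_w$-inner product.

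The main obstacle is the hidden convexity step. Problem \eqref{eq:P2-g-subproblem} maximizes a convex quadratic, so KKT points need not be global, and the eigenvalue selection ($\lambda_1^\star$ being the largest eigenvalue) depends on zero duality gap. My first route is the rank-reduction argument above, which needs only two constraints. If a self-contained argument is preferred, I would instead use the S-lemma (Yakubovich/Polyak, valid for two quadratic forms under Slater) to obtain existence of $\lambda_1,\lambda_2\ge 0$ with $\M-\lambda_1\I\preceq 0$ certifying the optimal value.
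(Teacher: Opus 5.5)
Your proof is correct, but it takes a different route from the paper's. The paper (Appendix~\ref{app:active-gq-eig}) works entirely within the KKT system. Stationarity \eqref{eq:KKT-stationary-g} makes $\g^\star$ an eigenvector of $\Cov_w^{-1}-\lambda_2^\star\A^\T\A$ with eigenvalue $\lambda_1^\star$. The paper then asserts that the transmit constraint is tight, and that with the norm fixed $\g^\star$ maximizes the Lagrangian over directions, so by the Rayleigh quotient it must be the dominant eigenvector. That last step is exactly the saddle-point property $\Cov_w^{-1}-\lambda_2^\star\A^\T\A\preceq\lambda_1^\star\I$. This is a zero-duality-gap statement for a problem that maximizes a convex quadratic, and the paper takes it for granted. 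Your argument proves that property. The pieces are the SDP relaxation with the Pataki/Barvinok bound (two constraints force a rank-one optimizer; rank zero is excluded because the value is positive), Slater-based strong duality, and matrix complementary slackness $(\lambda_1^\star\I+\lambda_2^\star\A^\T\A-\Cov_w^{-1})\g^\star=\mathbf 0$.

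Your route gains four things.
\begin{itemize}
\item It genuinely pins $\lambda_1^\star$ to the top of the spectrum.
\item It makes precise that the multipliers in the statement are the dual-optimal ones. KKT multipliers at $\g^\star$ need not be unique, e.g.\ when $\A^\T\A\g^\star\parallel\g^\star$ with both constraints active, and some KKT pairs then violate $\lambda_1=\lambda_{\max}$.
\item It treats uniformly the case where only the feedback constraint binds and $\lambda_1^\star=0$. The paper's claim that the transmit constraint must be tight glosses over this case.
\item It certifies that the inner $\g$-subproblem is solved globally. Your dual collapses to the scalar convex problem $\min_{\lambda_2\ge 0}\big(\ctx\max\{\lambda_{\max}(\Cov_w^{-1}-\lambda_2\A^\T\A),0\}+\lambda_2\crx\big)$. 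When $\lambda_1^\star,\lambda_2^\star>0$ and the top eigenvalue is simple, its first-order condition is exactly the ratio equation $r(\lambda_2)=\crx/\ctx$ that the paper solves by bisection.
\end{itemize}
The paper's version costs only a few lines and no external theorems, but it is not a complete argument on its own.

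One small phrasing point: say explicitly that once the SDP and QCQP optimal values coincide, \emph{every} global maximizer $\g^\star$ lifts to an SDP optimizer $\g^\star{\g^\star}^\T$. Then complementary slackness applies to the given $\g^\star$, not only to the rank-one optimizer produced by the rank bound.
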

\end{mdframed}

\begin{proof}
The KKT stationarity condition in \eqref{eq:KKT-stationary-g} and the Rayleigh-quotient characterization yield the dominant eigenvector form of $\g^\star$, while the LMMSE solution directly gives $\q^\star \propto \Cov_w^{-1}\g^\star$. For details, see Appendix~\ref{app:active-gq-eig}. 
\end{proof}


With $(\F, \A)$ fixed, computing the dominant eigenvector $\v_{\max}(\Cov_w^{-1}-\lambda_2^\star \A^{\T}\A)$ reduces to numerically determining the optimal Lagrange multiplier $\lambda_2^\star$. By rewriting the complementary slackness condition as a scalar equation in $\lambda_2$, this step reduces to a one-dimensional root-finding problem that can be solved efficiently via bisection. The precise result and its proof are given in Appendix~\ref{app:bisection-λ2-ratio}. 
Having characterized the global $\g$-update for fixed $(\F,\A)$, we next introduce an envelope-gradient ascent method for updating $(\F,\A)$.


\begin{algorithm}[hbt!]
\caption{KKTCode: Envelope Gradient Ascent}
\label{alg:optimal-active-feedback-scheme}

\KwIn{$T,~ \sigma_n^2,~ \sigma_z^2,~ \Ptx,~ \Prx$, $\{\eta_k\}$}
\KwOut{KKT stationary $(\g^\star,\F^\star,\A^\star)$, LMMSE $\q^\star$} 

\BlankLine
\textbf{Initialize:} choose a feasible causal $(\F^{(0)},\A^{(0)})$. \\

\BlankLine
\For{$k=0,1,\dots,K-1$}{
\textbf{(Inner)} Given $(\F^{(k)},\A^{(k)})$, find $\lambda_2^{(k)}$ by bisection:
\[r(\lambda_2^{(k)})=\dfrac{\crx(\F^{(k)},\A^{(k)})}{\ctx(\F^{(k)},\A^{(k)})}.\]
\[\text{Define}~~\mathbf M^{(k)} = \Cov_w^{-1}(\F^{(k)},\A^{(k)})-\lambda_2^{(k)} {\A^{(k)}}^{\mathsf T}\A^{(k)}.\]
\[\g^{(k)} \leftarrow\sqrt{ \ctx (\F^{(k)},\A^{(k)})/P_\theta}\,\mathbf v_{\max}(\mathbf M^{(k)})~~\text{and}\]
\[\lambda_1^{(k)} \leftarrow\lambda_{\max}(\mathbf M^{(k)}).\] 
\\
\BlankLine
\textbf{(Outer)} Compute the envelope gradients:
$\nabla_{\F}\mathcal V(\F^{(k)},\A^{(k)})$ and $\nabla_{\A}\mathcal V(\F^{(k)},\A^{(k)})$: 
\[
(\Delta\F^{(k)}, \Delta\A^{(k)}) \leftarrow \Pi_{\mathcal{S}}\big( (\nabla_{\F}\mathcal V,\;\nabla_{\A}\mathcal V)(\F^{(k)},\A^{(k)}) \big).
\]
Update $(\F,\A)$ via gradient ascent:
\[
(\F^{(k+1)},\A^{(k+1)}) \leftarrow (\F^{(k)},\A^{(k)}) + \eta_k(\Delta\F^{(k)},\Delta\A^{(k)}).
\]
}
\BlankLine
\textbf{Finalize:} Set $(\F^\star,\A^\star)\triangleq(\F^{(K)},\A^{(K)})$. Then solve the inner $\g$-subproblem at $(\F^\star,\A^\star)$ to obtain $(\g^\star,\lambda_1^\star,\lambda_2^\star)$, and $\q^\star=\q_{\rm LMMSE}(\g^\star,\F^\star,\A^\star)$ by~\eqref{eq:active-q-estimate}.

\BlankLine
\Return $(\g^\star,\F^\star,\A^\star,\q^\star)$.
\end{algorithm}

\subsection{Updating $(\mathbf F, \mathbf A)$ via Envelope Gradient Ascent}
The optimization of $(\F,\A)$ is formulated as maximizing the value function obtained after solving the $\g$-subproblem~\eqref{eq:P2-g-subproblem}:
\begin{equation}
\mathcal{V}(\F,\A) \triangleq \mathcal{L}(\g^\star(\F,\A);~\lambda_1^\star(\F,\A), \lambda_2^\star(\F,\A)).    
\end{equation}

For each feasible $(\F,\A)$, the $\g$-subproblem~\eqref{eq:P2-g-subproblem} yields the optimal primal--dual KKT solution 
$(\g^\star(\F,\A), ~\lambda_1^\star(\F,\A),  ~\lambda_2^\star(\F,\A)).$ Since $\mathcal{V}(\mathbf F, \mathbf A)$ represents the optimal value of a parametric optimization problem, the Envelope Theorem (Danskin's Theorem)~\cite{bertsekas1999nonlinear} allows us to compute its gradient with respect to $(\mathbf F, \mathbf A)$ using only the partial derivatives of the Lagrangian, without evaluating the sensitivity of the optimizer $\mathbf g^\star$. The closed-form expressions for $\nabla_F \mathcal{V}$ and $\nabla_A \mathcal{V}$ are given in Appendix~\ref{app:update-FA-envelop}
, and Algorithm~\ref{alg:optimal-active-feedback-scheme} uses them through the projected update direction $(\Delta \F,\Delta \A)$ with appropriately chosen stepsizes $\{\eta_k\}$. This procedure is summarized in Algorithm~\ref{alg:optimal-active-feedback-scheme}.

\subsection{Stationary Convergence of the Proposed Algorithm}

 We now analyze convergence via the reduced value function $\mathcal V(\F,\A)$ and projected first-order stationarity over the causal set. The following theorem shows that 
 the outer loop of Algorithm~\ref{alg:optimal-active-feedback-scheme} yields a monotone objective convergence and asymptotically vanishing projected gradient, implying KKT stationarity (up to sign symmetry). 

\begin{mdframed}
\begin{lemma} [Stationary Convergence of Algorithm~\ref{alg:optimal-active-feedback-scheme}] \label{cor:convergence}
Let $\Pi_{\mathcal S}$ denote the orthogonal projection onto the causal subspace of $(\F,\A)$. Assume that Algorithm~\ref{alg:optimal-active-feedback-scheme} maintains feasibility and that
$\mathcal V(\F,\A)$ is $L$-smooth. 

If the stepsizes satisfy $0<\eta_{\min}\le \eta_k \le 1/L$, then $\{\mathcal V(\F^{(k)},\A^{(k)})\}$ is nondecreasing and convergent, and
\[
\Big\|\Pi_{\mathcal S}\big(\nabla \mathcal V(\F^{(k)},\A^{(k)})\big)\Big\|_F \to 0.
\]
Consequently, every accumulation point $(\F^\star,\A^\star)$, together
with the optimal inner solution $\g^\star=\g^\star(\F^\star,\A^\star)$,
is a KKT-stationary solution of~\eqref{eq:P1-active-feedback}.
\end{lemma}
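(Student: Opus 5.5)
The argument is the standard projected-gradient ascent analysis for an $L$-smooth function on a linear subspace, followed by a translation of first-order stationarity into KKT stationarity via the envelope theorem. Write $\Theta=(\F,\A)$ and let $\mathcal S$ be the linear subspace of pairs with $\F$ strictly lower triangular and $\A$ lower triangular. Since $\mathcal S$ is a subspace, $\Pi_{\mathcal S}$ is an orthogonal projector. The update is $\Theta^{(k+1)}=\Theta^{(k)}+\eta_k\Delta^{(k)}$ with $\Delta^{(k)}=\Pi_{\mathcal S}(\nabla\mathcal V(\Theta^{(k)}))$, so every iterate stays in $\mathcal S$.

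\textbf{Ascent inequality.} By $L$-smoothness of $\mathcal V$, which is assumed along the segment between consecutive iterates and is justified there by the feasibility assumption,
\[
\mathcal V(\Theta^{(k+1)})\ge \mathcal V(\Theta^{(k)})+\eta_k\langle \nabla\mathcal V(\Theta^{(k)}),\Delta^{(k)}\rangle-\tfrac{L}{2}\eta_k^2\|\Delta^{(k)}\|_F^2 .
\]
Because $\Pi_{\mathcal S}$ is an orthogonal projector, $\langle\nabla\mathcal V,\Pi_{\mathcal S}\nabla\mathcal V\rangle=\|\Pi_{\mathcal S}\nabla\mathcal V\|_F^2$. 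Together with $\eta_k\le 1/L$ this gives
\[
\mathcal V(\Theta^{(k+1)})-\mathcal V(\Theta^{(k)})\ge \tfrac{\eta_k}{2}\|\Delta^{(k)}\|_F^2\ge \tfrac{\eta_{\min}}{2}\|\Delta^{(k)}\|_F^2\ge0 .
\]
So the sequence is monotone nondecreasing.

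\textbf{Boundedness, summation, and vanishing gradient.} Monotonicity gives convergence once $\mathcal V$ is bounded above on the feasible set. Since $\Cov_w\succeq\sigma_n^2(\I+\F\A)(\I+\F\A)^\T$ and $\I+\F\A$ is lower triangular with unit diagonal, $\det\Cov_w\ge\sigma_n^{2T}$. I had planned to bound the objective $\g^\T\Cov_w^{-1}\g$ by $\|\g\|^2/\lambda_{\min}(\Cov_w)$, but a determinant lower bound does not bound $\lambda_{\min}(\Cov_w)$ away from zero, so that argument does not go through as stated. The bound I would use instead is the converse: under the power constraints, the SNR is at most the Elias--Butman bound. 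Taking this as given, I would telescope the ascent inequality to get
\[
\tfrac{\eta_{\min}}{2}\sum_k\|\Delta^{(k)}\|_F^2\le \sup\mathcal V-\mathcal V(\Theta^{(0)})<\infty ,
\]
which forces $\|\Pi_{\mathcal S}\nabla\mathcal V(\Theta^{(k)})\|_F\to0$.

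\textbf{KKT stationarity at accumulation points.} Let $\Theta^\star$ be an accumulation point. By continuity of $\nabla\mathcal V$ (from $L$-smoothness) and of $\Pi_{\mathcal S}$, we have $\Pi_{\mathcal S}\nabla\mathcal V(\Theta^\star)=0$. By the envelope (Danskin) theorem, $\nabla\mathcal V=\nabla_{\Theta}\mathcal L(\g^\star,\lambda_1^\star,\lambda_2^\star;\Theta)$, so the projected partial gradient of the Lagrangian of~\eqref{eq:P1-active-feedback} vanishes on the causal subspace. That is the $(\F,\A)$-block of the KKT stationarity conditions, with the causality equality constraints carried by the orthogonal complement of $\mathcal S$. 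The remaining conditions come from the inner subproblem: Theorem~\ref{thm:active-gq-eig} supplies $\g$-stationarity~\eqref{eq:KKT-stationary-g} and complementary slackness~\eqref{eq:KKT-complementary slackness} at $\Theta^\star$. Since $\g^\star$ is determined only up to sign, the statement holds up to sign symmetry.

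\textbf{Main obstacle.} The step I expect to be hardest is the envelope-theorem step. It requires the multipliers $(\lambda_1^\star,\lambda_2^\star)$ to be continuous in $\Theta$ near $\Theta^\star$, so that $\nabla\mathcal V$ passes to the limit along the subsequence. My first attempt would be to assume a simple top eigenvalue of $\Cov_w^{-1}-\lambda_2\A^\T\A$, use monotonicity of the bisection ratio $r(\lambda_2)$ from the Appendix, and conclude via the implicit function theorem that $\lambda_2^\star$ is continuous. A second, smaller gap is the upper bound on $\mathcal V$: it relies on the Elias--Butman converse rather than an elementary eigenvalue argument.
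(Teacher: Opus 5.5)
Your proposal is correct and has the same skeleton as the paper's proof in Appendix~\ref{app:convergence}. Both use the $L$-smoothness ascent inequality, the projection identity $\langle\nabla\mathcal V,\Pi_{\mathcal S}\nabla\mathcal V\rangle=\|\Pi_{\mathcal S}\nabla\mathcal V\|_F^2$, telescoping with $\eta_k\ge\eta_{\min}$, and continuity of $\Pi_{\mathcal S}\nabla\mathcal V$ at an accumulation point.

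The one genuine difference is how $\mathcal V$ is bounded above. The paper asserts that the feasible set of causal pairs is compact and that $\mathcal V$ is continuous on it, which also gives existence of accumulation points. You instead use complementary slackness to write $\mathcal V={\g^\star}^{\T}\Cov_w^{-1}\g^\star$ and bound it by $T\Ptx/\sigma_n^2+T\Prx/\sigma_z^2$ via Lemma~\ref{lem:decoupled-inverse-covariance-bound}.

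This is not a gap. That lemma has a short, self-contained variational proof valid at every feasible point, and existence of accumulation points is not part of the claim.

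Your route is actually the more airtight one. The paper's bound $\|\A\|_F^2\le T\Prx/(\sigma_n^2\sigma_{\min}^2(\I+\F\A))$ depends on the point, so as written it does not give a uniform bound. It becomes uniform only through the determinant observation you discarded. The identity $\det(\I+\F\A)=1$, together with $\|\I+\F\A\|_2\le 1+\sqrt{T\Ptx}/\sigma_n$ from the forward constraint, gives $\sigma_{\min}(\I+\F\A)\ge(1+\sqrt{T\Ptx}/\sigma_n)^{-(T-1)}$. That same bound would also have rescued your original eigenvalue argument for bounding $\mathcal V$.

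Finally, the obstacle you flag does not arise under the stated hypotheses. Continuity of $\nabla\mathcal V$ is part of the $L$-smoothness assumption. The envelope formula of Theorem~\ref{thm:update-FA-envelop} is applied directly at $(\F^\star,\A^\star)$, so no continuity of $(\lambda_1^\star,\lambda_2^\star)$ in $(\F,\A)$ is needed.
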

\end{mdframed}

\begin{proof}
By $L$-smoothness of $\mathcal V$ and orthogonality of $\Pi_{\mathcal S}$, each projected-gradient update yields a nonnegative increase in $\mathcal V$. Summing the resulting improvement bound gives $\|\Pi_{\mathcal S}\big(\nabla \mathcal V\big)\|_F \to 0$, so every accumulation point is KKT-stationary. For details, see Appendix~\ref{app:convergence}.
\end{proof}

Note that Lemma~\ref{cor:convergence} guarantees convergence of Algorithm~\ref{alg:optimal-active-feedback-scheme} to a KKT-stationary point whenever the stepsizes $\{\eta_k\}$ are bounded away from zero and not exceeding $1/L$.
With these convergence guarantees in hand, we now shift our focus to fundamental limits by deriving classical converse bounds and proving the asymptotic optimality of our proposed scheme.

\subsection{Optimal Passive Feedback from the Active Formulation}
To connect the general active formulation with the classical noisy-output-feedback setting, we consider the passive restriction, in which the receiver does not perform feedback coding. This corresponds to setting $\A=\I$, so the active problem~\eqref{eq:P1-active-feedback} reduces to the classical passive problem studied in~\cite{schalkwijk1966coding1, chance2011concatenated, mishra2023linear}. This restriction both simplifies the main problem and provides a globally optimal benchmark within the passive linear-feedback class.


\begin{mdframed}
\begin{theorem}
[Globally Optimal Passive Specialization of
Algorithm~\ref{alg:optimal-active-feedback-scheme}]
\label{thm:passive_GT_structure}
Under the passive restriction $\A=\I$ and $\lambda_2^\star=0$,
the passive specialization of
Algorithm~\ref{alg:optimal-active-feedback-scheme}
yields the globally optimal causal passive linear feedback scheme
for every finite blocklength $T$.
Moreover, the resulting scheme admits a
\textbf{geometric-Toeplitz} feedback structure, in which the
successive subdiagonal coefficients of the transmitter feedback
matrix form a geometric progression~
\cite{chance2011concatenated,agrawal2011iteratively}.

The optimal precoder and transmitter feedback matrix take the form
\begin{equation}
\g^\star
=
g_0
\begin{bmatrix}
1 & \beta^\star & (\beta^\star)^2 & \cdots & (\beta^\star)^{T-1}
\end{bmatrix}^\T,
\end{equation}
and
\begin{equation}
\F^\star
=
F_0
\begin{bmatrix}
0 & 0 & 0 & \cdots & 0\\
\beta^\star & 0 & 0 & \cdots & 0\\
(\beta^\star)^2 & \beta^\star & 0 & \cdots & 0\\
\vdots & \vdots & \ddots & \ddots & \vdots\\
(\beta^\star)^{T-1} &
(\beta^\star)^{T-2} &
\cdots &
\beta^\star & 0
\end{bmatrix},
\label{eq:GT_matrix_structure}
\end{equation}
where
\begin{equation}
F_0
=
-\frac{\sigma_n^2}{\sigma_n^2+\sigma_z^2}
\frac{1-(\beta^\star)^2}{(\beta^\star)^2},
\end{equation}
and $g_0>0$ is chosen such that the forward-link power
constraint is satisfied with equality.
The negative ratio $-\beta^\star$ also yields an SNR/MMSE-equivalent solution.

The ratio $\beta^\star\in(0,1)$ is the unique positive root of
\begin{equation}
\hspace{-0.7em}
h(\beta)\triangleq \Big[
\sigma_z^2
+T\Ptx\left(1+\frac{\sigma_z^2}{\sigma_n^2}\right)
+T\sigma_n^2
\Big]\beta^{2T}
-\sigma_n^2T\beta^{2T-2}
-\sigma_z^2.
\label{eq:h(beta)=0}
\end{equation}
Hence, the optimal passive scheme can be computed by
one-dimensional root finding with $\mathcal{O}(\log T)$ bisection iterations to polynomial accuracy.
\end{theorem}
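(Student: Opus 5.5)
Under $\A=\I$ the feedback input is $\v=\y$, and the forward power constraint in \eqref{eq:P1-active-feedback} becomes $\|\g\|^2+(\sigma_n^2+\sigma_z^2)\|\F\|_F^2\le T\Ptx$. The feedback constraint no longer involves a design variable, so we may take $\lambda_2^\star=0$. Theorem~\ref{thm:active-gq-eig} then gives $\g^\star\propto\v_{\max}(\Cov_w^{-1})=\v_{\min}(\Cov_w)$, with $\|\g^\star\|^2=\ctx(\F)$. The reduced value is therefore
\[
\mathcal V(\F)=\frac{T\Ptx-(\sigma_n^2+\sigma_z^2)\|\F\|_F^2}{\lambda_{\min}(\Cov_w(\F))}.
\]
Writing $\lambda_{\min}(\Cov_w)=\min_{\|\u\|=1}\u^\T\Cov_w\u$, the passive problem becomes the joint maximization over $(\F,\u)$ of $\big(T\Ptx-s\|\F\|_F^2\big)/\big(\sigma_n^2\|(\I+\F)^\T\u\|^2+\sigma_z^2\|\F^\T\u\|^2\big)$, where $s\triangleq\sigma_n^2+\sigma_z^2$. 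I will attack global optimality through this joint form instead of through the nonconvex landscape in $\F$.

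\emph{Step 1: solve for $\F$ given $\u$.} For fixed unit $\u$ and a fixed target ratio $\gamma$, use a Dinkelbach-type argument. Maximizing the ratio is equivalent to minimizing $\gamma\,\u^\T\Cov_w\u+s\|\F\|_F^2$, which is a strictly convex quadratic in the strictly lower-triangular entries. It decouples over columns $j$, because $(\F^\T\u)_j=\sum_{i>j}F_{ij}u_i$. Each column is a rank-one regularized least-squares problem, so the closed form is $F_{ij}=-c_j u_i$ for $i>j$, with $c_j$ depending on $u_j$ and on the tail energy $\sum_{i>j}u_i^2$. This yields the optimal value as an explicit function of $\u$ alone.

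\emph{Step 2: optimize over $\u$.} I would write the stationarity conditions of the scalar function of $\u$ obtained in Step~1, together with the self-consistency requirement that $\u$ is actually the minimal eigenvector of the resulting $\Cov_w$. I will then show that these conditions force the ratios $u_{j+1}/u_j$ to be constant, $u_j\propto\beta^j$. With $u_i\propto\beta^i$, the structure $F_{ij}=-c_ju_i$ becomes Toeplitz in $i-j$ with geometric entries, and matching constants gives $F_0=-\frac{\sigma_n^2}{s}\frac{1-\beta^2}{\beta^2}$, which is \eqref{eq:GT_matrix_structure}. The map $\u\mapsto-\u$ together with $\beta\mapsto-\beta$ leaves $\Cov_w$ and the objective invariant, which gives the sign symmetry. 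Plugging the geometric ansatz into the tight power constraint and the eigenvector equation, summing the geometric series, and clearing denominators should produce $h(\beta)=0$ in \eqref{eq:h(beta)=0}.

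\emph{Step 3: uniqueness and complexity.} As a polynomial in $x=\beta^2$, $h$ has coefficient signs $(+,-,-)$. By Descartes' rule it therefore has exactly one positive root. Since $h(0)=-\sigma_z^2<0$ and $h(1)=\sigma_z^2+T\Ptx(1+\sigma_z^2/\sigma_n^2)>0$, this root lies in $(0,1)$. Because the root is unique and $h$ is continuous, bisection on $(0,1)$ needs $O(\log T)$ iterations to reach accuracy polynomial in $1/T$. Global optimality then follows: every global maximizer of the joint problem must satisfy the stationarity conditions of Step~2, and those conditions admit only this solution up to sign.

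The main obstacle is Step~2, namely proving that every stationary $\u$ is geometric. This requires excluding non-geometric stationary points, including those with vanishing tails. I would first try an induction from the last index $T-1$ backward: the last column of $\F$ is zero, which pins down a ratio recursion, and I would show that this recursion has a constant fixed point forced by the eigen-equation. If that fails, I would fall back on the Chance--Love/Agrawal characterization~\cite{chance2011concatenated,agrawal2011iteratively} to supply the geometric form.
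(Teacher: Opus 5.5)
\textbf{There is a genuine gap in Step~2.} Step~2 is the core of the theorem, and it is not proved. You assert that stationarity in $\u$ plus self-consistency force $u_{j+1}/u_j$ to be constant, but you offer only a backward induction you have not carried out, or a citation. As formulated, the claim is also false. After Step~1 the value depends on $\u$ only through the squares $u_j^2$, since the tail energies are sums of squares. So its derivative in $u_k$ vanishes whenever $u_k=0$, and every $\u$ with zero entries is automatically stationary in those coordinates. For instance, $\u=\mathbf{e}_{T-1}$ gives $c_j\propto u_j=0$ in Step~1. Hence $\F=\O$ and $\Cov_w=\sigma_n^2\I$, and $\mathbf{e}_{T-1}$ is trivially a minimal eigenvector of that matrix. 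This no-feedback point satisfies both your stationarity and your self-consistency conditions, yet it is not geometric with $\beta\in(0,1)$. So your closing inference (``every global maximizer satisfies the conditions, and they admit only this solution'') does not give global optimality without a separate argument ruling out such degenerate points.

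\textbf{How to close it.} The missing idea fits your Step~1 directly and makes stationarity unnecessary. Rescale $\h\triangleq\sqrt{\gamma}\,\u$ (this is the paper's pre-whitened vector, with $\|\h\|^2=\gamma$). Set $U_j\triangleq 1+\sum_{i>j}h_i^2$, so that $U_{-1}=1+\gamma$, $U_{T-1}=1$ and $h_j^2=U_{j-1}-U_j$. Your column-wise minimization then gives, explicitly,
\[
\min_{\F}\Big\{s\|\F\|_F^2+\gamma\,\u^\T\Cov_w\u\Big\}
=\frac{\sigma_n^2\sigma_z^2}{s}\gamma+\frac{\sigma_n^4}{s}\sum_{j=0}^{T-1}\Big(\frac{U_{j-1}}{U_j}-1\Big),
\]
attained at $F_{ij}=-\frac{\sigma_n^2}{s}\frac{h_ih_j}{U_j}$. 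Thus $\gamma$ is achievable iff this is at most $T\Ptx$ for some chain $1+\gamma=U_{-1}\ge U_0\ge\cdots\ge U_{T-1}=1$; every such chain comes from some $\h$. Since $\prod_j U_{j-1}/U_j=1+\gamma$, AM--GM bounds the sum below by $T((1+\gamma)^{1/T}-1)$. Equality holds iff $U_j/U_{j-1}\equiv\beta^2\triangleq(1+\gamma)^{-1/T}$, i.e.\ $h_j^2=(\beta^{-2}-1)U_j\propto\beta^{2j}$. The resulting threshold is strictly increasing in $\gamma$, which gives the following:
\begin{itemize}
\item The optimum is unique up to $\g\mapsto\mathbf{D}\g$, $\F\mapsto\mathbf{D}\F\mathbf{D}$ with $\mathbf{D}$ a diagonal $\pm1$ matrix; $\beta\mapsto-\beta$ is one instance.
\item $F_{ij}$ becomes $F_0\beta^{i-j}$.
\item Substituting $1+\gamma=\beta^{-2T}$ into the tight constraint and multiplying by $s\beta^{2T}/\sigma_n^2$ gives exactly $h(\beta)=0$, with $\SNR^\star=(\beta^\star)^{-2T}-1$.
\end{itemize}
It is this relation between $\gamma$ and $\beta$ that pins $\beta$, not the eigen-equation: under the GT ansatz, $\u(\beta)$ is an eigenvector of $\Cov_w$ for every $\beta$.

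\textbf{Comparison with the paper.} The paper argues at a projected-stationary point of $\mathcal V$. It gets the same column formula from $\mathrm{StrictlyLower}(\nabla_\F\mathcal V)=\O$. It then differences the ratios $[\Cov_w\h^\star]_t/h^\star_t$ of the eigen-equation to obtain $1/U_{t+1}-1/U_t=(h^\star_t)^2/U_t^2$, so $U_t/(h^\star_t)^2$ is constant. Finally it maximizes the closed-form $\SNR(\beta)$ using $\SNR'(\beta)\propto-h(\beta)$. That telescoping divides by $h^\star_t$, so it too tacitly excludes the degenerate points above. The AM--GM route covers them and yields genuine global optimality.

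\textbf{Minor slips.} First, $h(1)=T\Ptx(1+\sigma_z^2/\sigma_n^2)$, because the $\sigma_z^2$ terms cancel; this is harmless for your Descartes argument. Second, $\lambda_2^\star=0$ is a hypothesis of the statement, not a consequence of $\A=\I$. The feedback constraint $\|\g\|^2+\tr(\Cov_w(\F,\I))\le T\Prx$ still involves $(\g,\F)$.
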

\end{mdframed}

\begin{proof}
The projected KKT conditions under the passive restriction $\A=\I$ yield the geometric-Toeplitz structure, and the resulting scalar SNR optimization uniquely determines $\beta^\star$ up to sign through the condition $h(\beta)=0$. Each evaluation of $h(\beta)$ costs $O(\log T)$ via repeated squaring~\cite{knuth1998seminumerical}. Thus, for fixed accuracy, bisection finds $\beta^\star$ with $O(\log T)$ arithmetic complexity. For details, see Appendices~\ref{app:passive_GT_proof} and \ref{app:unique-toeplitz-optimal-ratio}. 
\end{proof}

Related geometric-Toeplitz structures were conjectured in Section IV-A in CL scheme~\cite{chance2011concatenated} and derived via an AM--GM argument in Section III-A of Agrawal~\emph{et al.}~\cite{agrawal2011iteratively}.
Here, the same structure is obtained from the KKT conditions of the passive specialization, and Theorem~\ref{thm:passive_GT_structure} shows that the resulting KKT-passive scheme is globally optimal for every finite \(T\). Although it retains a CL-style one-shot polynomial characterization, our derivation is fully non-asymptotic and therefore yields a different polynomial for the optimal passive structure. 


\section{SNR/MSE Converse Bounds and Achievability}
\label{sec:Achievability}


We now relate the proposed KKT-based design to fundamental SNR/MSE limits for causal linear feedback communication. In particular, for any causal linear feedback scheme, it is natural to ask what received SNR level is ultimately \emph{achievable} under the given forward- and feedback-link power budgets. For causal linear feedback schemes, the classical converse bounds are 

\vspace{-3.5mm}
\begin{align}
\hspace{-1em}
\SNR_{\rm active}  &\le \SNR_{\rm fw}+\SNR_{\rm fb}~{\small\text{(Elias--Butman)}}, \label{eq:SNR_active_UB}\\
\hspace{-1em}
\SNR_{\rm passive} &\le \SNR_{\rm fw}
+\frac{P_{\rm fw}}{\sigma_n^2+P_{\rm fw}}\SNR_{\rm fb}~{\small\text{(Chance--Love)}},   \label{eq:SNR_passive_UB}
\end{align}
where $\SNR_{\rm fw}\triangleq TP_{\rm fw}/\sigma_n^2$ and
$\SNR_{\rm fb}\triangleq TP_{\rm fb}/\sigma_z^2$.
The active converse is due to Elias and Butman~\cite{elias1967networks,butman1969general}.
The passive converse was given in error-exponent form by Kim \textit{et al.}~\cite{kim2011error}, and in received-SNR form for linear feedback by Chance and
Love~\cite{chance2011concatenated}. Derivations under our notation are provided in Appendix~\ref{app:converse bounds}.

By the MMSE-SNR identity \(\MMSE=(1+\SNR)^{-1}\)~\cite{li2006distribution}, these bounds imply the corresponding MMSE lower bounds. We next show that both converse bounds are asymptotically tight by proving that Algorithm~\ref{alg:optimal-active-feedback-scheme} achieves them as $T \to \infty$.   


\begin{mdframed} 
\begin{theorem} [Asymptotic Optimality of Algorithm~\ref{alg:optimal-active-feedback-scheme}] \label{thm:achievability} 
Let $\SNR_{\mathrm{KKT}}$ denote the received SNR achieved by Algorithm~\ref{alg:optimal-active-feedback-scheme}. Then it asymptotically achieves the corresponding converse bound:

(i) For active feedback,
\[
\lim_{T \to \infty} \frac{\SNR_{\mathrm{KKT}}}{\SNR_{\fw} + \SNR_{\fb}} = 1.
\]
(ii) Under the passive restriction $\A=\I$,
\[
\lim_{T \to \infty} \frac{\SNR_{\mathrm{KKT}}}{\SNR_{\fw} + \Big(\frac{\Ptx}{\sigma_n^2+\Ptx}\Big)\SNR_{\fb}}=1.
\]
\end{theorem}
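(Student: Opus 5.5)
Both limits follow from a squeeze. The upper side is the converse: by \eqref{eq:SNR_active_UB} and \eqref{eq:SNR_passive_UB}, every causal linear scheme, and hence $\SNR_{\rm KKT}$, has ratio at most $1$ in (i) and (ii). It therefore suffices to exhibit, for each $T$, a feasible explicit scheme whose SNR divided by the bound tends to $1$. Algorithm~\ref{alg:optimal-active-feedback-scheme} is monotone by Lemma~\ref{cor:convergence}, so it does at least as well as any such scheme used as initialization, or as well as the global passive optimum. This transfers the explicit lower bound to $\SNR_{\rm KKT}$. (Strictly, monotonicity gives domination only relative to the initialization, so I would state the result for the algorithm started at the explicit construction.)

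\emph{Passive part (ii).} By Theorem~\ref{thm:passive_GT_structure}, the passive KKT solution is globally optimal and has closed GT form. Its SNR is therefore an explicit function of $\beta^\star$, namely $\g^{\star\T}\Cov_w^{-1}\g^\star$ with $\Cov_w=\sigma_n^2(\I+\F^\star)(\I+\F^\star)^\T+\sigma_z^2\F^\star\F^{\star\T}$, where $\beta^\star$ is the root of $h(\beta)=0$.

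\begin{enumerate}
\item Compute this SNR exploiting the Toeplitz/geometric structure. The inverse of $\I+\F^\star$ is bidiagonal up to scaling, which should make the quadratic form tractable.
\item Perform the asymptotic analysis of the root. Writing $(\beta^\star)^{2T}=\rho_T$, the equation $h=0$ gives $\rho_T\,[\sigma_z^2+T\Ptx(1+\sigma_z^2/\sigma_n^2)+T\sigma_n^2-T\sigma_n^2/(\beta^\star)^2]=\sigma_z^2$. Setting $1/(\beta^\star)^2=1+\delta_T$ with $\delta_T=\Theta(1/T)$, the bracket behaves like $T\Ptx(1+\sigma_z^2/\sigma_n^2)-T\sigma_n^2\delta_T$ and $\rho_T\approx e^{-T\delta_T}$.
\item Solve for $c=\lim T\delta_T$. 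Substitute this into the SNR expression and compare the leading term, which is linear in $T$, with $T\Ptx/\sigma_n^2+\frac{\Ptx}{\sigma_n^2+\Ptx}T\Prx/\sigma_z^2$.
\end{enumerate}

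\emph{Active part (i).} Here no closed form is available, so I would construct a feasible active scheme whose SNR is $(\SNR_{\fw}+\SNR_{\fb})(1-o(1))$. The plan is to imitate the Elias--Butman tightness mechanism: the receiver uses $\A$ to feed back a scaled innovation, namely its current LMMSE error-estimate direction, with power $\Prx$ per use. The transmitter then subtracts its estimate of the receiver's error, so that the channel behaves like an SK-type recursion in which the feedback link contributes an additional $\Prx/\sigma_z^2$ of SNR per use.

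Concretely, I would take $\A$ and $\F$ in GT form with a second ratio parameter and choose the parameters so that both power constraints hold with equality. I would then bound $\Cov_w$ from above in the Loewner order to lower-bound $\g^\T\Cov_w^{-1}\g$, and show that the per-step SNR gain is $\Ptx/\sigma_n^2+\Prx/\sigma_z^2-O(1/T)$. Summing over $t$, the ratio tends to $1$.

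\emph{Main obstacle.} The hardest step is designing the active construction so that the feedback noise $\z$, which enters through $\F$, does not erode the $\SNR_{\fb}$ contribution. This requires the transmitter's correction gain to shrink appropriately as $T$ grows, trading a vanishing fraction of forward power. I would first attempt a two-phase scheme that keeps the boundary terms $O(1)$ while the bulk gains scale linearly in $T$.
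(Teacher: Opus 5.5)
Your outer structure matches the paper's: the converse gives the upper side, an explicit feasible scheme initializes Algorithm~\ref{alg:optimal-active-feedback-scheme}, and monotonicity from Lemma~\ref{cor:convergence} transfers the lower bound. Your caveat that the result is for the algorithm started at that scheme is also what the paper does.

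The gap is in (i). The one ingredient that carries the proof, a feasible active scheme with $\SNR\ge(\SNR_{\fw}+\SNR_{\fb})(1-o(1))$, is never constructed, and the mechanism as you describe it cannot supply it. By Lemma~\ref{lem:decoupled-inverse-covariance-bound} and the feedback constraint, every scheme has $\Gap\ge\tr(\A\Cov_w\A^\T)/\sigma_z^2\ge(\sigma_n^2/\sigma_z^2)\sum_t A_{tt}^2$, because the diagonal of $\A(\I+\F\A)$ equals that of $\A$. Suppose $v_t$ is $\alpha_t$ times the innovation of $y_t$ (unit weight on $y_t$, variance at most $\mathbb E[y_t^2]$), with $\mathbb E[v_t^2]=\Prx$ in every use. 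Then $A_{tt}^2\ge\Prx/\mathbb E[y_t^2]$. Cauchy--Schwarz with $\sum_t\mathbb E[y_t^2]\le T(\Ptx+\sigma_n^2)$ then gives $\Gap\ge\frac{\sigma_n^2}{\Ptx+\sigma_n^2}\SNR_{\fb}$, so you are stuck at the passive Chance--Love level. Likewise, any Toeplitz $\A$ with a $T$-independent diagonal $a_0\neq0$ loses $\sigma_n^2Ta_0^2/\sigma_z^2$.

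The missing idea is that the power constraints are block-total, so energy can be concentrated into a single feedback round. The paper takes $\A=a\mathbf e_1\mathbf e_1^\T$, $\F=c\mathbf e_2\mathbf e_1^\T$, $\g=u_T(\mathbf e_1-\mathbf e_2)$, with $a=\sqrt{2\Prx/\Ptx}$, $c=(\sigma_n^2/\sigma_z^2)a$, $u_T^2=T\Ptx/2-m$. The noise-feedback cost $a^2\sigma_n^2$ and the correction cost $c^2(\sigma_n^2a^2+\sigma_z^2)$ are then $T$-independent. A $2\times2$ inversion gives $\SNR=u_T^2(2+ac)/\sigma_n^2=B_T^{\rm EB}-2(1+\rho)\max\{1,\rho(1+2\rho)\}$, where $\rho=\Prx\sigma_n^2/(\Ptx\sigma_z^2)$. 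That is an $O(1)$ additive gap. Your ``two-phase'' fallback points in this direction, but it is the whole proof and has to be carried out.

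For (ii), the paper's appendix gives no separate argument, so your closed-form route is a genuine addition. Step~1 is unnecessary, since $\SNR(\beta)$ is already explicit in \eqref{eq:SNR_beta_closed}. Two corrections are needed.

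First, $\delta_T=\Theta(1/T)$ is inconsistent with $h(\beta^\star)=0$. If $T\delta_T\to c<\infty$, then $(\beta^\star)^{2T}\to e^{-c}$ while your bracket grows linearly in $T$. In fact $(\beta^\star)^{2T}\sim\sigma_z^2/\big(T\Ptx(1+\sigma_z^2/\sigma_n^2)\big)$ and $T\delta_T\sim\log T$. This gives $\SNR=T\Ptx(1/\sigma_n^2+1/\sigma_z^2)-O(\log T)$, which still suffices for the ratio.

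Second, that leading term contains no $\Prx$. Under $\A=\I$ the feedback power is forced to be $\mathbb E\|\y\|^2=\mathbb E\|\x\|^2+T\sigma_n^2=T(\Ptx+\sigma_n^2)$. So your comparison with $\SNR_{\fw}+\frac{\Ptx}{\sigma_n^2+\Ptx}\SNR_{\fb}$ closes only when $\Prx=\Ptx+\sigma_n^2$, which is the setting of the numerical section. You must state that identification explicitly.
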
   
\end{mdframed}

\begin{proof} 
The achievability proof uses the asymptotically optimal one-time feedback construction of~\cite{tung2026achieving} as a feasible initialization. The proof combines a constant-gap feasible initialization with the monotone SNR improvement of Algorithm~\ref{alg:optimal-active-feedback-scheme}. The feasible construction yields an $O(1)$ additive gap to the Elias--Butman
converse that is uniformly bounded in $T$, and the algorithm
cannot increase this gap.
Since the converse grows linearly with $T$, the normalized gap
is $O(1/T)$ and hence vanishes, establishing first-order asymptotic optimality. For details, see Appendix~\ref{app:achievability}.
\end{proof}

Thus, the proposed KKT-based construction asymptotically attains both the active Elias--Butman converse and the passive Chance--Love converse. In the next section, we numerically validate the resulting optimality properties.  

\section{Numerical Results} \label{sec:Numerical results}
In this section, we present numerical simulations that validate the proposed KKT-based active and passive feedback schemes and illustrate their performance relative to the SK scheme~\cite{schalkwijk1966coding1}, the CL scheme~\cite{chance2011concatenated}, and the average-total-power variant of the DP scheme (DP-Avg)~\cite{mishra2023linear}. We consider a wide range of feedback-noise variance ratios $\sigma_z^2/\sigma_n^2$, and different blocklengths $T$, under fixed average power constraints $\Ptx$ and $\Prx=\Ptx+\sigma_n^2$, ensuring a fair comparison between active and passive feedback schemes under the same feedback-link power budget. 


\subsection{SNR Comparison over Feedback–Noise}

We first reproduce the simulation setup in \cite[Fig.~10]{chance2011concatenated} to enable a fair comparison across all considered linear feedback schemes. We fix the per-symbol average transmit-power constraint to $\Ptx = 1$ and set $\sigma_n^2 = 1$, while sweeping the
feedback–noise variance ratio $\sigma_z^2/\sigma_n^2$. 

The received SNR is shown in Fig.~\ref{fig:SNR Comparison-rho=1}. Across all feedback–noise levels and both blocklengths $T \in \{5,10\}$, the proposed active-feedback scheme achieves the highest SNR among all compared schemes. By the MMSE--SNR identity~\cite{li2006distribution}, this also corresponds to the lowest MSE, showing a clear finite-blocklength gain over passive feedback designs. Among the noisy passive-feedback schemes, the proposed optimal passive scheme (KKT-Passive) essentially matches the best feasible SNR and remains nearly identical to the DP-Avg scheme, but with much lower design complexity.  


\begin{figure} [hbt!]
    \vspace{-3.0 mm}
    \centering
    \includegraphics[width=0.9\linewidth]{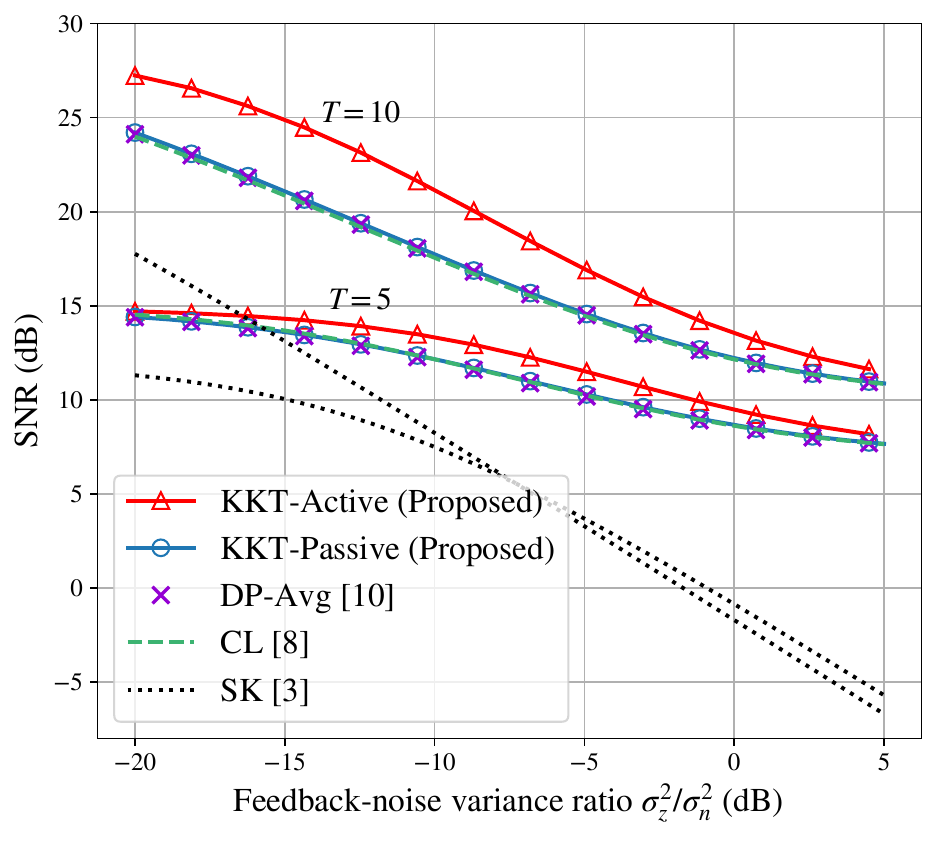}
    \vspace{-3.0 mm}
    \caption{SNR comparison versus feedback-noise variance ratio for \(T=5,10\).}
    \label{fig:SNR Comparison-rho=1}
\end{figure}

\begin{figure}[hbt!]
    \vspace{-3.0 mm}
    \centering
    \includegraphics[width=0.9\linewidth]{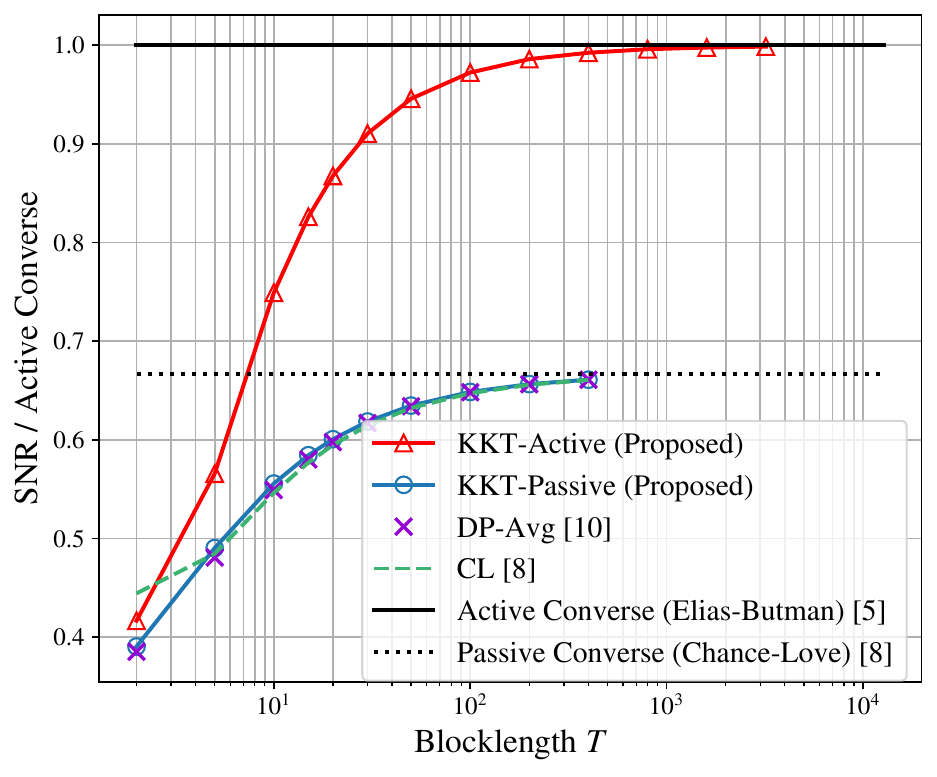}
    \vspace{-3.0 mm}
    \caption{Achievability of the SNR converse bound versus $T$ $(\sigma_z=\sigma_n=1.0)$.}
    \label{fig:Achievability-SNR}
\end{figure}

\subsection{Achievability of the Elias--Butman Converse}

For the achievability study, we set symmetric forward and feedback noise levels to $\sigma_n=\sigma_z=1.0$ and evaluate the schemes as the blocklength $T$ increases.

In Fig.~\ref{fig:Achievability-SNR}, we compare the proposed active and passive schemes with CL~\cite{chance2011concatenated}, and DP-Avg~\cite{mishra2023linear}. To directly assess how closely each design approaches the fundamental linear-feedback limit, the received SNR of every scheme is normalized by the Elias--Butman active converse, while the Chance--Love passive converse is also shown as a reference for the passive class. The proposed active scheme approaches the Elias--Butman converse bound quickly and achieves the best SNR/MSE performance across the tested blocklengths. In contrast, all passive schemes, including the proposed optimal passive scheme, CL, and DP-Avg, are limited by the passive Chance--Love converse. This confirms that the KKT-based design is essentially optimal within its class, while active feedback provides a strict additional SNR gain.

\section{Conclusion} \label{sec:Conclusion}
In this paper, we studied causal linear feedback schemes for AWGN channels with noisy output feedback and developed a constructive KKT-optimal active feedback design that asymptotically achieves the Elias--Butman SNR converse bound, thereby establishing MSE/SNR optimality over the general causal linear class. We further showed that the passive noisy output-feedback problem arises as a special case of the active formulation, and that its globally optimal solution admits a geometric-Toeplitz structure characterized by a one-shot polynomial root-finding procedure with \(O(\log T)\) complexity.

\bibliographystyle{IEEEtran}
\bibliography{Refs/Linear_Feedback, Refs/Nonlinear_Feedback, Refs/Wireless_Comm, Refs/CS_Algorithms}

\appendices

\section{MMSE--SNR Relationship}
\label{app:MMSE-SNR}

\paragraph{MMSE-SNR Relationship}
A causal linear active-feedback scheme is written as:
\begin{align}
&\mathbf x=\mathbf g\,\theta+ \F\A \mathbf{n} + \F \mathbf{z}, \\
&\mathbf v=\mathbf A\,\mathbf y,
\label{eq:active-linear-law}
\end{align}
where $\mathbf g\in\mathbb R^{T}$ is the precoding vector, $\mathbf F\in\mathbb R^{T\times T}$  is the transmitter feedback coding matrix (transmitter filter), and $\mathbf A\in\mathbb R^{T\times T}$ is the receiver feedback coding matrix (receiver filter). Causality requires $\F$ to be strictly lower triangular and $\A$ to be lower triangular. Equivalently, the received vector can be written as:
\begin{equation}
    \mathbf{y} = \g \theta + \w, \quad \w \triangleq (\I+\F\A)\n+\F\z,
    \label{eq:active-block-y}
\end{equation}
where $\w$ denotes the effective noise vector with covariance 
\begin{equation}
\Cov_w\triangleq \sigma_n^2\big(\mathbf I+\mathbf F\mathbf A\big)\big(\mathbf I+\mathbf F\mathbf A\big)^{\mathsf T} + \sigma_z^2\mathbf F\mathbf F^{\mathsf T}.
\end{equation}

We also associate to $\mathbf{q}$ the received (post-processed) SNR
\begin{equation}
    \mathrm{SNR}(\g, \F, \A, \q)
    := \frac{\mathbb{E}[(\mathbf{q}^\T\mathbf{g}\,\theta)^2]}
           {\mathbb{E}[(\mathbf{q}^\T\mathbf{w})^2]}
    = \frac{\mathbb{E}[\theta^2]|\mathbf{q}^\T\mathbf{g}|^2}
           {\mathbf{q}^\T \boldsymbol{\Sigma}_w \mathbf{q}},  \label{eq:SNR-q}
\end{equation}
where, by the effective signal-noise relation in~\eqref{eq:active-block-y}, the numerator $\mathbb{E}[(\mathbf q^\T \mathbf g \theta)^2]$ is the post-processed signal power and the denominator $\mathbb{E}[(\mathbf q^\T \mathbf w)^2]$ is the corresponding effective noise power.

For any linear estimator $\mathbf{q}$ we consider the linear estimator
$\hat{\theta} = \mathbf{q}^\T \mathbf{y}$ and define the mean–squared error as
\begin{align}
    \mathrm{MSE}(\g, \F, \A, \q)
    &:= \mathbb{E}\bigl[(\theta - \hat{\theta})^2\bigr]
      = \mathbb{E}\bigl[(\theta - \mathbf{q}^\T \mathbf{y})^2\bigr] \notag \\
    &= \mathbb{E}\!\left[\bigl((1-\mathbf{q}^\T \mathbf{g})\theta
         - \mathbf{q}^\T \mathbf{w}\bigr)^2\right] \notag \\
    &= (1-\mathbf{q}^\T \mathbf{g})^2 \,P_\theta
       + \mathbf{q}^\T \boldsymbol{\Sigma}_w \mathbf{q}.    \label{eq:MSE-q}
\end{align}

Since $\y=\g\theta + \w$ is jointly Gaussian with $\theta$, the MSE-optimal estimator is the LMMSE estimator. Then
\begin{equation}
    \begin{aligned}
    \mathbf{q}_{\mathrm{LMMSE}} &= (\boldsymbol{\Sigma}_{\theta \mathbf{y}} \boldsymbol{\Sigma}_{\mathbf{y}\mathbf{y}}^{-1})^\T = \boldsymbol{\Sigma}_{\mathbf{y}\mathbf{y}}^{-1}P_\theta\mathbf{g}
    = \frac{\boldsymbol{\Sigma}_w^{-1}\mathbf{g}}
           {1/P_\theta+\mathbf{g}^\T \boldsymbol{\Sigma}_w^{-1}\mathbf{g}}.
    \end{aligned}
    \label{eq:q_LMMSE}
\end{equation}
Note that \eqref{eq:q_LMMSE} also follows by differentiating \eqref{eq:SNR-q} with respect to $\q$ and setting the gradients to zero.

Substituting \eqref{eq:q_LMMSE} into \eqref{eq:SNR-q} and \eqref{eq:MSE-q}, the corresponding MMSE and received SNR are
\begin{equation}
    \begin{aligned}
    \mathrm{MMSE}(\g, \F, \A)&= \min_{\mathbf{q}}\mathrm{MSE}(\mathbf{q}) = \mathrm{MSE}(\mathbf{q}_{\mathrm{LMMSE}})\\ 
    &= \frac{1}{1/P_\theta+\mathbf{g}^\T \boldsymbol{\Sigma}_w^{-1}\mathbf{g}},
    \end{aligned}
    \label{eq:MMSE-g}
\end{equation}
\begin{equation}
    \SNR(\g, \F, \A)=P_\theta \; \mathbf{g}^\T
    \boldsymbol{\Sigma}_w^{-1}\mathbf{g}, \qquad \qquad \qquad
    \label{eq:SNR-g}
\end{equation}
so that the SNR and the MMSE are related by
\begin{equation}
    \mathrm{MMSE} = \frac{P_\theta}{1+\mathrm{SNR}}. \label{eq:MMSE-SNR}
\end{equation}

This MMSE-SNR relationship aligns with classical wireless communication theory~\cite{li2006distribution}. Moreover, because the received SNR depends only on the direction of the linear estimator, a simple rescaling of the (biased) MMSE estimator can always be used to enforce an unbiased one by an appropriate scalar rescaling~\cite{cioffi1995mmse}. The minimum-variance unbiased (MVU) and LMMSE estimators share the same direction and thus yield identical SNR. For ease of comparison with existing works, we therefore use the MSE, rather than the bias, as  the primary performance metric.

\paragraph{Causal Active Linear Feedback Design Problem (Equivalent SNR Maximization)} 
Since $P_\theta$ is fixed, minimizing $\MMSE(\g,\F,\A)$ is equivalent to maximizing
$\SNR(\g,\F,\A)$, and equivalently to maximizing the normalized objective $\g^\T \Cov_w^{-1}(\F,\A) \g$ used in Theorem~\ref{thm:P1-active-feedback}.

Finally, the average forward- and feedback-link power constraints can be
rewritten explicitly in terms of $(\g, \F, \A)$. From \eqref{eq:active-power-constraints},
\begin{align}
\mathbb{E}\!\left[\|\x\|^2\right]
&=
\mathbb{E}\!\left[\| \g \theta+ \F \A \n+ \F \z\|^2\right] \notag\\
&=
P_\theta\|\g\|^2 + \sigma_n^2\| \F\A\|_F^2 + \sigma_z^2\|\F\|_F^2
\le T P_{\mathrm{fw}}.
\label{eq:active-transmitter-power-constraint}
\end{align}
Also, since $\v=\A\y$ and $\y=\g\theta+\w$,
\begin{align}
\mathbb{E}\!\left[\|\v\|^2\right]
&=
\mathbb{E}\!\left[\|\A\y\|^2\right] \notag\\
&=
P_\theta\|\A\g\|^2 + \mathrm{tr}(\A\Cov_w \A^\T)
\le T P_{\mathrm{fb}}.
\label{eq:active-receiver-power-constraint}
\end{align}
Thus, after eliminating the decoder variable $q$ through the LMMSE solution,
the joint design problem reduces to
\[
\max_{\g, \F, \A}\; \g^\T \Cov_w^{-1}(\F,\A) \g
\]
subject to \eqref{eq:active-transmitter-power-constraint}, \eqref{eq:active-receiver-power-constraint}, with $\F$ strictly lower triangular and $\A$
lower triangular, which is exactly the formulation in Theorem~\ref{thm:P1-active-feedback}.
\hfill $\blacksquare$

\section{Proof of Theorem~\ref{thm:active-gq-eig}: Necessary optimality of $(\g,\q)$}
\label{app:active-gq-eig}
The KKT stationarity condition for the Lagrangian is equivalent to \eqref{eq:KKT-stationary-g} after rearranging terms. Specifically, it implies that $\g^\star$ is an eigenvector of the symmetric matrix $\Cov_w^{-1}-\lambda_2 \A^\T\A$ with eigenvalue $\lambda_1^\star$. 

Note that the objective $\g^{\mathsf T}\Cov_w^{-1}\g$ is increasing under scaling of $\g$.
Hence, whenever the problem is feasible, an optimum cannot lie in the interior of the transmit-power constraint; equivalently, complementary slackness~\eqref{eq:KKT-complementary slackness}
implies that if $\lambda_1^\star>0$ then
\[
\|\g^\star\|^2= \ctx.
\]
With the norm fixed, optimizing $\g$ reduces to choosing its \emph{direction}.
In particular, for fixed $\lambda_2^\star$, the $\g$-dependent part of the Lagrangian can be written as
\[
\mathcal{L}(\g;\lambda_1^\star,\lambda_2^\star)
= \g^{\mathsf T}\!\Big(\Cov_w^{-1}-\lambda_1^\star\mathbf I-\lambda_2^\star \A^{\mathsf T}\A\Big)\g+\text{const.}
\]
Therefore, maximizing the Lagrangian over the direction of $\g$ under the fixed-norm constraint is equivalent to maximizing a Rayleigh quotient of the
symmetric matrix $\Cov_w^{-1}-\lambda_1^\star \mathbf I-\lambda_2^\star \A^{\mathsf T}\A$.
By the Rayleigh-quotient characterization, any maximizer aligns with its dominant eigenvector.
Moreover, since shifting by $-\lambda_1^\star \mathbf I$ does not change eigenvectors, the maximizing direction is equivalently given by
$\mathbf v_{\max}\!\big(\Cov_w^{-1}-\lambda_2^\star \A^{\T}\A\big)$. Finally, the KKT stationarity condition $\nabla_{\g}\mathcal{L}(\g^\star;\lambda_1^\star,\lambda_2^\star)=\mathbf 0$ confirms that $\lambda_1^\star=\lambda_{\max}\!\left(\Cov_w^{-1}-\lambda_2^\star \A^{\T}\A\right)$ and establishes~\eqref{eq:active-g-eigpair}.

For the linear Gaussian model $\mathbf y=\g^\star\theta+\mathbf w$ with
$\mathbf w\sim\mathcal N(\mathbf 0,\Cov_w)$, the SNR-maximizing (equivalently MMSE-minimizing) linear estimator is the LMMSE combiner,
whose direction is proportional to $\Cov_w^{-1}\g^\star$ by~\eqref{eq:q_LMMSE}, establishing~\eqref{eq:active-q-propto}.

If the receiver constraint is inactive (i.e., $\mathbf A = \mathbf I$, passive feedback scheme), complementary slackness implies $\lambda_2^\star=0$, and the $\g$-subproblem reduces to
$\max_{\|\g\|^2\le \ctx}\ \g^{\T}\Cov_w^{-1}\g$.
By the Rayleigh-quotient characterization, the maximizer aligns with the dominant eigenvector of $\Cov_w^{-1}$. \qed

\section{Computing $\lambda_2^\star$ via Bisection Method}
\label{app:bisection-λ2-ratio}

With $(\F, \A)$ fixed, the only remaining step to obtain the dominant eigenvector $\v_{\max}(\Cov_w^{-1}-\lambda_2^\star \A^{\T}\A)$ is to numerically determine the optimal Lagrange multiplier $\lambda_2^\star$. The key idea is to convert the power constraints enforced by complementary slackness into a scalar equation in $\lambda_2$ involving the ratio of the residual power budgets.

\begin{mdframed}
\begin{theorem}[Computing $\lambda_2^\star$ via bisection method]
\label{thm:bisection-λ2-ratio}
Fix any feasible $(\F,\A)$ and let $\Cov_w\triangleq \Cov_w(\F,\A)\succ 0$. For each $\lambda_2\geq0$, let
\begin{equation} \label{eq:v_max(λ2)}
\v_{\max}(\lambda_2) \triangleq \v_{\max}\!\left(\Cov_w^{-1}-\lambda_2 \A^{\T}\A\right)
\end{equation}
denote a dominant eigenvector with unit norm.

Define the scalar ratio function
\begin{equation} \label{eq:ratio-r(λ2)}
r(\lambda_2)\ \triangleq\ \frac{\|\A \g(\lambda_2)\|^2}{\|\g(\lambda_2)\|^2}
\ =\\  \|\A \v_{\max}(\lambda_2)\|^2, 
\end{equation}
where $\g(\lambda_2)$ is any nonzero scaling of $\v_{\max}(\lambda_2)$.

If the receiver constraint is active at the optimum of~\eqref{eq:P2-g-subproblem} (equivalently, $\lambda_2^\star>0$),
then $\lambda_2^\star$ satisfies the scalar equation
\begin{equation}\label{eq:opt-ratio-r(λ2*)}
r(\lambda_2^\star)\ =\ \frac{\crx(\F,\A)}{\ctx(\F,\A)}.
\end{equation}
Since $r(\lambda_2)$ is nonincreasing in $\lambda_2$, the optimal $\lambda_2^\star$ can be found by bisection on~\eqref{eq:opt-ratio-r(λ2*)}.
\end{theorem}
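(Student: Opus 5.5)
The proof has two parts: the ratio equation \eqref{eq:opt-ratio-r(λ2*)}, which follows from complementary slackness, and the monotonicity of $r(\lambda_2)$, which is a perturbation argument for the top eigenvector.

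\emph{Ratio equation.} By Theorem~\ref{thm:active-gq-eig}, any global maximizer satisfies $\g^\star = c\,\v_{\max}(\lambda_2^\star)$ for some scalar $c\neq 0$.

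1. The transmit constraint is tight at the optimum. The objective $\g^\T\Cov_w^{-1}\g$ grows strictly under scaling $\g\mapsto s\g$ with $s>1$. Suppose the transmit constraint were slack. If the feedback constraint is also slack, scaling up contradicts optimality. If only the feedback constraint is active, then $\lambda_1^\star=0$. In that case $\lambda_1^\star=\lambda_{\max}(\Cov_w^{-1}-\lambda_2^\star\A^\T\A)$ is not automatic, and I would handle it by appealing to the KKT structure of Theorem~\ref{thm:active-gq-eig}. So in the regime of interest we take $\|\g^\star\|^2=\ctx$.

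2. The feedback constraint is tight. The hypothesis $\lambda_2^\star>0$ with complementary slackness \eqref{eq:KKT-complementary slackness} gives $\|\A\g^\star\|^2=\crx$.

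3. Dividing the two equalities, the factor $c^2$ cancels:
\[
r(\lambda_2^\star)=\frac{\|\A\g^\star\|^2}{\|\g^\star\|^2}=\frac{\crx}{\ctx}.
\]
This is \eqref{eq:opt-ratio-r(λ2*)}.

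\emph{Monotonicity.} Write $\M(\lambda)=\Cov_w^{-1}-\lambda\A^\T\A$, and let $\v_a,\v_b$ be unit top eigenvectors of $\M(a)$ and $\M(b)$ with $a<b$. By Rayleigh maximality,
\[
\v_a^\T\M(a)\v_a\ge \v_b^\T\M(a)\v_b \quad\text{and}\quad \v_b^\T\M(b)\v_b\ge \v_a^\T\M(b)\v_a.
\]
Add the two inequalities. The $\Cov_w^{-1}$ terms cancel, leaving
\[
(b-a)\big(\v_b^\T\A^\T\A\v_b-\v_a^\T\A^\T\A\v_a\big)\le 0,
\]
that is, $r(b)\le r(a)$. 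This is the standard monotonicity argument for parametric eigenproblems, in the same spirit as Danskin's theorem.

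\emph{Bisection.} Monotonicity makes $r(\lambda_2)-\crx/\ctx$ a nonincreasing scalar function. So once we have a bracket $[0,\bar\lambda]$ on which it changes sign, bisection locates the root. For the lower end, $\lambda_2^\star>0$ means the unconstrained top eigenvector violates the ratio, so $r(0)>\crx/\ctx$. For the upper end, as $\lambda\to\infty$ the top eigenvector is pushed toward $\ker\A$ (or toward the minimizer of $\A^\T\A$), so $r$ tends to its infimum, below the target when the problem is feasible.

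\emph{Main obstacle.} When $\lambda_{\max}$ is degenerate, $\v_{\max}(\lambda_2)$ is not unique and $r$ may jump. The fix is to interpret $r$ as set-valued, or to pick eigenvectors consistently. The sum-inequality argument then still gives monotonicity for any selection, so bisection brackets $\lambda_2^\star$. At the root, a suitable combination within the top eigenspace attains the ratio exactly.
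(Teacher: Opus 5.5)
Your route is the paper's: complementary slackness plus tightness of both constraints gives the ratio equation, and monotonicity comes from adding the two Rayleigh-maximality inequalities. Your monotonicity computation is exactly the paper's.

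The genuine gap is your Step~1. You have put your finger on the problem there but not closed it. The scaling argument only shows that \emph{at least one} constraint is tight. If the feedback constraint binds first, the transmit constraint is slack and $\lambda_1^\star=0$. What fails then is not the eigen-characterization of Theorem~\ref{thm:active-gq-eig}; it still holds, with $\lambda_{\max}(\Cov_w^{-1}-\lambda_2^\star\A^\T\A)=0$. What fails is the ratio equation itself, so no appeal to Theorem~\ref{thm:active-gq-eig} can rescue it.

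Concretely, take $\F=\O$ and $\A=\alpha\I$ with $\Prx/(\Ptx+\sigma_n^2)<\alpha^2<\Prx/\sigma_n^2$. Then $\Cov_w=\sigma_n^2\I$, $\ctx=T\Ptx$, $\crx=T(\Prx-\alpha^2\sigma_n^2)>0$, and the feedback constraint is the binding one. Hence $\lambda_1^\star=0$ and $\lambda_2^\star=1/(\alpha^2\sigma_n^2)>0$, yet $r(\lambda_2)\equiv\alpha^2>\crx/\ctx$ for every $\lambda_2$.

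The paper's proof bridges this step by saying the objective increases under scaling, so the transmit constraint must be tight. That is invalid for the same reason. The statement therefore needs the extra hypothesis $\lambda_1^\star>0$, i.e., both constraints active.

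Under that hypothesis your argument is complete. A sufficient condition is that $\A$ be singular: for $\mathbf{u}\in\ker\A$ we have $\mathbf{u}^\T(\Cov_w^{-1}-\lambda_2\A^\T\A)\mathbf{u}=\mathbf{u}^\T\Cov_w^{-1}\mathbf{u}>0$. So $\lambda_1^\star=\lambda_{\max}(\Cov_w^{-1}-\lambda_2^\star\A^\T\A)>0$, and complementary slackness forces $\|\g^\star\|^2=\ctx$.

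The same regime also breaks your upper bracket. Feasibility does not make $\lim_{\lambda\to\infty}r(\lambda)$ drop below $\crx/\ctx$; in the example it stays at $\alpha^2$. A failed bracket is precisely the signature of a slack transmit constraint. In that case $\lambda_2^\star$ is instead the root of $\lambda_{\max}(\Cov_w^{-1}-\lambda_2\A^\T\A)=0$, whose left side is strictly decreasing in $\lambda_2$ because $\A$ must then be invertible.

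Your remark on degenerate top eigenspaces (treating $r$ as set-valued and selecting the combination proportional to $\g^\star$) is a correct refinement that the paper omits. You also rightly avoid the paper's claim that a merely nonincreasing $r$ makes $\lambda_2^\star$ unique.
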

\end{mdframed}

Since the power ratio $\|\A \g\|^2/\|\g\|^2$ is invariant to any nonzero scaling of $\g^\star$ and the eigenvector has unit norm, \eqref{eq:ratio-r(λ2)} simplifies to $\|\A\v_{\max}(\lambda_2)\|^2$.

At an optimum of~\eqref{eq:P2-g-subproblem}, complementary slackness implies that when $\lambda_2^\star>0$, the receiver constraint is tight
\[
\|\A\g^\star(\lambda_2^\star)\|^2=\frac{\crx(\F, \A)}{P_\theta}.
\]
For the same optimum, the objective $\g^{\mathsf T}\Cov_w^{-1}\g$ is increasing under positive scaling of $\g$, so the transmit constraint must also be tight
\[
\|\g^\star(\lambda_2^\star)\|^2=\frac{\ctx(\F, \A)}{P_\theta}.
\]
Taking the ratio yields 
\[
\frac{\|\A\g^\star(\lambda_2^\star)\|^2}{\|\g^\star(\lambda_2^\star)\|^2} =  \frac{\crx(\F,\A)}{\ctx(\F,\A)}.
\]
By Theorem~\ref{thm:active-gq-eig}, the optimal precoder satisfies $\g^\star \propto \v_{\max}(\lambda_2^\star)$, hence the left-hand side equals $r(\lambda_2^\star)$, which proves the target equation~\eqref{eq:opt-ratio-r(λ2*)}.

The monotonicity of $r(\lambda_2)$ follows directly from the variational characterization of the Lagrangian-adjusted objective. Specifically, let
\[
J(\v;\lambda_2) \triangleq \v^\T \big(\Cov_w^{-1} - \lambda_2 \A^{\mathsf T}\A\big)\v.
\]
By definition, $\v_{\max}(\lambda_2)$ maximizes $J(\v;\lambda_2)$ over the unit sphere $\|\v\|=1$. Consider arbitrary multipliers $\lambda_b > \lambda_a \geq 0$, and let $\v_a=\v_{\max}(\lambda_a)$ and $\v_b=\v_{\max}(\lambda_b)$. The optimality of each eigenvector implies 
\begin{align*}
J(\v_b; \lambda_b) &\ge J(\v_a; \lambda_b), \\
J(\v_a; \lambda_a) &\ge J(\v_b; \lambda_a).
\end{align*}
Summing the two inequalities cancels the $\Cov_w^{-1}$ terms and yields 
\[
(\lambda_b-\lambda_a) \|\A \v_a\|^2\geq (\lambda_b-\lambda_a)\|\A \v_b\|^2.
\]
Since $\lambda_b-\lambda_a >0$, we conclude that $\|\A \v_a\|^2 \geq \|\A \v_b\|^2$, or equivalently $r(\lambda_a)\geq r(\lambda_b)$ for all $\lambda_b \geq \lambda_a$. This monotonicity guarantees that the solution $\lambda_2^\star$ to the scalar equation $r(\lambda_2^\star)= \crx/ \ctx$ is unique, allowing it to be efficiently found by the bisection method. \qed

\section{Proof of Envelope Gradients}
\label{app:update-FA-envelop}

In this appendix, we fix an arbitrary feasible pair $(\F,\A)$ and write
\begin{equation}
\Cov_w \triangleq \Cov_w(\F,\A)
= \sigma_n^2(\I+\F\A)(\I+\F\A)^{\T} + \sigma_z^2 \F\F^{\T} \succ 0.
\label{eq:app:Sigmaw-def}
\end{equation}
Let $(\g^\star,\lambda_1^\star,\lambda_2^\star)$ be the optimal primal--dual solution of the
$\g$-subproblem (P2) for this fixed $(\F,\A)$ (with $P_\theta=1$), and define the value function
\(
\mathcal V(\F,\A)\triangleq \max_{\g}\ \g^{\T}\Cov_w^{-1}\g
\)
subject to the two quadratic constraints in (P2).

\begin{mdframed}[
  innerleftmargin=5pt,
  innerrightmargin=5pt,
]
\begin{theorem}[Computing $\nabla \mathcal{V}(\F,\A)$ via the Envelope Theorem~\cite{bertsekas1999nonlinear}]
\label{thm:update-FA-envelop}
Let $(\mathbf g^\star, \lambda_1^\star, \lambda_2^\star)$ be the optimal primal-dual solution to the $\mathbf g$-subproblem~\eqref{eq:P2-g-subproblem} for a given feasible pair $(\mathbf F, \mathbf A)$ by Theorems~\ref{thm:active-gq-eig}-\ref{thm:bisection-λ2-ratio}. The value function $\mathcal{V}(\mathbf F, \mathbf A)$ is differentiable, and its gradients are given by the partial derivatives of the Lagrangian $\mathcal{L}$ in \eqref{eq:Lagrangian-g} evaluated at the optimum:
\begin{equation} \label{eq:grad-V-envelope}
\nabla_{\mathbf X} \mathcal{V}(\mathbf F, \mathbf A) = \nabla_{\mathbf X} \mathcal{L}(\mathbf g^\star; \mathbf F, \mathbf A, \lambda_1^\star, \lambda_2^\star),~\text{for } \mathbf{X} \in \{\mathbf F, \mathbf A\}.
\end{equation}

Specifically, define the rank-one auxiliary matrix
\[
\mathbf S^\star \triangleq  \Cov_w^{-1}\mathbf g^\star ( \Cov_w^{-1}\mathbf g^\star)^\T =\Cov_w^{-1}\mathbf g^\star{\mathbf g^\star}^{\T}\Cov_w^{-1}.
\]
Then the gradients admit the following closed forms:
\begin{align}
\nabla_{\F} \mathcal{V} &= -2\sigma_n^2\mathbf S^\star(\I+\F\A)\A^\T - 2\sigma_z^2\mathbf S^\star\F \nonumber\\
&-\lambda_1^\star[2\sigma_n^2\F\A\A^T+2\sigma_z^2\F] \nonumber\\
&-\lambda_2^\star[2\sigma_n^2\A^\T\A(\I+\F\A)\A^\T + 2\sigma_z^2\A^\T\A\F], \label{eq:gradV-F}
\\
\nabla_{\A} \mathcal{V} &= -2\sigma_n^2\F^\T \mathbf S^\star   (\I+\F\A) \nonumber\\
&-\lambda_1^\star[2\sigma_n^2\F^\T \F\A] \nonumber\\
&-\lambda_2^\star[2\A\Cov_w+2\sigma_n^2\F^\T \A^\T\A(\I+\F\A) + 2\A \g^\star{\g^\star}^\T]. \label{eq:gradV-A}
\end{align}

Let $\mathcal S$ denote the causal subspace of feasible pairs $(\F,\A)$ (strictly lower-triangular $\F$ and lower-triangular $\A$), and $\Pi_{\mathcal{S}}$ be the orthogonal projection onto $\mathcal{S}$ under the Frobenius inner product. We take the update direction as 

\begin{equation}
    (\Delta \F, \Delta \A) = \Pi_{\mathcal{S}}(\nabla_{\mathbf F} \mathcal{V}, \; \nabla_{\mathbf A} \mathcal{V}),
\end{equation}
i.e., $\Delta \mathbf F = \scalebox{0.9}{$\mathrm{StrictlyLower}$}(\nabla_{\mathbf F} \mathcal{V})$ and $\Delta \mathbf A = \scalebox{0.9}{$\mathrm{Lower}$}(\nabla_{\mathbf A} \mathcal{V})$.

\end{theorem}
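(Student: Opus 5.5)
The plan is to argue in two stages: first justify the envelope identity \eqref{eq:grad-V-envelope}, then obtain the closed forms by differentiating the Lagrangian \eqref{eq:Lagrangian-g} term by term in $\F$ and $\A$ while holding $(\g^\star,\lambda_1^\star,\lambda_2^\star)$ fixed.

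\emph{Envelope identity.} We would invoke the Danskin/envelope theorem for parametric constrained problems~\cite{bertsekas1999nonlinear}. Its hypotheses are:
\begin{itemize}
\item the objective and constraint functions of \eqref{eq:P2-g-subproblem} are $C^1$ jointly in $(\g,\F,\A)$ on the open set where $\Cov_w\succ0$, $\ctx>0$ and $\crx>0$;
\item the maximizer is unique up to the sign symmetry $\g\mapsto-\g$, which leaves every function involved unchanged;
\item the KKT multipliers are unique.
\end{itemize}
Uniqueness of the maximizer follows from Theorem~\ref{thm:active-gq-eig} provided the dominant eigenvalue of $\Cov_w^{-1}-\lambda_2^\star\A^\T\A$ is simple, which we assume generically. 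Uniqueness of $\lambda_2^\star$ follows from the monotone ratio equation in Theorem~\ref{thm:bisection-λ2-ratio}, and $\lambda_1^\star$ is then fixed as $\lambda_{\max}$. Under these conditions, $\mathcal V=\mathcal L(\g^\star;\lambda_1^\star,\lambda_2^\star)$, because complementary slackness makes the multiplier terms vanish. The total derivative of $\mathcal V$ then equals the partial derivative of $\mathcal L$ in $(\F,\A)$: the $\g$-sensitivity terms drop out by stationarity \eqref{eq:KKT-stationary-g}, and the multiplier-sensitivity terms drop out by complementary slackness. We regard this regularity bookkeeping as the main obstacle. We would state simplicity of $\lambda_{\max}$ explicitly as an assumption, since without it $\mathcal V$ is only directionally differentiable.

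\emph{Term-by-term differentiation.} Write $\mathcal L=\g^\T\Cov_w^{-1}\g-\lambda_1(\|\g\|^2-\ctx)-\lambda_2(\|\A\g\|^2-\crx)$ and treat the three pieces separately.
\begin{itemize}
\item \emph{Objective term.} Using $d(\Cov_w^{-1})=-\Cov_w^{-1}(d\Cov_w)\Cov_w^{-1}$, we get $d(\g^\T\Cov_w^{-1}\g)=-\tr(\S^\star d\Cov_w)$. Next, $d\Cov_w=\sigma_n^2[(d\F\,\A+\F\,d\A)(\I+\F\A)^\T+\text{transpose}]+\sigma_z^2[d\F\,\F^\T+\F\,d\F^\T]$. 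Since $\S^\star$ is symmetric, this gives $-2\sigma_n^2\S^\star(\I+\F\A)\A^\T-2\sigma_z^2\S^\star\F$ for $\F$ and $-2\sigma_n^2\F^\T\S^\star(\I+\F\A)$ for $\A$.
\item \emph{Forward budget term.} The term $+\lambda_1\ctx$ contributes $-\lambda_1\nabla(\sigma_n^2\|\F\A\|_F^2+\sigma_z^2\|\F\|_F^2)$. This yields $-\lambda_1^\star[2\sigma_n^2\F\A\A^\T+2\sigma_z^2\F]$ for $\F$ and $-\lambda_1^\star 2\sigma_n^2\F^\T\F\A$ for $\A$.
\item \emph{Feedback budget term.} The term $-\lambda_2(\|\A\g\|^2+\tr(\A\Cov_w\A^\T))$ gives $-2\lambda_2\A\g\g^\T$ from the first summand. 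The trace splits into an explicit-$\A$ part, $2\A\Cov_w$, and an implicit part through $\Cov_w$, namely $\tr(\A^\T\A\,d\Cov_w)$. That implicit part is handled exactly like the objective term, with $\S^\star$ replaced by $\A^\T\A$. This reproduces the bracketed $\lambda_2^\star$ terms in \eqref{eq:gradV-F}--\eqref{eq:gradV-A}.
\end{itemize}
We would check each contraction carefully for symmetric factors of $2$ and for transposes.

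\emph{Causal projection.} The feasible set is the linear subspace $\mathcal S$, which is a product of coordinate subspaces. The Frobenius-orthogonal projection onto $\mathcal S$ therefore zeroes the disallowed entries: the diagonal and upper part for $\F$, and the strictly upper part for $\A$. This gives the stated StrictlyLower and Lower operators. It also shows that $\Pi_{\mathcal S}\nabla\mathcal V$ is the gradient of $\mathcal V$ restricted to $\mathcal S$.
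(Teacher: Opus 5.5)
Your proposal follows the paper's proof essentially step for step. It freezes $(\g^\star,\lambda_1^\star,\lambda_2^\star)$ via the envelope theorem, differentiates the objective through $d(\Cov_w^{-1})=-\Cov_w^{-1}(d\Cov_w)\Cov_w^{-1}$ and the two budget terms separately, and then projects onto the causal subspace; all of the resulting closed forms match. Your explicit regularity hypotheses (simple dominant eigenvalue, unique multipliers, sign-invariance of $\nabla_{\F,\A}\mathcal L$) are more careful than the paper, which applies Danskin's theorem directly even though the feasible set of \eqref{eq:P2-g-subproblem} depends on $(\F,\A)$. However, uniqueness of $\lambda_2^\star$ does not follow from $r$ being merely nonincreasing (e.g.\ $\A=\alpha\I$ makes $r$ constant and $\mathcal V$ kinked when $\crx=\alpha^2\ctx$), so state it as an assumption as well.
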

\end{mdframed}

Since (P2) is a maximization over a nonempty compact feasible set, at least one maximizer $\g^\star$ exists.
Moreover, for fixed $(\g,\lambda_1,\lambda_2)$ the Lagrangian in \eqref{eq:Lagrangian-g}
is continuously differentiable in $(\F,\A)$ through $\Cov_w(\F,\A)$, $\ctx(\F,\A)$, and
$\crx(\F,\A)$.
Therefore, Danskin's theorem (envelope theorem)~\cite{bertsekas1999nonlinear} implies that the value function
$\mathcal V(\F,\A)$ is directionally differentiable and its directional derivative depends only on the
partial derivatives of the Lagrangian evaluated at maximizers, i.e.,
the terms involving the sensitivity $d\g^\star$ do not appear.
Then the envelope gradients are given by
\[
\nabla_{\X}\mathcal V(\F,\A)=\nabla_{\X}\mathcal L(\g^\star;\F,\A,\lambda_1^\star,\lambda_2^\star),
\qquad \X\in\{\F,\A\},
\]
namely, we evaluate the partial derivatives of $\mathcal L$ at an optimal primal--dual solution
without differentiating the maximizer $\g^\star(\F,\A)$ with respect to $(\F,\A)$.

\subsection{Matrix Differential and Trace Identities}
For real matrices $\F$ and $\A$, we compute gradients using differentials and the Frobenius inner product:
for a scalar function $\phi(\F,\A)$,
\[
d\phi = \langle \nabla_{\F}\phi,\, d\F\rangle + \langle \nabla_{\A}\phi,\, d\A\rangle,
~\text{where}~
\langle \mathbf{U},\mathbf{V}\rangle \triangleq \mathrm{tr}(\mathbf{U}^{\T}\mathbf{V}).
\]
We repeatedly use the Leibniz product rule and the resulting identity for a matrix inverse (obtained by applying the product rule to $\I=\Cov_w \Cov_w^{-1}$ for any invertible $\Cov_w$):
\begin{align}
d(\mathbf{U}\mathbf{V}) &= (d\mathbf{U})\mathbf{V}+\mathbf{U}(d\mathbf{V}), \qquad \text{(product rule)} \label{eq:app:dUV}\\
d(\Cov_w^{-1}) &= -\Cov_w^{-1}(d\Cov_w)\Cov_w^{-1}, \qquad \text{(inverse rule)} \label{eq:app:dinv}
\end{align}
and the cyclic property of the trace:
\begin{equation}
\mathrm{tr}(\mathbf{U}\mathbf{V}\mathbf{W})=\mathrm{tr}(\mathbf{W}\mathbf{U}\mathbf{V})=\mathrm{tr}(\mathbf{V}\mathbf{W}\mathbf{U}),~~ \text{(cyclic property)}
\end{equation}
where $\mathbf{U},\mathbf{V},\mathbf{W}$ are any conformable real matrices.

\subsection{Objective term: gradients of $\g^{\star\T}\Cov_w^{-1}\g^\star$}
Define 
\[
\S^\star \triangleq \Cov_w^{-1}\g^\star{\g^\star}^{\T}\Cov_w^{-1}=\Cov_w^{-1}\g^\star(\Cov_w^{-1}\g^\star)^\T.
\]
Using \eqref{eq:app:dinv} and treating $\g^\star$ as constant by the envelope theorem,
\begin{align}
&d\big(\g^{\star\T}\Cov_w^{-1}\g^\star\big)
= \mathrm{tr}\!\left({\g^\star}^{\T} d(\Cov_w^{-1}) \g^\star\right)
= \mathrm{tr}\!\left(d(\Cov_w^{-1})\,\g^\star{\g^\star}^{\T}\right)\nonumber\\
&= -\mathrm{tr}\!\left(\Cov_w^{-1}(d\Cov_w)\Cov_w^{-1}\,\g^\star{\g^\star}^{\T}\right)
= -\mathrm{tr}\!\left(\S^\star\, d\Cov_w\right).
\label{eq:app:dobj}
\end{align}
Next, write $\M\triangleq \I+\F\A$. From \eqref{eq:app:Sigmaw-def},
\begin{equation}
\begin{aligned}
&d\Cov_w
= \sigma_n^2\big((d\M)\,\M^{\T} + \M\,(d\M)^{\T}\big)
 + \sigma_z^2\big((d\F)\,\F^{\T}+\F\,(d\F)^{\T}\big),\\
\end{aligned}
\label{eq:app:dSigma}
\end{equation}
where $d\M = (d\F)\A+\F (d\A)$.

\paragraph{Derivative of $\g^{\star\T}\Cov_w^{-1}\g^\star$ with respect to $\F$}
Keeping only the $d\F$ terms in \eqref{eq:app:dSigma} gives  $d\M|_{d\F}=d\F \A$
\[
d\Cov_w|_{d\F}
=\sigma_n^2\Big((d\F)\A\M^{\T}+\M\big(d\F \A\big)^{\T}\Big)
+\sigma_z^2\Big((d\F)\,\F^{\T}+\F\,(d\F)^{\T}\Big).
\]
Substituting into \eqref{eq:app:dobj} and using trace cyclicity yields
\[
d\big(\g^{\star\T}\Cov_w^{-1}\g^\star\big)|_{d\F}
= \left\langle
-2\sigma_n^2\,\S^\star \M \A^{\T} - 2\sigma_z^2\,\S^\star \F,\ d\F
\right\rangle.
\]
Hence,
\begin{equation}
\nabla_{\F}\big(\g^{\star\T}\Cov_w^{-1}\g^\star\big)
= -2\sigma_n^2\,\S^\star(\I+\F\A)\A^{\T} - 2\sigma_z^2\,\S^\star\F.
\label{eq:app:gradF-obj}
\end{equation}

\paragraph{Derivative of  $\g^{\star\T}\Cov_w^{-1}\g^\star$ with respect to $\A$}
Keeping only the $d\A$ terms in \eqref{eq:app:dSigma} gives $d\M|_{d\A}=\F (d\A)$ and
\[
d\Cov_w|_{d\A}=\sigma_n^2\Big(\F (d\A)\M^{\T} + \M(\F (d\A))^{\T}\Big).
\]
Substituting into \eqref{eq:app:dobj} yields
\[
d\big(\g^{\star\T}\Cov_w^{-1}\g^\star\big)|_{d\A}
= \left\langle
-2\sigma_n^2\,\F^{\T}\S^\star \M,\ d\A
\right\rangle,
\]
and thus
\begin{equation}
\nabla_{\A}\big(\g^{\star\T}\Cov_w^{-1}\g^\star\big)
= -2\sigma_n^2\,\F^{\T}\S^\star(\I+\F\A).
\label{eq:app:gradA-obj}
\end{equation}

\subsection{Power Budget terms: gradients of $\ctx(\F,\A)$ and $\crx(\F,\A)$}
Recall
\begin{equation*}
\begin{aligned}
&\ctx(\F,\A)=T\Ptx-\sigma_n^2\|\F\A\|_F^2-\sigma_z^2\|\F\|_F^2,\\
&\crx(\F,\A)=T\Prx-\mathrm{tr}(\A\Cov_w\A^{\T}).
\end{aligned}
\end{equation*}

\paragraph{Gradients of $\ctx(\F,\A)$}
Using $d\|\X\|_F^2=2\,\mathrm{tr}(\X^{\T}d\X)$ and the product rule~\eqref{eq:app:dUV},
\begin{align}
\nabla_{\F}\ctx(\F,\A)
&= -2\sigma_n^2\,\F\A\A^{\T} - 2\sigma_z^2\,\F, \label{eq:app:gradF-ctx}\\
\nabla_{\A}\ctx(\F,\A)
&= -2\sigma_n^2\,\F^{\T}\F\A. \label{eq:app:gradA-ctx}
\end{align}

\paragraph{Gradients of $\crx(\F,\A)$}
Using the product rule~\eqref{eq:app:dUV} and starting from $\crx=T\Prx-\mathrm{tr}(\A\Cov_w\A^{\T})$,
\begin{align}
d\,\mathrm{tr}(\A\Cov_w\A^{\T})
&= \mathrm{tr}((2\A\Cov_w)^{\T}d\A)
 +\mathrm{tr}(\A^{\T}\A\,d\Cov_w),
\label{eq:app:dArSigmaA}
\end{align}
where we used $\Cov_w=\Cov_w^{\T}$ and trace cyclicity.
Substituting $d\Cov_w$ from \eqref{eq:app:dSigma} and collecting the $d\F$ and $d\A$ parts gives
\begin{align}
\nabla_{\F} \crx
&=-2\sigma_n^2\,\A^{\T}\A(\I+\F\A)\A^{\T} - 2\sigma_z^2\,\A^{\T}\A\F,
\label{eq:app:gradF-crx}\\
\nabla_{\A} \crx
&=-2\A\Cov_w - 2\sigma_n^2\,\F^{\T}\A^{\T}\A(\I+\F\A).
\label{eq:app:gradA-crx}
\end{align}

Note that $\nabla_{\F}\|\A\g^\star\|^2=\mathbf 0$ and
\begin{equation}
\nabla_{\A}\|\A\g^\star\|^2 = 2\A\g^\star{\g^\star}^{\T}.
\label{eq:app:grad-Ag}
\end{equation}

By \eqref{eq:Lagrangian-g} and \eqref{eq:grad-V-envelope}, we have
\begin{align}
\nabla_{\F}\mathcal V
&= \nabla_{\F}\big(\g^{\star\T}\Cov_w^{-1}\g^\star\big)
+ \lambda_1^\star \nabla_{\F}\ctx
+ \lambda_2^\star \nabla_{\F}\crx,
\label{eq:app:assembleF}\\
\nabla_{\A}\mathcal V
&= \nabla_{\A}\big(\g^{\star\T}\Cov_w^{-1}\g^\star\big)
+ \lambda_1^\star \nabla_{\A}\ctx
+ \lambda_2^\star \nabla_{\A}(\crx-\|\A\g^\star\|^2).
\label{eq:app:assembleA}
\end{align}

Finally, substituting \eqref{eq:app:gradF-obj}, \eqref{eq:app:gradA-obj},
\eqref{eq:app:gradF-ctx}--\eqref{eq:app:gradA-ctx},
\eqref{eq:app:gradF-crx}--\eqref{eq:app:gradA-crx}, and \eqref{eq:app:grad-Ag}
into \eqref{eq:app:assembleF}--\eqref{eq:app:assembleA} yields the closed-form expressions
\eqref{eq:gradV-F} and \eqref{eq:gradV-A} stated in Theorem~\ref{thm:update-FA-envelop}.
\qed

\section{Proof of Lemma~\ref{cor:convergence}: Stationary Convergence}
\label{app:convergence}
This appendix first proves the monotone improvement property of Algorithm~\ref{alg:optimal-active-feedback-scheme}, namely that both the reduced objective value and the achieved SNR are nondecreasing, and then establishes stationary convergence.

Since $\F\A$ is strictly lower triangular,  $(\I+\F\A)$ has ones on its diagonal and hence nonsingular. Therefore,
\[
\Cov_w \succeq  \sigma_n^2 \sigma^2_{\min}(\I+\F\A)\I,
\]
where $\sigma_{\min}(\I+\F\A)\triangleq \sqrt{\lambda_{\min}((\I+\F\A)(\I+\F\A)^\T)}>0$, denotes the smallest singular value of $\I+\F\A$.

From the forward- and feedback-link power constraints in \eqref{eq:P1-active-feedback}, and using $\Cov_w \succeq \sigma_n^2 \sigma^2_{\min}(\I+\F\A) \I$, which implies $\tr(\A\Cov_w\A^\T) \geq \sigma_n^2 \sigma^2_{\min}(\I+\F\A)\;\tr(\A\A^\T)$, we obtain
\[
\| \F \|_{F}^2 \leq T\Ptx/\sigma_z^2, \quad \| \A \|_{F}^2 \leq T\Prx/(\sigma_n^2\sigma^2_{\min}(\I+\F\A)). 
\]
Therefore, the feasible set of causal matrix pairs $(\F,\A)$ is compact. Because Algorithm~\ref{alg:optimal-active-feedback-scheme} maintains feasibility, the sequence $\{(\F^{(k)}, \A^{(k)})\}$ lies in this compact set and thus has a convergent subsequence. 

\paragraph{Monotone Improvement of Algorithm~\ref{alg:optimal-active-feedback-scheme}}

Let $X^{(k)}\triangleq(\F^{(k)},\A^{(k)})$ and define the projected gradient direction
\[
d^{(k)} \triangleq \Pi_{\mathcal S}\big(\nabla \mathcal V(X^{(k)})\big).
\]
Since $\mathcal V$ has an $L$-Lipschitz gradient on the feasible set, the standard smoothness inequality yields,
for any $\eta>0$,
\[
\mathcal V(X^{(k)}+\eta d^{(k)})
\ge \mathcal V(X^{(k)}) + \eta \langle \nabla \mathcal V(X^{(k)}), d^{(k)}\rangle
-\frac{L}{2}\eta^2 \|d^{(k)}\|_F^2 .
\]
Because $\Pi_{\mathcal S}$ is an orthogonal projection, we have
\[
\langle \nabla \mathcal V(X^{(k)}), d^{(k)}\rangle
= \langle \Pi_{\mathcal S}(\nabla \mathcal V(X^{(k)})), d^{(k)}\rangle
= \|d^{(k)}\|_F^2 .
\]
Setting $\eta=\eta_k$ and using $\eta_k\le 1/L$ gives

\begin{equation}
\mathcal V(X^{(k+1)})-\mathcal V(X^{(k)})
\ge \eta_k\Big(1-\frac{L}{2}\eta_k\Big)\|d^{(k)}\|_F^2 \ge 0.
\label{eq:nondecreasing value function}
\end{equation}
Hence $\{\mathcal V(X^{(k)})\}$ is nondecreasing. Since $\mathcal V$ is continuous on a compact feasible set,
it is bounded above, and thus $\{\mathcal V(X^{(k)})\}$ is convergent.

Moreover, let
$\g^{(k)}=\g^\star(\F^{(k)},\A^{(k)})$ denote the exact inner maximizer at iteration $k$.
By the definition of $\mathcal V$ and the complementary slackness conditions
\begin{align}
\lambda_1^{(k)}\big(\|\g^{(k)}\|^2-c_{\rm fw}(\F^{(k)},\A^{(k)})\big)=0,
\\
\lambda_2^{(k)}\big(\|\A^{(k)}\g^{(k)}\|^2-c_{\rm fb}(\F^{(k)},\A^{(k)})\big)=0,
\end{align}
the constraint terms vanish in the Lagrangian, and therefore
\begin{align}
\mathcal V(\F^{(k)},\A^{(k)})
&=
\mathcal L\big(\g^{(k)};\lambda_1^{(k)},\lambda_2^{(k)}\big)\\
&=
(\g^{(k)})^\T \Cov_w^{-1}(\F^{(k)},\A^{(k)})\g^{(k)}
= \SNR^{(k)}.
\end{align}
Thus, \eqref{eq:nondecreasing value function} also implies
\begin{equation}\label{eq:nondecreasing SNR}
\SNR^{(k+1)} \ge \SNR^{(k)}, \qquad \forall k,
\end{equation}
which shows the monotone SNR improvement of Algorithm~\ref{alg:optimal-active-feedback-scheme}.

\paragraph{Stationary Convergence of Algorithm~\ref{alg:optimal-active-feedback-scheme}}

Summing the above inequality over $k=0,\dots,K-1$ yields
\[
\sum_{k=0}^{K-1}\frac{\eta_k}{2}\|d^{(k)}\|_F^2
\le \mathcal V(X^{(K)})-\mathcal V(X^{(0)})
< \infty,
\]
and letting $K\to\infty$ gives $\sum_{k=0}^{\infty}\eta_k\|d^{(k)}\|_F^2<\infty$.
Using $\eta_k\ge \eta_{\min}>0$, we conclude that
\[
\sum_{k=0}^{\infty}\|d^{(k)}\|_F^2 < \infty,
\qquad \text{and hence}\qquad
\|d^{(k)}\|_F \to 0,
\]
i.e.,
\[
\Big\|\Pi_{\mathcal S}\big(\nabla \mathcal V(\F^{(k)},\A^{(k)})\big)\Big\|_F \to 0.
\]

Finally, let $(\F^{(k_j)},\A^{(k_j)})\to(\F^\star,\A^\star)$ be any convergent subsequence guaranteed by compactness.
By continuity of $\Pi_{\mathcal S}(\nabla \mathcal V(\cdot))$, we obtain
\[
\Pi_{\mathcal S}\big(\nabla \mathcal V(\F^\star,\A^\star)\big)=(\O, \O),
\]
so every accumulation point is projected-stationary.
With $\g^\star=\g^\star(\F^\star,\A^\star)$ updated optimally by Theorem~\ref{thm:update-FA-envelop},
$(\g^\star,\F^\star,\A^\star)$ satisfies the first-order KKT stationarity conditions of~\eqref{eq:P1-active-feedback},
up to the sign symmetry. This completes the proof of Lemma~\ref{cor:convergence}\qed

\section{Proof of Theorem~\ref{thm:passive_GT_structure}: Geometric-Toeplitz Optimality in the Passive Case}
\label{app:passive_GT_proof}

\begin{mdframed}
\begin{cor}[Geometric structure in the passive case]
\label{cor:passive_GT_structure}
Consider the passive restriction $\A=\I$ and $\lambda_2^\star=0$. Let $(\g^\star,\lambda_1^\star)$ be the optimal primal--dual solution to the $\g$-subproblem
\eqref{eq:P2-g-subproblem} for a given feasible $\F$, and let $\S^\star:=\Cov_w^{-1}\g^\star(\Cov_w^{-1}\g^\star)^\T$ be the rank-one matrix introduced in
Theorem~\ref{thm:update-FA-envelop}.
If $\F^\star$ is a KKT (projected) stationary point of $\mathcal{V}$ over the causal subspace, i.e.,
\begin{equation}
\mathrm{StrictlyLower}\!\big(\nabla_{\F}\mathcal{V}(\F^\star,\A=\I)\big)=\O.
\label{eq:passive_projected_stationarity}
\end{equation}
Then there exists a positive geometric-Toeplitz (GT) ratio $\beta\in(0,1)$ such that $\F^\star$ and the corresponding $\g^\star$ admit a GT form:
\begin{equation}
g_t^\star=g_0(\pm\beta)^t,\qquad
F_{t,j}^\star=
\begin{cases}
0,& t\le j,\\
F_0(\pm\beta)^{t-j},& t>j,
\end{cases}
\label{eq:GT_form_cor}
\end{equation}
where $F_0 = -\left(\frac{\sigma_n^2}{\sigma_n^2+\sigma_z^2}\right) \frac{1-\beta^2}{\beta^2}$, and $\pm\beta$ form a twin pair of SNR/MMSE-equivalent solutions (the SNR/MSE performance depends only on $\beta^2$).
\end{cor}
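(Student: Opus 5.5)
The plan is to make the projected stationarity condition \eqref{eq:passive_projected_stationarity} fully explicit using the closed-form gradient of Theorem~\ref{thm:update-FA-envelop}. Then combine it with the eigenvector characterization of Theorem~\ref{thm:active-gq-eig} to obtain a small set of scalar recursions whose only solution is geometric.

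\emph{Step 1 (specialize the gradient).} Setting $\A=\I$ and $\lambda_2^\star=0$ in \eqref{eq:gradV-F}, and writing $c\triangleq\sigma_n^2+\sigma_z^2$, gives
\[
\nabla_{\F}\mathcal V=-2\,\S^\star\big(\sigma_n^2\I+c\F\big)-2\lambda_1^\star c\,\F .
\]
By Theorem~\ref{thm:active-gq-eig}, $\g^\star$ is an eigenvector of $\Cov_w^{-1}$ with eigenvalue $\lambda_1^\star$. Hence $\Cov_w^{-1}\g^\star=\lambda_1^\star\g^\star$, and therefore $\S^\star=(\lambda_1^\star)^2\g^\star{\g^\star}^\T$.

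\emph{Step 2 (rank-one stationarity).} The strictly-lower part of the gradient must vanish. For $t>j$ this reads
\[
\lambda_1^\star\, g_t\, w_j + c\,F_{tj}=0,
\qquad
w_j\triangleq \sigma_n^2 g_j + c\,(\F^\T\g)_j .
\]
So $\F=-(\lambda_1^\star/c)\,\mathrm{StrictlyLower}(\g\w^\T)$ is a strictly-lower truncation of a rank-one matrix. Substituting this form back into the definition of $w_j$ (only $k>j$ contribute to $(\F^\T\g)_j$) gives
\[
w_j\Big(1+\lambda_1^\star E_j\Big)=\sigma_n^2 g_j,
\qquad
E_j\triangleq\sum_{k>j}g_k^2 .
\]
Consequently $F_{tj}=-\dfrac{\lambda_1^\star\sigma_n^2}{c}\,\dfrac{g_t g_j}{1+\lambda_1^\star E_j}$. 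In words, $\F$ is determined entirely by $\g$ and its tail energies.

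\emph{Step 3 (geometric ratio).} It remains to show that $g_{t+1}/g_t$ is a constant $\pm\beta$. If it is, then $F_{tj}\propto g_tg_j/(1+\lambda_1^\star E_j)$ depends only on $t-j$ precisely when $g_j^2/(1+\lambda_1^\star E_j)$ is constant in $j$. This is the step I expect to be the main obstacle. I would attack it with the eigen-equation $\Cov_w\g^\star=\g^\star/\lambda_1^\star$, written as $\sigma_n^2(\I+\F)(\I+\F)^\T\g+\sigma_z^2\F\F^\T\g=\g/\lambda_1^\star$, where $\F^\T\g$ and $(\I+\F)^\T\g$ are already expressed through $w_j$ from Step 2.

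The first attempt is to compare row $t$ and row $t+1$ of this identity. Because $\F$ is a truncated rank-one matrix, each row is a prefix sum, so differencing consecutive rows should yield a two-term recursion linking $g_{t+1}^2$, $g_t^2$ and $E_t$. The target is the invariant
\[
\frac{g_t^2}{1+\lambda_1^\star E_t}\equiv\kappa \quad\text{for all } t,
\]
which, together with $E_{t-1}=E_t+g_t^2$, forces $g_{t}^2/g_{t-1}^2=\beta^2$ with $1/\beta^2=1+\lambda_1^\star\kappa$. As a fallback, I would interpret $(\I+\F)^{-1}$ as the innovations (Kalman) filter of the receiver's estimate. Stationarity then says the transmitter's correction is a fixed-gain innovation, which yields a constant error-variance ratio $\beta^2$ per step.

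\emph{Step 4 (constants and sign).} With $g_t=g_0(\pm\beta)^t$, Step 2 gives $F_{tj}=F_0(\pm\beta)^{t-j}$. Using $\kappa(1/\beta^2-1)=\kappa\lambda_1^\star$ to evaluate $-\lambda_1^\star\sigma_n^2\kappa/(c\,\beta^{2})$ should produce $F_0=-\dfrac{\sigma_n^2}{c}\,\dfrac{1-\beta^2}{\beta^2}$. For the sign symmetry, take $\mathbf D=\mathrm{diag}((-1)^t)$. The map $(\g,\F)\mapsto(\mathbf D\g,\mathbf D\F\mathbf D)$ preserves causality, both power constraints, and $\g^\T\Cov_w^{-1}\g$, and it exchanges $\beta$ with $-\beta$. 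So the SNR/MMSE depend only on $\beta^2$. Finally, $\beta\in(0,1)$ follows because $E_t\ge 0$ and $\lambda_1^\star>0$ give $1/\beta^2=1+\lambda_1^\star\kappa>1$.
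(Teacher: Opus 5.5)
Your Steps 1--2 match the paper's column-alignment step, written in unwhitened variables. The paper uses $\mathbf h^\star=\sqrt{\lambda_1^\star}\,\g^\star$, so your $1+\lambda_1^\star E_j$ is its $U_j=1+H_j$, and your formula for $F_{tj}$ is its closed form for $F^\star_{t,j}$. Your plan for Step~3 is also exactly the paper's route: difference consecutive rows of $\Cov_w\g^\star=\g^\star/\lambda_1^\star$ and aim for $g_t^2/(1+\lambda_1^\star E_t)\equiv\kappa$; the paper proves the equivalent $U_t=R\,(h_t^\star)^2$.

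The gap is that Step~3, which is everything the corollary says beyond Step~2, is only asserted (``should yield''). It does close, cleanly in your own variables (with your $c=\sigma_n^2+\sigma_z^2$):
\begin{itemize}
\item Set $\mathbf p=(\I+\F)^\T\g$ and $\mathbf s=\F^\T\g$. Row $t$ of $\Cov_w\g$ is $\sigma_n^2p_t+\sum_{j<t}F_{tj}(\sigma_n^2p_j+\sigma_z^2s_j)$.
\item The bracket $\sigma_n^2p_j+\sigma_z^2s_j=\sigma_n^2g_j+c\,s_j$ is precisely your $w_j=\sigma_n^2g_j/(1+\lambda_1^\star E_j)$.
\item Insert $F_{tj}=-(\lambda_1^\star/c)\,g_tw_j$ and $p_t=g_t\big(\sigma_z^2/c+\sigma_n^2/(c(1+\lambda_1^\star E_t))\big)$, then divide by $g_t$. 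The eigen-equation becomes
\[
\frac{\sigma_n^4}{1+\lambda_1^\star E_t}-\lambda_1^\star\sum_{j<t}w_j^2=\frac{c}{\lambda_1^\star}-\sigma_n^2\sigma_z^2\qquad\text{for every }t .
\]
\item Differencing $t+1$ against $t$ gives $\frac{1}{1+\lambda_1^\star E_{t+1}}-\frac{1}{1+\lambda_1^\star E_t}=\frac{\lambda_1^\star g_t^2}{(1+\lambda_1^\star E_t)^2}$.
\item Combined with $E_t-E_{t+1}=g_{t+1}^2$, this is exactly $g_{t+1}^2/(1+\lambda_1^\star E_{t+1})=g_t^2/(1+\lambda_1^\star E_t)$.
\end{itemize}
The Kalman fallback is unnecessary.

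There are three smaller points.

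\textbf{Algebra slip in Step~4.} The invariant gives $1+\lambda_1^\star E_j=g_j^2/\kappa$. Hence $F_{tj}=-(\sigma_n^2/c)\,\lambda_1^\star\kappa\,g_t/g_j$ and $F_0=-(\sigma_n^2/c)\,\lambda_1^\star\kappa$, with $\lambda_1^\star\kappa=1/\beta^2-1$. There is no extra factor $\beta^{-2}$, and the identity you quote should read $1/\beta^2-1=\lambda_1^\star\kappa$.

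\textbf{Nonvanishing entries.} Dividing by $g_t$, and having $\kappa>0$ (needed for $\beta<1$), requires every $g_t^\star\neq0$. The paper divides by $h_t^\star$ under the same tacit assumption, and the assumption is genuinely needed. For $T\ge2$, take $\F^\star=\O$ and $\g^\star=\sqrt{T\Ptx}\,\mathbf e_k$ for any coordinate vector $\mathbf e_k$. Then $\Cov_w=\sigma_n^2\I$, so $\g^\star$ is a dominant eigenvector, and the gradient reduces to $-2\sigma_n^2\S^\star\propto\mathbf e_k\mathbf e_k^\T$, whose strictly-lower part is zero. The hypothesis holds, yet this point is not GT with any $\beta\in(0,1)$. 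You should add a nondegeneracy assumption such as $g_t^\star\neq0$ for all $t$.

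\textbf{Signs.} The invariant fixes only $g_{t+1}^2/g_t^2=\beta^2$, so the sign can change arbitrarily from step to step. Your $\mathbf D$-argument works verbatim for any $\mathbf D=\mathrm{diag}(\pm1)$. The conclusion therefore holds only up to such a diagonal sign matrix, not just the alternating one. The paper's ``global sign'' has the same looseness.
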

\end{mdframed}

This corollary revisits Conjecture 1 in Section IV-A of the CL scheme~\cite{chance2011concatenated} and the conclusion of Section III-A in Agrawal \emph{et al.}~\cite{agrawal2011iteratively} for the optimal passive (uncoded) feedback case. We work under the passive restriction $\A=\I$ and $\lambda_2^\star=0$ under our notation.
Let $(\g^\star,\F^\star)$ denote a KKT-stationary solution and define the effective noise covariance
\[
\Cov_w(\F,\I)\triangleq \sigma_n^2(\I+\F)(\I+\F)^{\T}+\sigma_z^2\F\F^{\T}\succ \O .
\]

From Theorem~\ref{thm:update-FA-envelop}, setting $\A=\I$ and $\lambda_2^\star=0$ in \eqref{eq:gradV-F} yields
\begin{equation}
\nabla_{\F}\mathcal{V}(\F,\I)
=
-2\sigma_n^2 \S^\star(\I+\F)
-2\sigma_z^2 \S^\star \F
-2\lambda_1^\star(\sigma_n^2+\sigma_z^2)\F .
\label{eq:gradV_F_passive}
\end{equation}
Therefore, the projected stationarity condition \eqref{eq:passive_projected_stationarity},
i.e., $\mathrm{StrictlyLower}\!\big(\nabla_{\F}\mathcal{V}(\F^\star,\I)\big)=\O$,
is equivalent to the entrywise relations
\begin{equation}
\Big[\sigma_n^2 \S^\star(\I+\F^\star)+\sigma_z^2 \S^\star \F^\star
+\lambda_1^\star(\sigma_n^2+\sigma_z^2)\F^\star\Big]_{t,j}=0,
\quad \forall\, t>j .
\label{eq:stationary_entrywise}
\end{equation}

As we show next: (a) the KKT condition for the precoder $\g$ implies that $\S^\star$ is rank-one and is generated by the pre-whitened vector $\h^\star \triangleq \Cov_w^{-1/2}\g^\star$. (b) combining this rank-one structure with the projected KKT stationarity condition for $\F$ forces every strictly-lower column of $\F^\star$ to align with the corresponding tail of the pre-whitened vector $\h^\star$. (c) This column-alignment property then implies that $\F^\star$, $\g^\star$, and $\h^\star$ have a geometric-Toeplitz form.

\paragraph{Reducing to a pre-whitened formulation}
We introduce the \emph{pre-whitened} vector
\[
\h^\star \triangleq \Cov_w(\F^\star,\I)^{-1/2}\g^\star,
\]
so that $\|\h^\star\|^2={\g^\star}^{\T}\Cov_w^{-1}\g^\star=\SNR$.
By Theorem~\ref{thm:active-gq-eig}, the KKT condition with respect to $\g$ reduces to
$\Cov_w^{-1}\g^\star=\lambda_1^\star\g^\star$ and thus
\[
\Cov_w^{-1}\g^\star = \sqrt{\lambda_1^\star}\,\h^\star.
\]

Hence
\[
\S^\star=\big(\Cov_w^{-1}\g^\star\big)\big(\Cov_w^{-1}\g^\star\big)^{\T}
=\lambda_1^\star\,\h^\star(\h^\star)^{\T}
\]
is rank-one and completely determined by the pre-whitened $\h^\star$.

\paragraph{Column alignment of $\F^\star$ with the pre-whitened $\h^\star$}
Substituting $\S^\star=\lambda_1^\star \h^\star{\h^\star}^{\T}$ into \eqref{eq:stationary_entrywise}
and canceling the common factor $\lambda_1^\star>0$ gives, for all $t>j$,
\begin{equation}
\Big[\sigma_n^2 \h^\star{\h^\star}^{\T}(\I+\F^\star)+\sigma_z^2 \h^\star{\h^\star}^{\T}\F^\star
+(\sigma_n^2+\sigma_z^2)\F^\star\Big]_{t,j}=0.
\label{eq:stationary_entrywise_pre-whitened}
\end{equation}

Fix a column index $j\in\{0,\ldots,T-2\}$ and define its strictly-lower part
\[
\f_j \in \mathbb{R}^{T-j-1},\quad (\f_j)_\ell \triangleq F^\star_{j+1+\ell,\,j},\ \forall~ \ell=0,\ldots,T-j-2.
\]
Also define the tail vector and its energy
\[
\h^{(j)} \triangleq [h^\star_{j+1},\ldots,h^\star_{T-1}]^{\T},
\qquad
H_j \triangleq \|\h^{(j)}\|^2,
\]
and the scalar correlation
\[
c_j \triangleq (\h^{(j)})^{\T}\f_j = \sum_{t>j} h^\star_t\,F^\star_{t,j}.
\]
Evaluating~\eqref{eq:stationary_entrywise_pre-whitened} at entry $(t,j)$, using
$[\h^\star{\h^\star}^{\T}]_{t,j}=h^\star_t h^\star_j$
and
$[\h^\star{\h^\star}^{\T}\F^\star]_{t,j}=h^\star_t \sum_{i>j} h^\star_iF^\star_{i,j}
= h^\star_t c_j$,
we obtain for every $t>j$:
\[
(\sigma_n^2+\sigma_z^2)\F^\star_{t,j} + \sigma_n^2 h^\star_t h^\star_j
+(\sigma_n^2+\sigma_z^2) h^\star_t c_j = 0.
\]
In vector form this reads
\[
(\sigma_n^2+\sigma_z^2)\f_j + \big(\sigma_n^2 h^\star_j + (\sigma_n^2+\sigma_z^2)c_j\big)\,\h^{(j)}=\mathbf{0},
\]
so $\f_j$ must be aligned with $\h^{(j)}$:
\[
\f_j = -\Big(\frac{\sigma_n^2}{\sigma_n^2+\sigma_z^2}h^\star_j + c_j\Big)\h^{(j)}.
\]
Taking the inner product with $\h^{(j)}$ gives
\[
c_j = (\h^{(j)})^{\T}\f_j
= -\Big(\frac{\sigma_n^2}{\sigma_n^2+\sigma_z^2}h^\star_j + c_j\Big)H_j,
\]
hence
\begin{equation}
c_j = -\frac{\sigma_n^2}{\sigma_n^2+\sigma_z^2} \frac{H_j}{1+H_j} h^\star_j.
\label{eq:cj_closed}
\end{equation}
Substituting back yields the closed form
\begin{equation}
\f_j^\star
= -\frac{\sigma_n^2}{\sigma_n^2+\sigma_z^2}\frac{h^\star_j}{1+H_j}\,\h^{(j)},
\qquad j=0,\ldots,T-2,
\label{eq:passive_fj_closed}
\end{equation}
and equivalently, for each $t>j$,
\begin{equation}
F^\star_{t,j}
= -\frac{\sigma_n^2}{\sigma_n^2+\sigma_z^2}\frac{h^\star_j}{1+H_j}\,h^\star_t .
\label{eq:passive_F_entry_closed}
\end{equation}
This is the precise ``column-alignment'' property: each strictly-lower column is aligned with the tail of pre-whitened $\h$.
\paragraph{Geometric form of $\h^\star$ and geometric-Toeplitz $\F^\star$}
It remains to show that the pre-whitened vector $\h^\star$ is geometric.
Since $\Cov_w^{-1}\g^\star=\lambda_1^\star\g^\star$ and $\Cov_w\succ\O$,
$\g^\star$ is also an eigenvector of $\Cov_w$ with eigenvalue $1/\lambda_1^\star$.
Moreover, because $\h^\star=\Cov_w^{-1/2}\g^\star$ and $\g^\star$ is an eigenvector of the nonsingular symmetric matrix $\Cov_w^{-1}$,
we have $\h^\star=\sqrt{\lambda_1^\star}\,\g^\star$ by the Spectral Theorem. Hence
\begin{equation}
\Cov_w\,\h^\star = \frac{1}{\lambda_1^\star}\,\h^\star .
\end{equation}
Equivalently, the ratios $\big[\Cov_w\h^\star\big]_t/h_t^\star$ are equal to $1/\lambda^\star$ for all $t$.

Define $U_j \triangleq 1+H_j = 1+\sum_{t>j}(h_t^\star)^2$ and note that $(h_{j+1}^\star)^2=U_{j}-U_{j+1}$.
From \eqref{eq:stationary_entrywise}--\eqref{eq:passive_F_entry_closed}
, for each $t>j$ we can write
\begin{equation}
F_{t,j}^\star = -\phi_j\, h_t^\star,
\qquad
\phi_j \triangleq \frac{\sigma_n^2}{\sigma_n^2+\sigma_z^2}\frac{h_j^\star}{U_j}.
\label{eq:phi_j_closed}
\end{equation}
Then $(\F^\star)^{\T}\h^\star$ has entries
\begin{equation}
\big[(\F^\star)^{\T}\h^\star\big]_j
=\sum_{t>j}F_{t,j}^\star h_t^\star
= -\phi_j \sum_{t>j}(h_t^\star)^2
= -\phi_j H_j
= c_j,
\end{equation}
and hence $((\I+\F^\star)^{\T}\h^\star)_j=h_j^\star+c_j$.

Now expand $\Cov_w\h^\star
=\sigma_n^2(\I+\F^\star)(\I+\F^\star)^{\T}\h^\star+\sigma_z^2\F^\star{\F^\star}^{\T}\h^\star$.
Using the row structure in \eqref{eq:stationary_entrywise}--\eqref{eq:passive_F_entry_closed}, one verifies that for each $t\ge 0$,
\[
\frac{\big[\Cov_w\h^\star\big]_t}{h_t^\star}
=
\sigma_n^2\frac{h_t^\star+c_t}{h_t^\star}
-\sigma_n^2\sum_{j<t}\phi_j (h_j^\star+c_j)
-\sigma_z^2\sum_{j<t}\phi_j c_j .
\]
Taking the difference of the ratio expression $\big[\Cov_w \h^\star\big]_t/h_t^\star$ at $t+1$ and $t$
telescopes the cumulative sums, yielding
\begin{equation}
\sigma_n^2\Big(\frac{h_{t+1}^\star+c_{t+1}}{h_{t+1}^\star}-\frac{h_t^\star+c_t}{h_t^\star}\Big) \\
=\phi_t\big(\sigma_n^2(h_t^\star+c_t)+\sigma_z^2 c_t\big) 
\label{eq:telescoping_identity}
\end{equation}
Next, substitute $c_t=-(\sigma_n^2/(\sigma_n^2+\sigma_z^2))(H_t/U_t)h_t^\star$ in \eqref{eq:cj_closed} and
$\phi_t=(\sigma_n^2/(\sigma_n^2+\sigma_z^2))(h_t^\star/U_t)$ in \eqref{eq:phi_j_closed}.
A direct simplification gives
\[
\frac{h_t^\star+c_t}{h_t^\star}=\frac{\sigma_z^2}{\sigma_n^2+\sigma_z^2}+\frac{\sigma_n^2}{\sigma_n^2+\sigma_z^2}\frac{1}{U_t},
~~
\sigma_n^2(h_t^\star+c_t)+\sigma_z^2 c_t=\frac{\sigma_n^2}{U_t}\,h_t^\star.
\]
Substituting these identities into the telescoping result \eqref{eq:telescoping_identity} yields the recursion
\begin{equation}
\frac{1}{U_{t+1}}-\frac{1}{U_t}=\frac{(h_t^\star)^2}{U_t^2},
~\forall \; t=0,\ldots,T-2.
\end{equation}
Since $U_t-U_{t+1}=(h_{t+1}^\star)^2$, this is equivalent to
\[
\frac{(h_{t+1}^\star)^2}{U_tU_{t+1}}=\frac{(h_t^\star)^2}{U_t^2}
\quad\Longleftrightarrow\quad
\frac{U_{t+1}}{(h_{t+1}^\star)^2}=\frac{U_t}{(h_t^\star)^2}.
\]
Hence $U_t/(h_t^\star)^2$ is constant in $t$; i.e., there exists $R>0$ such that
\begin{equation}
U_t = R\,(h_t^\star)^2,\qquad \forall \; t=0,\ldots,T-1.
\end{equation}
Using $U_{t+1}=U_t-(h_{t+1}^\star)^2$ together with the previous display gives
$(R+1)(h_{t+1}^\star)^2=R(h_t^\star)^2$, so
\begin{equation}
h_{t+1}^\star = \beta \, h_t^\star,\qquad
\beta \triangleq \sqrt{\frac{R}{R+1}}\in(0,1),
\end{equation}
up to a global sign choice. Therefore $\h^\star$ is geometric: 
\[
h_t^\star = h_0^\star \beta^t.
\]

Finally, substitute $U_j=R(h_j^\star)^2$ into \eqref{eq:phi_j_closed}:
\begin{align}
F_{t,j}^\star
&= -\frac{\sigma_n^2}{\sigma_n^2+\sigma_z^2}\frac{1}{R}\cdot\frac{h_t^\star}{h_j^\star}
= F_0\,\beta^{\,t-j}, \qquad t>j, \nonumber \\
\intertext{where we used $h_t^\star/h_j^\star=\beta^{t-j}$, and}
F_0 &\triangleq -\frac{\sigma_n^2}{\sigma_n^2+\sigma_z^2}\frac{1}{R} = -\frac{\sigma_n^2}{\sigma_n^2+\sigma_z^2}\frac{1-\beta^2}{\beta^2} <0.
\end{align}

Thus $\F^\star$ is Toeplitz along strictly-lower diagonals with a geometric ratio $\beta$, i.e., $\F^\star$ is a geometric-Toeplitz matrix. Since $\g^\star = 1/\sqrt{\lambda_1^\star} \h^\star$, the optimal precoder $\g^\star$ inherits the same geometric ratio:
\[ 
g^\star_t = g_0\beta^t \quad \forall t=0,\ldots,T-1.
\]
This completes the proof. \qed. 

\section{Proof of Theorem~\ref{thm:unique-toeplitz-optimal-ratio}: Uniqueness of Geometric-Toeplitz ratio $\beta$}
\label{app:unique-toeplitz-optimal-ratio}

\begin{mdframed}
\begin{theorem} (Uniqueness of the Optimal Toeplitz Ratio for the Passive Feedback) \label{thm:unique-toeplitz-optimal-ratio}
Consider the passive linear feedback problem under $\A=\I$ and $\lambda_2^\star=0$.
For any $T\ge2$, $\Ptx>0$, $\sigma_n^2>0$, and $\sigma_z^2\geq0$, there exists a unique Toeplitz ratio $\beta^* \in (0,1)$ that maximizes $\mathrm{SNR}(\beta)$, i.e.,
\begin{equation}
\exists!~ \beta^* = \arg\max_{\beta >0} \mathrm{SNR}(\beta) ~\in (0,1).
\end{equation}
Moreover, the optimal Toeplitz ratios $\pm\beta^*$ correspond to the unique
positive root $\beta^* \in (0,1)$ of the polynomial equation $h(\beta)=0$, defined as:
\begin{equation}
\hspace{-0.7em}
h(\beta)\triangleq \Big[
\sigma_z^2
+T\Ptx\left(1+\frac{\sigma_z^2}{\sigma_n^2}\right)
+T\sigma_n^2
\Big]\beta^{2T}
-\sigma_n^2T\beta^{2T-2}
-\sigma_z^2.
\label{eq:h(beta)=0}
\end{equation}
In the case of $T=1$, which corresponds to the no-feedback scenario, the SNR is constant, $\mathrm{SNR}(\beta)=\rho/\sigma_n^2$, regardless of $\beta$.
\end{theorem}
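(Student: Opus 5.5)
The plan is to reduce the passive problem to a one-parameter family using Corollary~\ref{cor:passive_GT_structure}, write the received SNR explicitly as a function of $\beta$, and show that the sign of $\frac{d}{d\beta}\SNR(\beta)$ on $(0,1)$ equals the sign of $-h(\beta)$. By Corollary~\ref{cor:passive_GT_structure}, every projected-stationary passive solution has $h_t^\star=h_0^\star\beta^t$, $U_t=1+H_t=R(h_t^\star)^2$ with $R=\beta^2/(1-\beta^2)$, and $F_{t,j}^\star=F_0\beta^{t-j}$. Since the feasible set is compact (Appendix~\ref{app:convergence}), a global maximizer exists and is stationary, so it suffices to optimize over this family.

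\emph{Step 1: SNR in terms of $(h_0,\beta)$.} Using $U_0=Rh_0^2$, I will write $\SNR=\|\h^\star\|^2=h_0^2+H_0=h_0^2(1+R)-1=\frac{h_0^2}{1-\beta^2}-1$.

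\emph{Step 2: eliminate $h_0$ via the power constraint.} First I will evaluate the eigenvalue $1/\lambda_1^\star$ from the $t=0$ row of $\Cov_w\h^\star=\h^\star/\lambda_1^\star$, using the ratio formula already derived:
\[
\frac{1}{\lambda_1^\star}=\sigma_n^2\Big(\frac{\sigma_z^2}{\sigma_n^2+\sigma_z^2}+\frac{\sigma_n^2}{(\sigma_n^2+\sigma_z^2)U_0}\Big).
\]
Then $\|\g^\star\|^2=\|\h^\star\|^2/\lambda_1^\star$. For the feedback matrix, the entries are geometric-Toeplitz, so
\[
\|\F^\star\|_F^2=F_0^2\sum_{k=1}^{T-1}(T-k)\beta^{2k}
\]
in closed form. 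Substituting both into the tight forward constraint $\|\g\|^2+(\sigma_n^2+\sigma_z^2)\|\F\|_F^2=T\Ptx$ gives a relation between $h_0^2$ and $\beta$. Here I would use $U_0=1+H_0$ to express $h_0^2$ through the geometric sum, i.e. $h_0^2\frac{1-\beta^{2T}}{1-\beta^2}$-type identities, together with the consistency condition $U_{T-1}=1=R h_{T-1}^2$. That condition gives $h_0^2=\frac{1-\beta^2}{\beta^{2T}}$ directly, so $\SNR=\beta^{-2T}-1$, up to how the power budget enters.

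\emph{Step 3: the budget becomes a scalar equation, which yields $h$.} Because $h_0$ is pinned by the endpoint condition, the forward power constraint turns into a scalar equation in $\beta$ (whose scale is fixed by $g_0$), or equivalently an expression for the SNR. I expect that equation, after clearing the denominators $(1-\beta^2)$ and $\beta^{2T}$, to reduce to $h(\beta)=0$. Concretely, maximizing $\SNR=\beta^{-2T}-1$ means taking the smallest admissible $\beta$, and admissibility is exactly $h(\beta)\ge0$. This bookkeeping, i.e. matching the sums $\sum_k(T-k)\beta^{2k}$ and the $\lambda_1^\star$ term to the coefficients $\sigma_z^2+T\Ptx(1+\sigma_z^2/\sigma_n^2)+T\sigma_n^2$ and $-\sigma_n^2T$, is the step I expect to be the main obstacle. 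If the endpoint route does not close cleanly, I would instead differentiate $\SNR(\beta)$ and identify the numerator of the derivative with $-h(\beta)$ times a positive factor.

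\emph{Step 4: unique root in $(0,1)$.} Write $h(\beta)=\beta^{2T-2}(a\beta^2-\sigma_n^2T)-\sigma_z^2$ with $a=\sigma_z^2+T\Ptx(1+\sigma_z^2/\sigma_n^2)+T\sigma_n^2$.
\begin{itemize}
\item Endpoints: $h(0)=-\sigma_z^2\le0$ and $h(1)=\sigma_z^2+T\Ptx(1+\sigma_z^2/\sigma_n^2)>0$.
\item Where $a\beta^2\le\sigma_n^2T$, we have $h<0$ (for $\sigma_z^2>0$).
\item Beyond $\sqrt{\sigma_n^2T/a}$, both factors $\beta^{2T-2}$ and $a\beta^2-\sigma_n^2T$ are positive and increasing, so $h$ is strictly increasing there.
\end{itemize}
Hence $h$ has exactly one positive root $\beta^*\in(0,1)$. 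In the case $\sigma_z^2=0$ the positive root is $\sqrt{\sigma_n^2T/a}$, which is again unique. Combined with the sign relation from Step 3, $\beta^*$ is the unique maximizer over $\beta>0$.

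\emph{Sign symmetry and $T=1$.} All quantities above depend only on $\beta^2$, which gives the twin solution $-\beta^*$. For $T=1$, $\F=0$, so $\SNR=T\Ptx/\sigma_n^2$ independently of $\beta$.
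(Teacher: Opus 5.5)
Your primary (``endpoint'') route is correct and closes cleanly, so you do not need the fallback. It is a genuinely different argument from the paper's.

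\emph{Why the endpoint route closes.} With $1/U_0=\beta^{2T-2}$, your $t=0$ row gives $1/\lambda_1^\star=\sigma_n^2(\sigma_z^2+\sigma_n^2\beta^{2T-2})/(\sigma_n^2+\sigma_z^2)$ and $\|\mathbf g^\star\|^2=(\beta^{-2T}-1)/\lambda_1^\star$. Insert this and $\|\mathbf F^\star\|_F^2=\bigl(\tfrac{\sigma_n^2}{\sigma_n^2+\sigma_z^2}\bigr)^2\bigl(\tfrac{T-1}{\beta^2}-T+\beta^{2T-2}\bigr)$ into the tight forward constraint, and multiply by $(\sigma_n^2+\sigma_z^2)/\sigma_n^2$. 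The left side collapses to $\sigma_z^2\beta^{-2T}+T\sigma_n^2\beta^{-2}-\sigma_z^2-T\sigma_n^2$, and clearing $\beta^{2T}$ gives exactly $h(\beta)=0$. The pinning leaves no scale freedom, and the forward constraint is tight at any stationary point. Hence the stationary ratios are exactly the zeros of $h$, and the ``smallest admissible $\beta$'' relaxation is harmless but unnecessary.

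\emph{How the paper argues instead.} The paper discards the pinning $U_t=R(h_t^\star)^2$ from the corollary's proof. It treats $\beta$ as a free parameter and checks that $\mathbf u(\beta)$ is an eigenvector of $\boldsymbol{\Sigma}_w$ for every $\beta$. It then scales $\mathbf g\parallel\mathbf u(\beta)$ to full power to get a closed-form $\mathrm{SNR}(\beta)$ on the whole GT family, and shows $\mathrm{SNR}'(\beta)=-c(\beta)h(\beta)$ with $c>0$ (this is your fallback). The two routes are linked by
\[
\mathrm{SNR}(\beta)=\beta^{-2T}-1+\frac{h(\beta)}{\beta^{2T-2}\bigl(\sigma_z^2\beta^2+\sigma_n^2\beta^{2T}\bigr)},
\]
which is precisely your ``admissible $\iff h(\beta)\ge 0$''.

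\emph{What each route buys.} The paper's route gives a landscape statement: $\mathrm{SNR}(\beta)$ is unimodal. That matches the $\arg\max$ wording directly, with no global-existence argument. Your route avoids differentiation and identifies the stationary set as $\{\pm\beta^*\}$. It also yields the optimum in closed form, $\mathrm{SNR}^\star=(\beta^*)^{-2T}-1$, i.e.\ $\mathrm{MMSE}^\star=(\beta^*)^{2T}$. For $\sigma_z^2=0$ this reduces to the noiseless-feedback value $(1+\Ptx/\sigma_n^2)^T-1$.

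\emph{One missing sentence.} To reach the ``unique maximizer of $\mathrm{SNR}(\beta)$'' phrasing, you should state explicitly:
\begin{itemize}
\item every GT design with $\mathrm{SNR}(\beta)>0$ is feasible;
\item every global maximizer is KKT-stationary;
\item so a $\beta'\neq\beta^*$ attaining the optimum would give a stationary point with ratio $\beta'$, a contradiction.
\end{itemize}

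\emph{Minor slip.} $h(1)=T\Ptx(1+\sigma_z^2/\sigma_n^2)$, since the $\sigma_z^2$ terms cancel. Your separate treatment of $\sigma_z^2=0$ is more careful than an argument relying on $h(0)<0$.
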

\end{mdframed}

For $T=1$ (no feedback), the optimal design reduces to open-loop transmission and the received SNR is constant,
$\SNR(\beta)=\Ptx/\sigma_n^2$, independent of $\beta$.

Assume $T\ge 2$.
Under the passive restriction $\A=\I$ and $\lambda_2^\star=0$, Corollary~\ref{cor:passive_GT_structure} implies that any KKT-stationary optimal
solution admits the geometric-Toeplitz (GT) form
\begin{equation}
g_t = g_0(\pm\beta)^t,\qquad
F_{t,j} =
\begin{cases}
0, & t\le j,\\
F_0(\pm\beta)^{t-j}, & t>j,
\end{cases}
\label{eq:gt and F_tj}
\end{equation}
with some $\beta\in(0,1)$ and
\begin{equation}
F_0 = -\Big(\frac{\sigma_n^2}{\sigma_n^2+\sigma_z^2}\Big)\frac{1-\beta^2}{\beta^2}.
\label{eq:F0_beta}
\end{equation}
Since the received SNR depends only on $\beta^2$, it suffices to consider $\beta\in(0,1)$.

\paragraph{Closed-form $\SNR(\beta)$}
With $\A=\I$, the effective noise covariance is
\[
\Cov_w (\F,\A=\I) = \sigma_n^2(\I+\F)(\I+\F)^{\T} + \sigma_z^2\F\F^{\T}.
\]
Moreover, the forward-link average power constraint becomes
\begin{equation}
\|\g\|^2 + (\sigma_n^2+\sigma_z^2)\|\F\|_F^2 \le T\Ptx,
\label{eq:fw_power_passive}
\end{equation}
where we take $P_\theta=1$ (as in the main text) and denote the per-symbol average transmit-power budget by $\Ptx$.
Since the objective is increasing under positive scaling of $\g$, \eqref{eq:fw_power_passive} is tight at optimum.

For the geometric-Toeplitz matrix $\F$ in \eqref{eq:gt and F_tj} and \eqref{eq:F0_beta}, its Frobenius norm admits the closed form obtained by diagonal-wise summation and standard geometric/weighted-geometric series identities:
\begin{equation}
\begin{aligned}
\|\F\|_F^2
&= \sum_{d=1}^{T-1}(T-d)F_0^2\beta^{2d}
\\
&=  TF_0^2\sum_{d=1}^{T-1}\beta^{2d} - TF_0^2\sum_{d=1}^{T-1}d\beta^{2d}
\\
&= \Big(\frac{\sigma_n^2}{\sigma_n^2+\sigma_z^2}\Big)^2
\Big(\frac{T-1}{\beta^2}-T+\beta^{2T-2}\Big).
\end{aligned}
\label{eq:F_fro_beta}
\end{equation}

Hence the available precoder energy is
\begin{equation}
\begin{aligned}
\|g\|^2
&= T\Ptx - (\sigma_n^2+\sigma_z^2)\|\F\|_F^2\\
&= T\Ptx - \frac{\sigma_n^4}{\sigma_n^2+\sigma_z^2}
\Big(\frac{T-1}{\beta^2}-T+\beta^{2T-2}\Big).
\label{eq:g_norm_from_power}
\end{aligned}
\end{equation}

Define the geometric vector $\u(\beta)\triangleq[1,\beta,\ldots,\beta^{T-1}]^{\T}$.
For the GT choice \eqref{eq:F0_beta}, one can verify that $\u(\beta)$ is an eigenvector of $\Cov_w$ with eigenvalue
\begin{equation}
\lambda_1(\beta)=\frac{\sigma_n^2\big(\sigma_z^2+\sigma_n^2\beta^{2T-2}\big)}{\sigma_n^2+\sigma_z^2}.
\label{eq:mu_beta}
\end{equation}
In the passive case, the KKT condition for $\g$ implies $\g$ aligns with the dominant eigenvector of $\Cov_w^{-1}$,
hence we should take $\g \parallel \u(\beta)$, which yields
\[
\SNR(\beta)= \g^{\T}\Cov_w^{-1}\g = \frac{\|\g\|^2}{\lambda_1(\beta)}.
\]
Substituting \eqref{eq:g_norm_from_power}--\eqref{eq:mu_beta} and simplifying gives the closed-form SNR as an explicit function of $\beta$ and the system parameters $(T,\Ptx,\sigma_n^2,\sigma_z^2)$:
\begin{equation}
\SNR(\beta)
=
\frac{
\dfrac{T\Ptx}{\sigma_n^2}\big(1+\dfrac{\sigma_z^2}{\sigma_n^2}\big)\beta^2
-T(1-\beta^2)+(1-\beta^{2T})}
{
\dfrac{\sigma_z^2}{\sigma_n^2} \beta^2 +\beta^{2T}
}.
\label{eq:SNR_beta_closed}
\end{equation}

\begin{algorithm} [t!]     
\caption{Optimal Passive Feedback Scheme (Geometric Toeplitz Scheme)}  \label{alg:GT-passive-feedback-scheme}  

\KwIn{Blocklength $T$, forward noise variance $\sigma_n^2$, feedback noise variance $\sigma_z^2$, and average per-symbol transmit-power budget  $\Ptx$.}
\KwOut{Optimal geometric Toeplitz ratio $\beta^\star$, and the GT scheme design ($\mathbf g$, $\mathbf F$, $\mathbf q^\star_{\mathrm{LMMSE}}$).}

\BlankLine
Define the root-finding polynomial (Theorem~\ref{thm:unique-toeplitz-optimal-ratio})
\[
h(\beta)\triangleq \big[\sigma_z^2+T\Ptx\big(1+\tfrac{\sigma_z^2}{\sigma_n^2}\big)+T\sigma_n^2\big]\beta^{2T}
-\sigma_n^2T\beta^{2T-2}-\sigma_z^2.
\]
\\

\BlankLine
Find the unique root $\beta^* \in (0,1)$ such that $h(\beta^*)=0$ (e.g., using bisection or Newton's method).

\BlankLine
Compute the scalar prefactor $F_0$ of the transmitter encoding matrix $\F$ (Corollary~\ref{cor:passive_GT_structure}):
\[
F_0 \gets -\Big(\frac{\sigma_n^2}{\sigma_n^2+\sigma_z^2}\Big)\frac{1-\beta^2}{\beta^2}.
\]

Compute the Frobenius norm in closed form as a function of $\beta$ (Appendix~\ref{app:unique-toeplitz-optimal-ratio}):
\[
\|\mathbf F\|_F^2 \gets
\Big(\frac{\sigma_n^2}{\sigma_n^2+\sigma_z^2}\Big)^2
\Big(\frac{T-1}{\beta^2}-T+\beta^{2T-2}\Big).
\]

Recover the available precoder energy from the forward-link power constraint:
\[
\|\mathbf g\|^2 \gets T\Ptx-(\sigma_n^2+\sigma_z^2)\|\mathbf F\|_F^2.
\]

Compute the initial coefficient $g_0$ of the geometric precoder $\g$: 
\[
g_0 \gets \sqrt{\ \|\mathbf g\|^2\cdot\frac{1-\beta^2}{1-\beta^{2T}}\ }\ >0.
\]

\BlankLine
Construct $\mathbf g$ and $\mathbf F$ from $(g_0,F_0,\beta)$ using the geometric Toeplitz structure in \eqref{eq:GT_form_cor}, and compute
$\mathbf q=\mathbf q_{\mathrm{LMMSE}}(\mathbf g,\mathbf F)$ as in \eqref{eq:q_LMMSE}.

\BlankLine
\BlankLine
\Return $\beta^\star,\ \g,\ \mathbf F,\ \mathbf q_{\mathrm{LMMSE}}$
\end{algorithm}

\smallskip
\paragraph{$\SNR(\beta)$ maximizer condition $\iff h(\beta)=0$}
Differentiating \eqref{eq:SNR_beta_closed} with respect to $\beta$ yields
\begin{equation}
\frac{d}{d\beta}\SNR(\beta)
=
-\frac{(T-1)\sigma_n^2\cdot 2\beta}{\big(\sigma_z^2\beta^2+\sigma_n^2\beta^{2T}\big)^2}\,h(\beta),
\label{eq:dSNR_dbeta_hbeta}
\end{equation}
where $h(\beta)$ is exactly the polynomial defined in \eqref{eq:h(beta)=0}:
\[
h(\beta)\triangleq \Big[
\sigma_z^2
+T\Ptx\left(1+\frac{\sigma_z^2}{\sigma_n^2}\right)
+T\sigma_n^2
\Big]\beta^{2T}
-\sigma_n^2T\beta^{2T-2}
-\sigma_z^2.
\]
Since the prefactor in \eqref{eq:dSNR_dbeta_hbeta} is strictly negative for $\beta\in(0,1)$, we have
$\SNR'(\beta)=0$ if and only if $h(\beta)=0$, and
\[
\mathrm{sign}\big(\SNR'(\beta)\big) = -\,\mathrm{sign}\big(h(\beta)\big).
\]

\paragraph{$h(\beta)$ has a unique root in $(0,1)$}
Let $r\triangleq \beta^2\in(0,1)$ and define $\tilde h(r)\triangleq h(\beta)$, i.e.,
\begin{align}
\tilde h(r)=A r^T - \sigma_n^2T r^{T-1}-\sigma_z^2,  \nonumber\\
\intertext{where} A\triangleq \sigma_z^2+T\rho(1+\dfrac{\sigma_z^2}{\sigma_n^2})+T\sigma_n^2 \;>\;0.
\end{align}
Then
\begin{equation}
\begin{aligned}
&\tilde h(0)=-\sigma_z^2<0,\\
&\tilde h(1)=A-\sigma_n^2T-\sigma_z^2 = T\rho(1+\dfrac{\sigma_z^2}{\sigma_n^2})>0,
\end{aligned}
\end{equation}
so at least one root lies in $(0,1)$.
Moreover,
\[
\tilde h'(r) = Tr^{T-2}\big(Ar-\sigma_n^2(T-1)\big),
\]
which is negative for $0<r<r_0$ and positive for $r>r_0$, where
$r_0\triangleq \sigma_n^2(T-1)/A\in(0,1)$.
Hence $\tilde h(r)$ is strictly decreasing on $(0,r_0)$ and strictly increasing on $(r_0,1)$, implying that
$\tilde h$ (and thus $h$) can cross zero at most once on $(0,1)$.
Therefore, $h(\beta)=0$ has a unique root $\beta^\star\in(0,1)$.

The unique root $\beta^\star$ can therefore be found by bisection.
Since each evaluation of $h(\beta)$ requires $O(\log T)$ arithmetic
operations via repeated squaring~\cite{knuth1998seminumerical},
the resulting root search has $O(\log T)$ arithmetic complexity
for fixed accuracy.

Finally, since $\SNR'(\beta)$ changes sign exactly once on $(0,1)$ by \eqref{eq:dSNR_dbeta_hbeta}, the SNR in
\eqref{eq:SNR_beta_closed} is strictly increasing for $\beta<\beta^\star$ and strictly decreasing for $\beta>\beta^\star$.
Thus $\beta^\star$ is the unique maximizer of $\SNR(\beta)$ on $(0,1)$, and the twin ratios $\pm\beta^\star$ are
SNR/MMSE-equivalent. 

We are now ready to state the globally optimal passive (uncoded) feedback scheme. By Corollary~\ref{cor:passive_GT_structure} and Theorem~\ref{thm:unique-toeplitz-optimal-ratio}, the optimal passive design is completely determined by the unique geometric-Toeplitz ratio $\beta^\star \in (0,1)$. Algorithm~\ref{alg:GT-passive-feedback-scheme} provides a closed-form realization of the globally optimal passive specialization of Algorithm~\ref{alg:optimal-active-feedback-scheme}. 
\qed

\section{Proof of SNR/MSE Converse Bounds }
\label{app:converse bounds}

The active SNR upper bound \eqref{eq:SNR_active_UB} is a direct restatement of the Elias--Butman SNR upper bound for the noisy feedback case under our notation. Elias derived it via a circuit-theoretic argument (see ``Noisy Feedback, General Case'' discussion in~\cite{elias1967networks}), and Butman re-derived it in his matrix formulation (see Sec.~IV. ``Noisy Feedback'' in~\cite{butman1969general}). 

We begin with a simple decoupling bound on $\Cov_w^{-1}(\F,\A)$ in \eqref{eq:P1-active-feedback}, which allows us to upper bound the received SNR by separate forward- and feedback-link terms.
\begin{mdframed} 
\begin{lemma} [Decoupled Inverse-Covariance Bound]
\label{lem:decoupled-inverse-covariance-bound}
For any strictly lower-triangular $\F$ and lower-triangular $\A$, let
\[
\Cov_w(\F,\A)=\sigma_n^2(\I+\F\A)(\I+\F\A)^{\T}+\sigma_z^2\F\F^{\T}.
\]
Then
\begin{equation}
\Cov_w^{-1}(\F,\A)\ \preceq\ \frac{1}{\sigma_n^2}\I+\frac{1}{\sigma_z^2}\A^{\T}\A,
\label{eq:Covw_inv_UB}
\end{equation}
equivalently,
\begin{equation}
\SNR(\g,\F,\A) \triangleq \g^{\T}\Cov_w^{-1}(\F,\A)\g
 \le \frac{\|\g\|^2}{\sigma_n^2}+\frac{\|\A\g\|^2}{\sigma_z^2},~\forall\,\g.
\label{eq:SNR_g_Ag_UB}
\end{equation}
\end{lemma}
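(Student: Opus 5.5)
The plan is to view $\Cov_w$ as the covariance of a linear image of the stacked white noise vector and to use a variational (minimum-energy) characterization of $\g^\T\Cov_w^{-1}\g$. Define the $T\times 2T$ matrix $\M\triangleq[\,\I+\F\A,\ \F\,]$ and the block-diagonal matrix $\boldsymbol{\Lambda}\triangleq\mathrm{diag}(\sigma_n^2\I,\sigma_z^2\I)$, so that
\[
\w=\M\begin{bmatrix}\n\\ \z\end{bmatrix}
\quad\text{and}\quad
\Cov_w=\M\boldsymbol{\Lambda}\M^\T .
\]
As noted in Appendix~\ref{app:convergence}, $\I+\F\A$ is unit lower triangular. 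Hence $\M$ has full row rank and $\Cov_w\succ 0$, assuming $\sigma_z^2>0$ so that the right-hand side of \eqref{eq:Covw_inv_UB} is defined.

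The first step is the inequality: for every $\u\in\mathbb R^{2T}$ with $\M\u=\g$,
\[
\g^\T\Cov_w^{-1}\g\le \u^\T\boldsymbol{\Lambda}^{-1}\u .
\]
I will prove it directly by Cauchy--Schwarz rather than quoting a projection argument. Write
\[
\g^\T\Cov_w^{-1}\g=\g^\T\Cov_w^{-1}\M\u=\big(\boldsymbol{\Lambda}^{1/2}\M^\T\Cov_w^{-1}\g\big)^\T\big(\boldsymbol{\Lambda}^{-1/2}\u\big).
\]
This is at most $\sqrt{\g^\T\Cov_w^{-1}\M\boldsymbol{\Lambda}\M^\T\Cov_w^{-1}\g}\,\sqrt{\u^\T\boldsymbol{\Lambda}^{-1}\u}$. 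The first factor equals $\sqrt{\g^\T\Cov_w^{-1}\g}$. Dividing through by it (the case $\g=\mathbf 0$ is trivial) and squaring gives the claim.

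The second step, which is the only real idea, is to exhibit a convenient feasible $\u$ that does not depend on $\F$. Take
\[
\mathbf K\triangleq\begin{bmatrix}\I\\ -\A\end{bmatrix}.
\]
Then $\M\mathbf K=(\I+\F\A)-\F\A=\I$, so $\u=\mathbf K\g$ satisfies $\M\u=\g$ for every $\g$. Substituting this choice gives
\[
\u^\T\boldsymbol{\Lambda}^{-1}\u=\frac{\|\g\|^2}{\sigma_n^2}+\frac{\|\A\g\|^2}{\sigma_z^2}.
\]
This is exactly \eqref{eq:SNR_g_Ag_UB}. Because it holds for all $\g$, it is equivalent to the matrix inequality \eqref{eq:Covw_inv_UB}:
\[
\Cov_w^{-1}\preceq \mathbf K^\T\boldsymbol{\Lambda}^{-1}\mathbf K=\frac{1}{\sigma_n^2}\I+\frac{1}{\sigma_z^2}\A^\T\A .
\]

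The main obstacle is spotting the cancellation $\M\mathbf K=\I$, which lets the bound decouple from $\F$ entirely. The remaining details are routine: positive definiteness of $\Cov_w$, and noting that causality of $\F$ and $\A$ is used only to guarantee invertibility.
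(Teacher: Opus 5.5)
Your proposal is correct and follows the paper's proof: the paper writes $\Cov_w=\B\B^{\T}$ with $\B=[\,\sigma_n(\I+\F\A)\ \ \sigma_z\F\,]$, invokes the minimum-energy representation of $\g^{\T}\Cov_w^{-1}\g$, and plugs in the feasible pair $\a=\g/\sigma_n$, $\b=-\A\g/\sigma_z$, which is your choice $\u=\mathbf K\g$ up to rescaling by $\boldsymbol{\Lambda}^{1/2}$. The only difference is that you prove the needed inequality half of that variational representation directly by Cauchy--Schwarz rather than quoting the min-norm identity, which makes the argument self-contained.
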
    
\end{mdframed}

\begin{proof}
Let $\M:=\I+\F\A$ and define $\B:=[\,\sigma_n\M\ \ \sigma_z\F\,]$, so that
$\Cov_w=\B\B^{\T} \succ 0$.
For any $\g$, the quadratic form admits the variational representation
\begin{equation}
\g^{\T}\Cov_w^{-1}\g
=\min_{\a,\b:\ \sigma_n\M\a+\sigma_z\F\b=\g}\big(\|\a\|^2+\|\b\|^2\big).
\label{eq:variational_rep_SNR}
\end{equation}
Choose the feasible pair $\a=\g/\sigma_n$ and $\b=-\A\g/\sigma_z$.
Then $\sigma_n\M\a+\sigma_z\F\b=\M\g-\F\A\g=\g$, hence \eqref{eq:variational_rep_SNR} yields
\[
\g^{\T}\Cov_w^{-1}\g
\le \frac{\|\g\|^2}{\sigma_n^2}+\frac{\|\A\g\|^2}{\sigma_z^2},
\]
which proves \eqref{eq:SNR_g_Ag_UB} and thus \eqref{eq:Covw_inv_UB}.   
\end{proof}

Applying the above inequality in Lemma~\ref{lem:decoupled-inverse-covariance-bound} to the active feedback problem gives,
\[
\SNR_{\rm active}
={\g^\star}^{\T}\Cov_w^{-1}(\F^\star,\A^\star)\g^\star
\le \frac{\|\g^\star\|^2}{\sigma_n^2}+\frac{\|\A^\star\g^\star\|^2}{\sigma_z^2}.
\]
Since $(\F^\star,\A^\star)$ is feasible, \eqref{eq:P1-active-feedback} power constraints imply $\|\g^\star\|^2\le T\Ptx$ and $\|\A^\star\g^\star\|^2\le T\Prx$. Hence,
\[
\SNR_{\rm active}\le \frac{T\Ptx}{\sigma_n^2}+\frac{T\Prx}{\sigma_z^2}=\SNR_{\fw} + \SNR_{\fb}
\]
, proving the Elias--Butman bound \eqref{eq:SNR_active_UB}.

For passive feedback, Kim \textit{et al.}~\cite{kim2011error}
derived the corresponding error-exponent converse, while
Chance and Love~\cite{chance2011concatenated} gave the received-SNR
upper bound for linear feedback. For the passive restriction $\A=\alpha\I$, the feedback encoder transmits a scaled raw observation
$v_t=\alpha y_t=\alpha(x_t+n_t)$, which is precisely the uncoded noisy-output-feedback model
considered in CL. Under the feedback power constraint,
\[
\mathbb{E}[v_t^2]=\alpha^2\mathbb{E}[y_t^2]
=\alpha^2\big(\mathbb{E}[x_t^2]+\sigma_n^2\big)\le \Prx,
\]
so $\alpha^2\le \Prx/(\mathbb{E}[x_t^2]+\sigma_n^2)$. The component of $v_t$ correlated with the
information-bearing signal $x_t$ has power $\alpha^2\mathbb{E}[x_t^2]$, hence
\[
\alpha^2\mathbb{E}[x_t^2]\le \Prx \frac{\mathbb{E}[x_t^2]}{\mathbb{E}[x_t^2]+\sigma_n^2}
\le \frac{\Ptx}{\Ptx+\sigma_n^2}\Prx,
\]
where we used $\mathbb{E}[x_t^2]\le \Ptx$. Therefore, at most the fraction
$\frac{\Ptx}{\Ptx +\sigma_n^2}$ of the feedback SNR budget can be devoted to the
information-bearing part, yielding Chance--Love bound~\eqref{eq:SNR_passive_UB}. In particular, the noisy passive output-feedback model in CL (and the passive bound in Lemma~3 of~\cite{chance2011concatenated}) corresponds to the unit-gain special case $\alpha=1$ (i.e., $\A=\I$), which can be viewed as a simplified setting in which no explicit feedback-link power constraint is imposed. \qed

\section{Proof of Theorem~\ref{thm:achievability}:
Achievability of the Converse Bound}
\label{app:achievability}
The achievability proof uses the asymptotically optimal one-time feedback construction of~\cite{tung2026achieving} as a feasible initialization for Algorithm~\ref{alg:optimal-active-feedback-scheme}. We thank the authors of~\cite{tung2026achieving} for sharing the technical details of the one-time feedback approach and its asymptotic analysis with us. In this appendix, we establish first-order asymptotic optimality for active feedback
through constant-order additive converse-gap bounds, and then prove
the convergence of the associated Lagrange multipliers.
Throughout this appendix, $P_\theta=1$ and
$\Ptx,\Prx,\sigma_n^2,\sigma_z^2>0$ are fixed as $T$ increases.

\subsection{Achievability of the Elias--Butman SNR Bound}

Define the Elias--Butman converse and the relative link quality by
\begin{align}
B_T^{\rm EB}
&\triangleq
\frac{T\Ptx}{\sigma_n^2}
+\frac{T\Prx}{\sigma_z^2},
\\
\rho
&\triangleq
\frac{\SNR_{\fb}}{\SNR_{\fw}}
=
\frac{\Prx\sigma_n^2}{\Ptx\sigma_z^2}>0.
\label{eq:appG:link-ratio}
\end{align}
For any feasible causal linear scheme, define its additive gap as
\begin{equation}
\Gap(T;\g,\F,\A)
\triangleq
B_T^{\rm EB}-\SNR(\g,\F,\A).
\label{eq:appG:pointwise-gap}
\end{equation}
The optimal additive gap over all feasible schemes is
\begin{equation}
\Gap^\star(T)
\triangleq
B_T^{\rm EB}
-\max_{(\g,\F,\A)\ {\rm feasible}}
\SNR(\g,\F,\A).
\label{eq:appG:optimal-gap}
\end{equation}

\subsubsection{Finite-Blocklength Nonattainment}

Define the residual link-power budgets after subtracting the
message-bearing terms:
\begin{equation}
\Delta_{\rm fw}
\triangleq T\Ptx-\|\g\|^2,
\qquad
\Delta_{\rm fb}
\triangleq T\Prx-\|\A\g\|^2.
\label{eq:appG:residual-link-powers}
\end{equation}
By Lemma~\ref{lem:decoupled-inverse-covariance-bound}
and the power constraints,
\begin{equation}
\Gap(T;\g,\F,\A)
\ge
\frac{\Delta_{\rm fw}}{\sigma_n^2}
+\frac{\Delta_{\rm fb}}{\sigma_z^2}
\ge0.
\label{eq:appG:residual-gap}
\end{equation}

Suppose that $\Gap(T;\g,\F,\A)=0$ for a finite $T$.
Then \eqref{eq:appG:residual-gap} forces
$\Delta_{\rm fw}=\Delta_{\rm fb}=0$.
The forward-link power constraint gives
\begin{equation}
\sigma_n^2\|\F\A\|_F^2
+\sigma_z^2\|\F\|_F^2=0,
\end{equation}
and therefore $\F=\O$. Consequently,
$\Cov_w=\sigma_n^2\I$.
On the other hand, $\Delta_{\rm fb}=0$ and the feedback-link
power constraint imply
\begin{equation}
0\ge
\tr(\A\Cov_w\A^{\T})
=
\sigma_n^2\|\A\|_F^2,
\end{equation}
so $\A=\O$. This contradicts
$\|\A\g\|^2=T\Prx>0$. Hence
\begin{equation}
\boxed{
\Gap(T;\g,\F,\A)>0
\quad\text{for every finite }T.
}
\label{eq:appG:finite-nonattainment}
\end{equation}

\subsubsection{Asymptotic Gap Lower Bound}

We next strengthen the preceding nonattainment result to a
quantitative lower bound.
Set $x\triangleq\|\F\A\|_F$.
The forward-link power constraint and
\eqref{eq:appG:residual-gap} imply
\begin{equation}
\Gap(T;\g,\F,\A)
\ge
x^2+\frac{\Delta_{\rm fb}}{\sigma_z^2}.
\label{eq:appG:master-lower}
\end{equation}
If $x\ge1$, the gap is at least one.
Suppose therefore that $0\le x<1$.
Here, $\|\cdot\|_2$ denotes the Euclidean norm for vectors
and the induced spectral norm for matrices, whereas
$\|\cdot\|_F$ denotes the Frobenius norm.
For a matrix $\mathbf M$ with singular values $\sigma_i(\mathbf M)$,
\[
\begin{aligned}
\|\mathbf M\|_2
&=\max_{\|\mathbf v\|_2=1}\|\mathbf M\mathbf v\|_2
=\sigma_{\max}(\mathbf M),\\
\|\mathbf M\|_F
&=\left(\sum_{i,j}|M_{ij}|^2\right)^{1/2}
=\left(\sum_i\sigma_i(\mathbf M)^2\right)^{1/2}.
\end{aligned}
\]
Since $\|\F\A\|_2\le\|\F\A\|_F$,
\begin{equation}
\sigma_{\min}(\I+\F\A)
\ge1-\|\F\A\|_2
\ge1-\|\F\A\|_F
=1-x.
\end{equation}
Hence, the effective noise covariance satisfies
\begin{equation}
\Cov_w\succeq\sigma_n^2(1-x)^2\I.
\end{equation}
Using
$\|\A\g\|^2\le\|\A\|_F^2\|\g\|^2
\le T\Ptx\|\A\|_F^2$, the feedback constraint yields
\begin{align}
\Delta_{\rm fb}
&\ge
\tr(\A\Cov_w\A^{\T})
\nonumber\\
&\ge
\sigma_n^2(1-x)^2\|\A\|_F^2
\nonumber\\
&\ge
\frac{\sigma_n^2(1-x)^2}{T\Ptx}
\|\A\g\|^2
\nonumber\\
&=
\frac{\sigma_n^2(1-x)^2}{T\Ptx}
(T\Prx-\Delta_{\rm fb}).
\end{align}
Define
\[
\epsilon_T\triangleq\frac{\sigma_n^2}{T\Ptx}.
\]
Solving the preceding scalar inequality gives
\begin{equation}
\frac{\Delta_{\rm fb}}{\sigma_z^2}
\ge
\frac{\rho(1-x)^2}
{1+\epsilon_T(1-x)^2}.
\end{equation}
Consequently, for $0\le x<1$,
\begin{align}
\Gap(T;\g,\F,\A)
&\ge
x^2+\frac{\rho}{1+\epsilon_T}(1-x)^2
\nonumber\\
&\ge
\frac{\rho}{1+\rho+\epsilon_T}.
\end{align}
The last inequality follows by minimizing the quadratic
expression over $x$.

Since the resulting bound is strictly less than one, it also
applies when $x\ge1$. Thus every feasible scheme satisfies
\begin{equation}
\Gap(T;\g,\F,\A)
\ge
L_T
\triangleq
\frac{\rho}{1+\rho+\epsilon_T}>0.
\label{eq:appG:finite-lower}
\end{equation}
Minimizing over feasible schemes and taking $T\to\infty$
therefore yields
\begin{equation}
\boxed{
\liminf_{T\to\infty}\Gap^\star(T)
\ge
\frac{\rho}{1+\rho}>0.
}
\label{eq:appG:asymptotic-lower}
\end{equation}

\subsubsection{Constant Upper Bound on the Asymptotic Gap}
Following the idea in~\cite{tung2026achieving}, we consider the following two tap feasible construction. Let $\mathbf e_1,\mathbf e_2\in\mathbb R^T$ denote the first two coordinate vectors. Consider the feasible construction
\begin{equation}
\begin{aligned}
\A_{\mathrm{2tap}}
&=a\mathbf e_1\mathbf e_1^{\T},\\
\F_{\mathrm{2tap}}
&=c\mathbf e_2\mathbf e_1^{\T},\\
\g_{\mathrm{2tap}}
&=u_T(\mathbf e_1-\mathbf e_2),
\end{aligned}
\label{eq:appG:two-construction}
\end{equation}
where
\begin{equation}
a=\sqrt{\frac{2\Prx}{\Ptx}},
\qquad
c=\frac{\sigma_n^2}{\sigma_z^2}a.
\end{equation}
The matrices $\A_{\mathrm{2tap}}$ and $\F_{\mathrm{2tap}}$ are lower
triangular and strictly lower triangular, respectively.
Write
\[
\Cov_{w,\mathrm{2tap}}
\triangleq
\Cov_w(\F_{\mathrm{2tap}},\A_{\mathrm{2tap}}).
\]

Define
\begin{align}
B_{\rm f}
&\triangleq
c^2(\sigma_n^2a^2+\sigma_z^2)
\nonumber\\
&=
2\sigma_n^2\rho(1+2\rho),
\\
m
&\triangleq
\max\left\{\sigma_n^2,\frac{B_{\rm f}}{2}\right\},
\\
u_T^2
&\triangleq
\frac{T\Ptx}{2}-m.
\end{align}
For every integer
\[
T\ge T_0
\triangleq
\max\left\{
2,\left\lfloor\frac{2m}{\Ptx}\right\rfloor+1
\right\},
\]
we choose the positive square root $u_T>0$.

The forward-link power satisfies
\begin{align}
&\|\g_{\mathrm{2tap}}\|^2
+\sigma_n^2\|\F_{\mathrm{2tap}}\A_{\mathrm{2tap}}\|_F^2
+\sigma_z^2\|\F_{\mathrm{2tap}}\|_F^2
\nonumber\\
&\qquad=
2u_T^2+B_{\rm f}
\nonumber\\
&\qquad=
T\Ptx-2m+B_{\rm f}
\le T\Ptx.
\end{align}
Moreover, the first diagonal entry of $\Cov_{w,\mathrm{2tap}}$
is $\sigma_n^2$. The feedback-link power is therefore
\begin{align}
&\|\A_{\mathrm{2tap}}\g_{\mathrm{2tap}}\|^2
+\tr\!\left(
\A_{\mathrm{2tap}}\Cov_{w,\mathrm{2tap}}\A_{\mathrm{2tap}}^{\T}
\right)
\nonumber\\
&\qquad=
a^2(u_T^2+\sigma_n^2)
\nonumber\\
&\qquad=
T\Prx+a^2(\sigma_n^2-m)
\le T\Prx.
\end{align}
Thus the construction is feasible under both block-average
power constraints.

Since $\sigma_z^2c^2=\sigma_n^2ac$, the nontrivial leading
$2\times2$ block of $\Cov_{w,\mathrm{2tap}}$ is
\begin{equation}
\Cov_{w,\mathrm{2tap}}^{(2)}
=
\sigma_n^2
\begin{bmatrix}
1&ac\\
ac&1+ac+a^2c^2
\end{bmatrix}.
\end{equation}
Direct inversion gives
\begin{equation}
\begin{aligned}
&\left(\Cov_{w,\mathrm{2tap}}^{(2)}\right)^{-1}
=
\frac{1}{\sigma_n^2(\sigma_n^2+\sigma_z^2c^2)}
\\
&\quad\times
\begin{bmatrix}
\sigma_n^2(1+a^2c^2)+\sigma_z^2c^2 & -\sigma_n^2ac\\
-\sigma_n^2ac & \sigma_n^2
\end{bmatrix}.
\end{aligned}
\end{equation}
Since the nonzero part of $\g_{\mathrm{2tap}}$ is
$u_T[\,1\ -1\,]^{\T}$ and the remaining coordinates are
uncoupled from this block, direct substitution gives
\begin{align}
\SNR_{\mathrm{2tap}}(T)
&\triangleq
\g_{\mathrm{2tap}}^{\T}
\Cov_{w,\mathrm{2tap}}^{-1}
\g_{\mathrm{2tap}}
\nonumber\\
&=
u_T^2
\begin{bmatrix}1&-1\end{bmatrix}
\left(\Cov_{w,\mathrm{2tap}}^{(2)}\right)^{-1}
\begin{bmatrix}1\\-1\end{bmatrix}
\nonumber\\
&=
u_T^2
\frac{\sigma_n^2(2+2ac+a^2c^2)+\sigma_z^2c^2}
{\sigma_n^2(\sigma_n^2+\sigma_z^2c^2)}
\nonumber\\
&=
u_T^2
\frac{\sigma_n^2(1+ac)(2+ac)}
{\sigma_n^4(1+ac)}
\nonumber\\
&=
\frac{u_T^2(2+ac)}{\sigma_n^2}
\nonumber\\
&=
\frac{2u_T^2}{\sigma_n^2}
+\frac{2\Prx u_T^2}{\Ptx\sigma_z^2}
\nonumber\\
&=
\frac{T\Ptx}{\sigma_n^2}
+\frac{T\Prx}{\sigma_z^2}
-
2m\left(\frac{1}{\sigma_n^2}
+\frac{\Prx}{\Ptx\sigma_z^2}\right)
\nonumber\\
&=
B_T^{\rm EB}-U_{\mathrm{2tap}},
\label{eq:appG:two-snr}
\end{align}
where we used $\sigma_z^2c^2=\sigma_n^2ac$,
$ac=2\rho$, and $u_T^2=T\Ptx/2-m$, and defined
\begin{align}
U_{\mathrm{2tap}}
&\triangleq
\frac{2m}{\sigma_n^2}(1+\rho)
\nonumber\\
&=
2(1+\rho)\max\{1,\rho(1+2\rho)\}.
\label{eq:appG:two-gap-constant}
\end{align}
This upper-bound constant depends only on $\rho$ and is independent
of $T$. Therefore,
\begin{equation}
\Gap^\star(T)\le U_{\mathrm{2tap}},
\qquad T\ge T_0,
\label{eq:appG:optimal-upper}
\end{equation}
and consequently
\begin{equation}
\boxed{
\limsup_{T\to\infty}\Gap^\star(T)
\le
2(1+\rho)\max\{1,\rho(1+2\rho)\}.
}
\end{equation}

\subsubsection{Implication for Algorithm~\ref{alg:optimal-active-feedback-scheme}}

Initialize the algorithm with
\[
(\F^{(0)},\A^{(0)})
=
(\F_{\mathrm{2tap}},\A_{\mathrm{2tap}}).
\]
Because $\g_{\mathrm{2tap}}$ is feasible for the inner
$\g$-subproblem at this pair, exact inner optimization gives
\[
\SNR_{\mathrm{Alg.1}}^{(0)}(T)
\ge
\SNR_{\mathrm{2tap}}(T).
\]
The outer iterations preserve feasibility and monotonically
improve the achieved SNR. Consequently, after any number
of such iterations,
\begin{equation}
\SNR_{\mathrm{Alg.1}}(T)
\ge
\SNR_{\mathrm{Alg.1}}^{(0)}(T)
\ge
B_T^{\rm EB}-U_{\mathrm{2tap}}.
\end{equation}
Define the additive gap of the algorithm as
\[
\Gap(T)
\triangleq
B_T^{\rm EB}-\SNR_{\mathrm{Alg.1}}(T).
\]
Combining feasibility, \eqref{eq:appG:finite-lower}, and the
preceding comparison yields
\begin{equation}
0<L_T
\le
\Gap^\star(T)
\le
\Gap(T)
\le
U_{\mathrm{2tap}},
\qquad T\ge T_0.
\label{eq:appG:algorithm-gap-bounds}
\end{equation}
Thus
\begin{equation}
\boxed{
\Gap^\star(T)=\Theta(1),
\qquad
\Gap(T)=\Theta(1).
}
\label{eq:appG:constant-gap}
\end{equation}
Since $B_T^{\rm EB}=\Theta(T)$,
\begin{equation}
0\le
1-\frac{\SNR_{\mathrm{Alg.1}}(T)}{B_T^{\rm EB}}
\le
\frac{U_{\mathrm{2tap}}}{B_T^{\rm EB}}
\longrightarrow0.
\label{eq:appG:normalized-optimality}
\end{equation}
This establishes first-order asymptotic attainment of the
Elias--Butman converse, although the additive gap remains
bounded away from zero.

\subsection{Convergence of Lagrangian Multipliers}

For each $T\ge T_0$, let $(\F^\star,\A^\star)$ be an
accumulation point generated by
Algorithm~\ref{alg:optimal-active-feedback-scheme},
initialized as above, that is projected-stationary on the
causal constraint set.
For fixed $(\F^\star,\A^\star)$, let $\g^\star$ be the optimal
solution of the inner $\g$-subproblem, with associated KKT
multipliers $(\lambda_1^\star,\lambda_2^\star)$.
Their dependence on $T$ is suppressed for brevity, and we write
$\Cov_w=\Cov_w(\F^\star,\A^\star)$.

From the inner KKT condition \eqref{eq:KKT-stationary-g}
and Theorem~\ref{thm:active-gq-eig}, we have
\begin{equation}
\begin{aligned}
\SNR_{\mathrm{Alg.1}}
&\triangleq
{\g^\star}^{\T}\Cov_w^{-1}\g^\star
\\
&=
\lambda_1^\star\|\g^\star\|^2
+\lambda_2^\star\|\A^\star\g^\star\|^2.
\end{aligned}
\label{eq:appG:snr-multiplier-identity}
\end{equation}
Moreover, the converse bound and
Lemma~\ref{lem:decoupled-inverse-covariance-bound} yield
\[
\SNR_{\mathrm{Alg.1}}
\le
\frac{T\Ptx}{\sigma_n^2}
+\frac{T\Prx}{\sigma_z^2}
=
\SNR_{\fw}+\SNR_{\fb}.
\]

Recall the residual power budgets in
\eqref{eq:residual forward-power budgets} and
\eqref{eq:residual feedback-power budgets}:
\begin{align*}
\ctx(\F^\star,\A^\star)
&=
T\Ptx
-\sigma_n^2\|\F^\star\A^\star\|_F^2
-\sigma_z^2\|\F^\star\|_F^2,
\\
\crx(\F^\star,\A^\star)
&=
T\Prx
-\tr\!\left(
\A^\star\Cov_w{\A^\star}^{\T}
\right).
\end{align*}
By complementary slackness for the inner problem,
\begin{align*}
\lambda_1^\star\|\g^\star\|^2
&=
\lambda_1^\star\ctx(\F^\star,\A^\star),
\\
\lambda_2^\star\|\A^\star\g^\star\|^2
&=
\lambda_2^\star\crx(\F^\star,\A^\star).
\end{align*}

At this stationary point, the residual gaps defined in
\eqref{eq:appG:residual-link-powers} satisfy
\begin{align*}
\Delta_{\rm fw}
&=
T\Ptx-\|\g^\star\|^2
\\
&\ge
T\Ptx-\ctx(\F^\star,\A^\star)
\\
&=
\sigma_n^2\|\F^\star\A^\star\|_F^2
+\sigma_z^2\|\F^\star\|_F^2,
\\
\Delta_{\rm fb}
&=
T\Prx-\|\A^\star\g^\star\|^2
\\
&\ge
T\Prx-\crx(\F^\star,\A^\star)
\\
&=
\tr\!\left(
\A^\star\Cov_w{\A^\star}^{\T}
\right).
\end{align*}

Using these definitions, the difference between the converse
and the SNR achieved by the algorithm can be expressed as
\begin{align}
\Gap(T)
&=
(\SNR_{\fw}+\SNR_{\fb})
-\SNR_{\mathrm{Alg.1}}
\nonumber\\
&=
\left(\frac1{\sigma_n^2}-\lambda_1^\star\right)T\Ptx
\nonumber\\
&\quad+
\left(\frac1{\sigma_z^2}-\lambda_2^\star\right)T\Prx
\nonumber\\
&\quad+
\lambda_1^\star\Delta_{\rm fw}
+\lambda_2^\star\Delta_{\rm fb}
\ge0.
\label{eq:gap_identity_appB}
\end{align}

As established above, $\Gap(T)=\Theta(1)$ and hence
$\Gap(T)/T\to0$.
By \eqref{eq:appG:residual-gap},
$\Delta_{\rm fw},\Delta_{\rm fb}=O(1)$.
Consequently,
\begin{align}
\|\g^\star\|^2
&=
T\Ptx-\Delta_{\rm fw}
=
T\Ptx+O(1),
\\
\|\A^\star\g^\star\|^2
&=
T\Prx-\Delta_{\rm fb}
=
T\Prx+O(1).
\label{eq:appG:message-power-asymptotics}
\end{align}
Since $\lambda_1^\star,\lambda_2^\star\ge0$,
\eqref{eq:appG:snr-multiplier-identity} then implies
$\lambda_1^\star,\lambda_2^\star=O(1)$.

Let
\[
\mathbf d^\star
\triangleq
\Cov_w^{-1}\g^\star.
\]
Recall the minimum-norm representation in
\eqref{eq:variational_rep_SNR}, used in the proof of
Lemma~\ref{lem:decoupled-inverse-covariance-bound}.
The feasible pair chosen there is
\[
\mathbf a_0=\frac{\g^\star}{\sigma_n},
\qquad
\mathbf b_0=-\frac{\A^\star\g^\star}{\sigma_z}.
\]
The corresponding minimum-norm pair is
\[
\begin{aligned}
\mathbf a_{\mathrm{min}}
&=\sigma_n(\I+\F^\star\A^\star)^{\T}\mathbf d^\star,\\
\mathbf b_{\mathrm{min}}
&=\sigma_z{\F^\star}^{\T}\mathbf d^\star.
\end{aligned}
\]
Their stacked difference lies in the nullspace of the
constraint matrix in \eqref{eq:variational_rep_SNR}, whereas
the stacked minimum-norm solution lies in its row space.
Thus, by orthogonality,
\[
\begin{aligned}
&\|\mathbf a_{\mathrm{min}}-\mathbf a_0\|^2
+\|\mathbf b_{\mathrm{min}}-\mathbf b_0\|^2
\\
&\quad=
\|\mathbf a_0\|^2+\|\mathbf b_0\|^2
-\|\mathbf a_{\mathrm{min}}\|^2-\|\mathbf b_{\mathrm{min}}\|^2
\\
&\quad=
\frac{\|\g^\star\|^2}{\sigma_n^2}
+\frac{\|\A^\star\g^\star\|^2}{\sigma_z^2}
-\SNR_{\mathrm{Alg.1}}.
\end{aligned}
\]
In particular, the second-block difference is
\[
\boxed{
\mathbf b_{\mathrm{min}}-\mathbf b_0
=\frac{\A^\star\g^\star}{\sigma_z}
+\sigma_z{\F^\star}^{\T}\mathbf d^\star
}.
\]
Retaining only its squared norm and applying the power
constraints yields
\begin{align}
&\left\|
\frac{\A^\star\g^\star}{\sigma_z}
+\sigma_z{\F^\star}^{\T}\mathbf d^\star
\right\|^2
\nonumber\\
&\quad\le
\frac{\|\g^\star\|^2}{\sigma_n^2}
+\frac{\|\A^\star\g^\star\|^2}{\sigma_z^2}
-\SNR_{\mathrm{Alg.1}}
\nonumber\\
&\quad\le
\Gap(T)
=
O(1).
\label{eq:appG:feedback-residual-bound}
\end{align}

The first inequality follows from the orthogonal decomposition
relative to the minimum-norm solution in
\eqref{eq:variational_rep_SNR}.
Together with \eqref{eq:appG:message-power-asymptotics},
this yields
\begin{equation}
\sigma_z^2
\|{\F^\star}^{\T}\mathbf d^\star\|^2
=
\frac{T\Prx}{\sigma_z^2}
+O(\sqrt T).
\label{eq:appG:feedback-noise-energy}
\end{equation}

We now use projected stationarity along the causal scaling
\[
(\F,\A)
=
(s^{-1}\F^\star,s\A^\star).
\]
Let $\mathcal L(s)$ denote the Lagrangian along this scaling,
with $\g^\star$ and $(\lambda_1^\star,\lambda_2^\star)$ held fixed.
Since the scaling preserves causality, projected stationarity implies
\[
\left.\frac{d\mathcal L(s)}{ds}\right|_{s=1}=0.
\]
Using the invariance of $\F\A$ along this scaling,
evaluating the derivative and applying complementary slackness yields
\begin{align}
\lambda_2^\star T\Prx
&=
\sigma_z^2
\|{\F^\star}^{\T}\mathbf d^\star\|^2
\nonumber\\
&\quad+
\lambda_1^\star\sigma_z^2\|\F^\star\|_F^2
\nonumber\\
&\quad+
\lambda_2^\star\sigma_z^2
\|\A^\star\F^\star\|_F^2.
\label{eq:appG:scaling-stationarity}
\end{align}
Here
\[
\sigma_z^2\|\F^\star\|_F^2
\le
\Delta_{\rm fw}
=
O(1),
\]
and
\[
\sigma_z^2\|\A^\star\F^\star\|_F^2
\le
\tr\!\left(
\A^\star\Cov_w{\A^\star}^{\T}
\right)
\le
\Delta_{\rm fb}
=
O(1).
\]
Therefore, \eqref{eq:appG:feedback-noise-energy} and
\eqref{eq:appG:scaling-stationarity} imply
\[
\lambda_2^\star T\Prx
=
\frac{T\Prx}{\sigma_z^2}
+O(\sqrt T),
\]
so $\lambda_2^\star\to1/\sigma_z^2$.

Substituting this result into
\eqref{eq:appG:snr-multiplier-identity}, and using
\eqref{eq:appG:message-power-asymptotics} together with
$\Gap(T)=O(1)$, gives
$\lambda_1^\star\to1/\sigma_n^2$.
Consequently,
\begin{equation}
\boxed{
\lambda_1^\star \longrightarrow \frac{1}{\sigma_n^2},
\qquad
\lambda_2^\star \longrightarrow \frac{1}{\sigma_z^2}
\quad \text{as } T\to\infty
},
\end{equation}
which completes the proof. \qed

\end{document}